\documentclass{revtex4}
\usepackage{amsmath,amsthm,amssymb}
\usepackage{rotating}
\usepackage{hhline}
\usepackage{graphicx}

\usepackage{amsfonts}
\newtheorem{theorem}{Theorem}[section]
\newtheorem{lemma}[theorem]{Lemma}
\newtheorem{proposition}[theorem]{Proposition}

\theoremstyle{remark}
\newtheorem{remark}[theorem]{Remark}

\theoremstyle{definition}
\newtheorem{definition}[theorem]{Definition}

\theoremstyle{example}
\newtheorem{example}[theorem]{Example}

\theoremstyle{notation}

\newcommand{\bra}[1]{\langle#1|}
\newcommand{\ket}[1]{|#1\rangle}
\usepackage{xcolor}
\begin{document}

\title{ The Heisenberg-Weyl-parity group its coherent states and a unified Wigner-Weyl function}            
\author{A. Vourdas}
\affiliation{Department of Computer Science,\\
University of Bradford, \\
Bradford BD7 1DP, United Kingdom\\a.vourdas@bradford.ac.uk}

\begin{abstract}
The Heisenberg-Weyl group $HW(d)$ related to a $d$-dimensional Hilbert space $H(d)$, is enlarged into the Heisenberg-Weyl-parity group $HWP(d)$ that incorporates parity transformations.
It consists of $2d^3$ elements, of which $d^3$ elements belong to the $HW(d)$ subgroup, and extra $d^3$ elements  which are related through a Fourier transform with the former ones. 
It is shown that $HWP(d)$ is a generalised version of the dihedral group. The properties of operators that combine displacements and parity, are discussed. 
$HWP(d)$ is shown to be a solvable group, and commutators of its elements  perform displacement and parity transformations of quantum states, along loops in the discrete phase space.
$2d^2$ coherent states related to the $HWP(d)$ group are introduced, which consist of $d^2$ coherent states related to the $HW(d)$ subgroup, and extra $d^2$ coherent states which are 
related through a Fourier transform with the former ones. In noisy cases, expansion of an arbitrary state in terms of the $2d^2$ coherent states with Bargmann coefficients, is advantageous in comparison to expansion in terms of the $d^2$ coherent states related to $HW(d)$. One of the consequences of the $HWP(d)$ group, is a natural unification of the Wigner and Weyl functions. 
The properties of the unified Wigner-Weyl function are discussed.

\end{abstract}
\maketitle

\section{Introduction}
Phase space methods for systems with infinite-dimensional Hilbert space have been studied extensively in the literature (e.g. \cite{S, Z}). 
Applications to quantum tomography are reviewed in\cite {TT}.
Related work in the general area of Harmonic analysis is reviewed in \cite{10}, and in wavelets for  signal analysis in \cite{signal}.

In the present paper we are interested in
phase space methods for systems with finite ($d$-dimensional) Hilbert space $H(d)$. They  have been studied originally by Weyl and Schwinger and later by many authors 
(various aspects of this area are reviewed in Refs\cite{1,2,2A} and in references therein).
They involve topics like finite Fourier transforms and the Heisenberg-Weyl group $HW(d)$ (\cite{3,4}), Wigner and Weyl functions (\cite{5,6,6AA,6A,6B,6C,6D,6E,6F,6G}), 
symplectic transformations \cite{ST1, ST2, ST3}, tomograms \cite{MA1,MA2, MA3}, coherent states\cite{7,8,9,9A,9B}, etc.
An important tool for Wigner functions (in both finite and infinite-dimensional Hilbert space) and more generally for phase space methods, are the parity transformations (\cite{11A,11,12,13} ).

In this general context of quantum systems with finite-dimensional Hilbert space, in the present paper we discuss the following: 
\begin{itemize}
\item
We enlarge the Heisenberg-Weyl group $HW(d)$, into the `Heisenberg-Weyl-parity' group $HWP(d)$ 
that incorporates parity transformations. 
The displacement operators (Eq.(\ref{disA})) and the displaced parity operators (Eq.(\ref{201})) are unified within the $HWP(d)$ group. 
The Heisenberg-Weyl group $HW(d)$ is a subgroup of $HWP(d)$, and we show that the extra elements in $HWP(d)$ (the displaced parity operators) 
are related through a Fourier transform with the displacement operators which are the elements of $HW(d)$.
A mathematical tool that brings extra symmetries to an existing group is the semidirect product. In the present context the 
$HWP(d)$ is the semi-direct product of the Heisenberg-Weyl group $HW(d)$ times ${\mathbb Z}(2)$ which is related to parity.

\item 
$HWP(d)$ is shown to be solvable group, 
and the physical importance of its solvability is discussed. Solvable groups\cite{BOU} are `mildly non-Abelian', and their study uses commutators which in the present context
perform displacements of quantum states along loops in the discrete phase space. 
For a solvable group with solvability class $n$,  there are $(n-1)$ levels of non-commutativity, each of which is described with loops related to commutators.
The solvability class of $HWP(d)$ is $n=3$.

\item
Acting with the $2d^2$ elements of the $HWP(d)$ group on a fiducial vector, we get $2d^2$ coherent states related to the $HWP(d)$ group. They consist of $d^2$ coherent states related to the $HW(d)$ subgroup, and extra $d^2$ coherent states which are 
related through a Fourier transform with the former ones. Using the resolution of the identity, we expand an arbitrary state in terms of coherent states with Bargmann coefficients.
\item
Two expansions of an arbitrary state are studied, in terms of the $2d^2$ coherent states related to the $HWP(d)$ group, and also in terms of the $d^2$ coherent states related to the $HW(d)$ group.
In noisy problems, the former expansion is more immune to noise, than the latter expansion.
This is because the redundancy (the number of coherent states is larger than the dimension of the space) is larger with $2d^2$ coherent states than with $d^2$ coherent states.
\item
Wigner and Weyl functions are intimately related to displaced parity operators and displacement operators, correspondingly.
The unification of the displacement operators and the displaced parity operators into the $HWP(d)$ group, leads naturally into the unification of Wigner and Weyl functions. 
The properties of the unified Wigner-Weyl function are discussed.

\end{itemize}

We next describe the logical chain that leads to these results. Sections 2-4 present preparatory material, and sections 5-7 the new results, as follows.

In section 2, we introduce briefly quantum systems with $d$-dimensional Hilbert space, in order to define the notation.
We then discuss the solvability of the Heisenberg-Weyl group $HW(d)$ (proposition \ref{pro3}), and its physical importance.
We note that the Heisenberg-Weyl group $HW(d)$ is in fact nilpotent (which is stronger concept than solvability).
Also the solvability of $HW(d)$ with odd $d$ (the case studied here), follows immediately from the Feit-Thompson theorem. 
However, our discussion of the solvability of $HW(d)$, is a first step towards the solvability of the $HWP(d)$ group (discussed in section 5).
We also introduce $d^2$ coherent states related to the $d^2$ elements of the $HW(d)$ group, 
expansions of an arbitrary state in terms of them with Bargmann coefficients, and the $Q$ (or Husimi) function in this context (proposition \ref{pro32}).

In section 3, we introduce parity operators and discuss briefly the physical properties of the displaced parity operators.
We also define Wigner and Weyl functions in Eq.(\ref{WIG}), discuss their marginal properties and their relationship through a Fourier transform.

In section 4, we consider the dihedral group and its representation in terms of parity transformations and displacements in the momentum direction only (or in the position direction only).
This is preparatory material for the $HWP(d)$ group, that combines parity transformations with displacements in both position and momentum.

In section 5, we discuss the `Heisenberg-Weyl-parity' group $HWP(d)$ (proposition \ref{pro123}), the properties of its elements (proposition \ref{pro1}), and 
the physical importance of its solvability (proposition \ref{pro45}).
We also introduce $2d^2$ coherent states related to the $2d^2$ elements of the $HWP(d)$ group (proposition \ref{pro46}), 
expansions of an arbitrary state in terms of them, and the $Q$ (or Husimi) function in this context. 

In section 6, coherent states are used in noisy problems. Due to redundancy (the number of coherent states is greater than the dimension of the Hilbert space) 
expansions in terms of coherent states are immune to noise. Examples show that the use  of $2d^2$ coherent states related to the $HWP(d)$ group
is more immune to noise,  than the use  of $d^2$ coherent states related to the $HW(d)$ group (because in the former case the redundancy is greater).

In section 7,  using the elements of the $HWP(d)$ group, we get naturally a unified Wigner-Weyl function. We study its properties (proposition \ref{pro7}) and present numerical results for some examples.

We conclude in section 8 with a discussion of our results and comments about further work.

\section{The Heisenberg-Weyl group as a solvable group}
\subsection{Quantum systems with variables in ${\mathbb Z}(d)$ with odd $d$}\label{sec19}

We  consider a quantum system with variables in  the ring ${\mathbb Z}(d)$ of integers modulo $d$. 
$H(d)$ is the $d$-dimensional Hilbert space describing this system. Such systems have been studied extensively in the literature for a long time (for a review see \cite{1,2,2A}).
There are well known technical differences related to number theory, between quantum systems with odd dimension $d$ and even dimension $d$\cite{WQ1}.
Unifying approaches for these two cases and for the continuous variables case, have been discussed in \cite{WQ2,WQ3,WQ4}.
In this paper we consider systems with odd dimension $d$, and then
the inverse of $2$ exists as an integer in ${\mathbb Z}(d)$ ($2^{-1}=\frac{d+1}{2}$). Its existence is very important in many relations below.

Let $|X;j\rangle$ where $j\in {\mathbb Z}(d)$ be an orthonormal basis in $H(d)$. $X$ in the notation is not a variable, it simply indicates `position states'.
The finite Fourier transform $F$ is given by
\begin{eqnarray}\label{FF}
&&F=\frac{1}{\sqrt{d}}\sum _{j,k}\omega(jk) \ket{X;j}\bra{X;k};\;\;\;\omega(\alpha)=\exp \left (i\frac{2\pi\alpha}{d}\right)\;\;\;\alpha,j,k\in{\mathbb Z}(d)\nonumber\\
&&F^4={\bf 1};\;\;\;FF^{\dagger}={\bf 1};\;\;\;\frac{1}{d}\sum _a\omega(\alpha)=\delta_{\alpha,0}.
\end{eqnarray}
We act with $F$ on position states and get the basis of `momentum states':
\begin{eqnarray}
&&\ket{P;j}=F\ket{X;j}.
\end{eqnarray}
$P$ in the notation is not a variable, it simply indicates `momentum states'.

We can define position and momentum operators as
\begin{eqnarray}\label{GG}
{\widehat x}=\sum j\ket{X;j}\bra{X;j};\;\;\;{\widehat p}=\sum j\ket{P;j}\bra{P;j};\;\;\;j=-\frac{d-1}{2},...,\frac{d-1}{2}.
\end{eqnarray}
In the present context position and momentum are angular quantities,  and like all angular quantities ${\widehat x}$, ${\widehat p}$ are multivalued.
But exponentials of angular quantities times $i$ are single-valued, and below we introduce and work mainly with such single-valued quantities. 

We introduce the displacement operators 
\begin{eqnarray}\label{A}
&&X^\beta = \omega(-\beta {\widehat p})=\sum _j\omega (-j\beta)|P; j \rangle \bra{P; j }=\sum _j\ket{X; j+\beta}\bra{X;j};\nonumber\\
&&Z^\alpha =\omega(\alpha {\widehat x})=\sum _j|P; j+\alpha \rangle \bra{P;j}= \sum _j\omega (\alpha j)|X; j\rangle\bra{X;j}=F X^\alpha F^\dagger;\nonumber\\
&&X^{d}=Z^{d}={\bf 1};\;\;\;X^\beta Z^\alpha= Z^\alpha X^\beta \omega(-\alpha \beta);\;\;\;\alpha, \beta\in {\mathbb Z}(d).
\end{eqnarray}
$Z,X$ are displacement operators in the (discrete) momentum direction and (discrete) position direction, respectively.

General displacement operators are the unitary operators
\begin{eqnarray}\label{disA}
D(\alpha, \beta, \gamma)=Z^\alpha X^\beta \omega (\gamma-2^{-1}\alpha \beta);\;\;\;[D(\alpha, \beta, \gamma)]^\dagger=D(-\alpha, -\beta, -\gamma);\;\;\;[D(\alpha, \beta, \gamma)]^d={\bf 1}.
\end{eqnarray}
The $D(\alpha, \beta, \gamma)$ form a representation of the 
Heisenberg-Weyl group $HW(d)$, which has $d^3$ elements, with multiplication:
\begin{eqnarray}\label{199F}
D(\alpha_1, \beta_1,\gamma_1)D(\alpha_2, \beta_2,\gamma_2)=D[\alpha_1+\alpha_2, \beta_1+\beta_2,\gamma_1+\gamma_2+2^{-1}(\alpha_1\beta_2-\alpha_2\beta_1)].
\end{eqnarray}
It is easily seen that
\begin{eqnarray}\label{122}
&&D(\alpha, \beta, \gamma)\ket{X;j}=\omega(2^{-1}\alpha\beta+\alpha j+\gamma)\ket{X;j+\beta}\nonumber\\
&&D(\alpha, \beta, \gamma)\ket{P;j}=\omega(-2^{-1}\alpha\beta-\beta j+\gamma)\ket{P;j+\alpha},
\end{eqnarray}
and that
\begin{eqnarray}\label{1200}
D(\alpha, \beta, \gamma){\widehat x}[D(\alpha, \beta, \gamma)]^\dagger={\widehat x}-\beta {\bf 1};\;\;\;
D(\alpha, \beta, \gamma){\widehat p}[D(\alpha, \beta, \gamma)]^\dagger={\widehat p}-\alpha {\bf 1}.
\end{eqnarray}
Physically we can implement stroboscopically displacement transformations, with a Hamiltonian that is the principal logarithm of $D(\alpha, \beta, \gamma)$:
\begin{eqnarray}\label{HAM}
h=\frac{d}{2\pi i}\log D(\alpha, \beta, \gamma)=\frac{d}{2\pi i}\log D(\alpha, \beta, 0)+\gamma {\bf 1}.
\end{eqnarray}
Then stroboscopically we get the time evolution operator
\begin{eqnarray}\label{10A}
\exp(iht)=D(\alpha, \beta, \gamma);\;\;\;t=\frac{2\pi }{d}+2\pi N;\;\;\;N=0,1,...
\end{eqnarray}

For later use we introduce the following three subgroups of $HW(d)$:
\begin{eqnarray}\label{sub}
&&{\cal G}_d(Z)=\{Z^a|a\in {\mathbb Z}(d)\}\cong {\mathbb Z}(d)\nonumber\\
&&{\cal G}_d(X)=\{X^a|a\in {\mathbb Z}(d)\}\cong {\mathbb Z}(d)\nonumber\\
&&{\cal G}_d({\bf 1})=\{\omega(a){\bf 1}|a\in {\mathbb Z}(d)\}\cong {\mathbb Z}(d).
\end{eqnarray}
They are all isomorphic to ${\mathbb Z}(d)$.

\subsection{The physical importance of the solvability of a group}

The commutator of two elements $g,h$ of a group $G$ is the element $[g,h]=ghg^{-1}h^{-1}$.
If $g$ and $h$ commute, then $[g,h]={\bf 1}$. 
We can prove that
 \begin{eqnarray}\label{TYU}
&&[g,h]^{-1}=[h,g];\;\;\;[g,h]hg=gh\nonumber\\
&&[f,gh][g,hf][h,fg]={\bf 1}.
\end{eqnarray}
All the commutators of elements of $G$ generate a normal subgroup of $G$, which is called commutator or derived group of $G$ and is denoted as $[G,G]$.
The commutator group $[G,G]$ is a measure of the non-Abelian nature of $G$ (for an Abelian group $[G,G]=\{{\bf 1}\}$).

 We define the following normal subgroups of $G$ ($\triangleright$ indicates normal subgroup):
 \begin{eqnarray}
G;\;\;\;G_1=[G,G];\;\;\;G_2=[G_1,G_1];\;\;\;G_3=[G_2,G_2]....
\end{eqnarray}
The derived series of $G$ is
\begin{eqnarray}
G\triangleright G_1\triangleright  G_2\triangleright G_3...
\end{eqnarray}
If for some $n$ the derived series terminates in $G_n=\{{\bf 1}\}$, then the group is called solvable.
The number $n$ is called solvability class. 

For any group $G$ the $G/[G,G]$ is Abelian, and therefore the following groups are Abelian
 \begin{eqnarray}\label{100}
G/G_1;\;\;\;G_1/G_2;\;\;\;G_2/G_3....
\end{eqnarray}
They can be regarded as generalised phase-spaces related to various variables (in our case position, momentum and parity).
The commutators ${\cal L}(g,h)=[g,h]=ghg^{-1}h^{-1}$ are elements of the normal subgroup $G_1=[G,G]$ of $G$, and describe group action (in our case displacements and parity transformations) 
on quantum states, along loops (defined by $g,h$) in these phase spaces. They play an important role in the present paper. Then:
\begin{itemize}
\item
If the solvability class of $G$ is $n=1$, then $G$ is Abelian.

\item
If the solvability class of $G$ is $n=2$, then $G$ is non-Abelian. 
We will see below that this is the case with the $HW(d)$ group (that involves quantum non-commutativity between position and momentum), and also with the 
dihedral groups $\Delta_d(Z)$, $\Delta_d(X)$ (that involve non-commutativity between parity and one of the displacement operators $X,Z$). In this case 
the commutators ${\cal L}(g,h)=[g,h]$ are elements of the Abelian group $G_1=[G,G]$. 
\item
If the solvability class of $G$ is $n=3$,  there are two different levels of non-commutativity. 
We will see below that the $HWP(d)$ group that we introduce in this paper, has both the quantum non-commutativity between position and momentum,
and also the non-commutativity between parity and the displacement operators.
In this case $G_1=[G,G]$ is a non-Abelian group, and $G_2=[G_1,G_1]$ is an Abelian group.
Then the commutators ${\cal L}^{(1)}(g,h)=[g,h]$ are elements of the non-Abelian group $G_1=[G,G]$, and for this reason
we also introduce the `commutators of the commutators' 
\begin{eqnarray}\label{160}
{\cal L}^{(2)}(g_1,h_1|g_2,h_2)&=&[{\cal L}^{(1)}(g_1,h_1),{\cal L}^{(1)}(g_2,h_2)]={\cal L}^{(1)}(g_1,h_1){\cal L}^{(1)}(g_2,h_2)[{\cal L}^{(1)}(g_1,h_1)]^{-1}[{\cal L}^{(1)}(g_2,h_2)]^{-1}\nonumber\\
&=&{\cal L}^{(1)}(g_1,h_1){\cal L}^{(1)}(g_2,h_2){\cal L}^{(1)}(h_1,g_1){\cal L}^{(1)}(h_2,g_2)\nonumber\\
&=&(g_1h_1g_1^{-1}h_1^{-1})(g_2h_2g_2^{-1}h_2^{-1})(h_1g_1h_1^{-1}g_1^{-1})(h_2g_2h_2^{-1}g_2^{-1}),
\end{eqnarray}
which are elements of the Abelian group $G_2=[G_1,G_1]$.
\item
We do not have groups with solvability class $n\ge 4$ in this paper, but in this case there are $(n-1)$ levels of non-commutativity, and we can introduce commutators
${\cal L}^{(1)},...,{\cal L}^{(n-1)}$.

\end{itemize}

In the next subsection we give details of this for the $HW(d)$ group, and later for the $\Delta  _d(Z)$ and $HWP(d)$ groups.

A stronger concept than solvability, is nilpotency.
 For nilpotency, we check that the following lower central series of a group $G$ terminates in $\{{\bf 1}\}$:
 \begin{eqnarray}
 G;\;\;\;[G,G];\;\;\;[G,[G,G]];\;\;\;[G,[G,[G,G]]]; ...
  \end{eqnarray} 
We will see below that $HW(d)$ is both nilpotent and solvable group, but $HWP(d)$ which is the main theme of this paper is solvable but not nilpotent.
For this reason we discuss solvability but not nilpotency.

\subsection{$HW(d)$ as solvable group and commutators that perform displacements along loops in phase space}\label{sec1}
\begin{lemma}
\mbox{}
\begin{itemize}
\item[(1)]
The commutator of two elements of the $HW(d)$ group is 
\begin{eqnarray}\label{199n}
&&{\cal L}_1(\alpha_1, \beta_1|\alpha_2, \beta_2)=[D(\alpha_1, \beta_1,\gamma_1),D(\alpha_2, \beta_2,\gamma_2)]
=\omega[{\cal A}(\alpha_1,\beta_1|\alpha_2,\beta_2)]{\bf 1}\nonumber\\
&&{\cal A}(\alpha_1,\beta_1|\alpha_2,\beta_2)=\alpha_1\beta_2-\alpha_2\beta_1\in {\mathbb Z}(d)
\end{eqnarray}
It does not depend on $\gamma_1, \gamma_2$. 
\item[(2)]
\begin{eqnarray}\label{200}
[D(\alpha_1, \beta_1,\gamma_1)D(A_1, B_1,\Gamma_1),D(\alpha_2, \beta_2,\gamma_2)]&=&[D(\alpha_1, \beta_1,\gamma_1),D(\alpha_2, \beta_2,\gamma_2)]\nonumber\\&\times&
[D(A_1, B_1,\Gamma_1),D(\alpha_2, \beta_2,\gamma_2)]
\end{eqnarray}
Consequently
\begin{eqnarray}\label{199}
{\cal A}(\alpha_1+A_1, \beta_1+B_1|\alpha_2, \beta_2)={\cal A}(\alpha_1, \beta_1|\alpha_2, \beta_2)+{\cal A}(A_1, B_1|\alpha_2, \beta_2).
\end{eqnarray}
\end{itemize}
\end{lemma}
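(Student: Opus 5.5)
We would prove part (1) directly from the multiplication law Eq.(\ref{199F}) together with the inverse formula $[D(\alpha,\beta,\gamma)]^{-1}=[D(\alpha,\beta,\gamma)]^\dagger=D(-\alpha,-\beta,-\gamma)$ from Eq.(\ref{disA}). The plan is to multiply out $D_1D_2D_1^{-1}D_2^{-1}$, where $D_i=D(\alpha_i,\beta_i,\gamma_i)$, in two stages.

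First, Eq.(\ref{199F}) gives
\begin{eqnarray}
D_1D_2=D[\alpha_1+\alpha_2,\beta_1+\beta_2,\gamma_1+\gamma_2+2^{-1}{\cal A}(\alpha_1,\beta_1|\alpha_2,\beta_2)].
\end{eqnarray}
The same rule gives
\begin{eqnarray}
D_1^{-1}D_2^{-1}=D[-\alpha_1-\alpha_2,-\beta_1-\beta_2,-\gamma_1-\gamma_2+2^{-1}{\cal A}(\alpha_1,\beta_1|\alpha_2,\beta_2)],
\end{eqnarray}
since the symplectic term is bilinear and therefore even under a sign flip of both arguments. Multiplying these two elements once more, the first two arguments cancel. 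The symplectic term ${\cal A}$ of a pair $(\alpha,\beta)$ and $(-\alpha,-\beta)$ vanishes, so only the third arguments add, giving $D(0,0,{\cal A})$. Since $D(0,0,\gamma)=\omega(\gamma){\bf 1}$ by Eq.(\ref{disA}), this is exactly $\omega({\cal A}){\bf 1}$. The $\gamma_i$ cancel, which proves the independence from $\gamma_1,\gamma_2$.

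An alternative route that avoids the factor $2^{-1}$ entirely is to write $D_i$ as $Z^{\alpha_i}X^{\beta_i}$ times a phase. The phases cancel in the commutator, and repeated use of $X^\beta Z^\alpha=Z^\alpha X^\beta\omega(-\alpha\beta)$ from Eq.(\ref{A}) reorders the four factors. I would use this as a cross-check on the sign convention of ${\cal A}$. Checking that sign is the only real point requiring care.

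For part (2), we would use the general group identity $[gh,k]=g[h,k]g^{-1}[g,k]$, which holds in any group. Here $g=D_1$, $h=D(A_1,B_1,\Gamma_1)$ and $k=D_2$. By part (1), $[h,k]$ is a scalar multiple of ${\bf 1}$, so it is central. The conjugation by $g$ therefore drops out, and the two scalar commutators commute, giving Eq.(\ref{200}).

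Applying part (1) to both sides of Eq.(\ref{200}) then gives
\begin{eqnarray}
\omega[{\cal A}(\alpha_1+A_1,\beta_1+B_1|\alpha_2,\beta_2)]=\omega[{\cal A}(\alpha_1,\beta_1|\alpha_2,\beta_2)+{\cal A}(A_1,B_1|\alpha_2,\beta_2)].
\end{eqnarray}
Here we use that by Eq.(\ref{199F}) the product $D_1D(A_1,B_1,\Gamma_1)$ has first two arguments $\alpha_1+A_1$, $\beta_1+B_1$, and that the commutator does not depend on the third argument. Since $\omega$ is injective on ${\mathbb Z}(d)$, Eq.(\ref{199}) follows. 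It can also be seen directly from the bilinearity of ${\cal A}$.
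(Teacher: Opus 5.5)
Your proof is correct. Part (1) is the same direct computation as the paper's, which simply records $D_1D_2D_1^\dagger D_2^\dagger=D(0,0,\alpha_1\beta_2-\alpha_2\beta_1)$. Your grouping $(D_1D_2)(D_1^{-1}D_2^{-1})$ supplies the intermediate steps, and the signs check out.

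Part (2) is where you take a genuinely different route. The paper proves Eq.(\ref{200}) "using Eq.(\ref{199F}) together with Eq.(\ref{199n})". That is, it writes $D(\alpha_1,\beta_1,\gamma_1)D(A_1,B_1,\Gamma_1)=D(\alpha_1+A_1,\beta_1+B_1,\cdot)$ and evaluates both sides with the explicit commutator formula. Matching the two sides at that point requires ${\cal A}$ to be additive in its first argument, which is Eq.(\ref{199}) itself. So in the paper's route Eq.(\ref{199}) is really an elementary input, and the "consequently" runs in the reverse direction.

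You instead combine the identity $[gh,k]=g[h,k]g^{-1}[g,k]$ with the centrality of $[h,k]$ from part (1). This establishes Eq.(\ref{200}) without using the explicit form of ${\cal A}$ at all; it holds in any group whose commutators are central. Eq.(\ref{199}) then genuinely follows, using independence from the third argument and injectivity of $\omega$ on ${\mathbb Z}(d)$.

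The paper's version is shorter once the explicit formula is in hand. Yours is more structural and makes the logical order stated in the lemma honest.
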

\begin{proof}
\mbox{}
\begin{itemize}
\item[(1)]
The commutator of two elements of the Heisenberg-Weyl group is 
\begin{eqnarray}
{\cal L}_1(\alpha_1, \beta_1|\alpha_2, \beta_2)&=&D(\alpha_1, \beta_1,\gamma_1)D(\alpha_2, \beta_2,\gamma_2)[D(\alpha_1, \beta_1, \gamma_1)]^\dagger[D(\alpha_2, \beta_2, \gamma_2)]^\dagger\nonumber\\
&=&D(0,0,\alpha_1\beta_2-\alpha_2\beta_1)=
\omega(\alpha_1\beta_2-\alpha_2\beta_1){\bf 1}.
\end{eqnarray}

\item[(2)]
Eq.(\ref{200}) is proved using Eq.(\ref{199F}) together with Eq.(\ref{199n}).
From this follows Eq.(\ref{199}).
\end{itemize}
\end{proof}
\begin{proposition}\label{pro3}
\mbox{}
\begin{itemize}
\item[(1)]

\begin{eqnarray}\label{13}
[HW(d),HW(d)]={\cal G}_d({\bf 1});\;\;\;[{\cal G}_d({\bf 1}),{\cal G}_d({\bf 1})]\cong\{{\bf 1}\}.
\end{eqnarray}
\item[(2)]
$HW(d)$ is a solvable group (of solvability class $2$) and  its derived series is
\begin{eqnarray}\label{101}
HW(d)\triangleright {\cal G}_d({\bf 1})\triangleright \{{\bf 1}\}.
\end{eqnarray}
Physically the ${\cal G}_d({\bf 1})$ is related here, with the quantum non-commutativity between $Z$ and $X$.
\item[(3)]
The Abelian groups in Eq.(\ref{100}) are here
\begin{eqnarray}\label{150}
HW(d)/{\cal G}_d({\bf 1})\cong {\mathbb Z}(d)\times {\mathbb Z}( d);\;\;\;{\cal G}_d({\bf 1})/\{{\bf 1}\}={\cal G}_d({\bf 1})\cong {\mathbb Z}(d).
\end{eqnarray}
Physically ${\mathbb Z}(d)\times {\mathbb Z}( d)$ is the position-momentum phase space of the quantum system (discretised torus) which we will denote as $PS(X,Z)$ (phase space for $X,Z$).
\item[(4)]
$HW(d)$ is also nilpotent group. 
\end{itemize}
\end{proposition}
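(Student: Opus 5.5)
The whole argument rests on the preceding lemma, which says that every commutator in $HW(d)$ is a scalar:
\begin{eqnarray*}
[D(\alpha_1,\beta_1,\gamma_1),D(\alpha_2,\beta_2,\gamma_2)]=\omega(\alpha_1\beta_2-\alpha_2\beta_1){\bf 1}.
\end{eqnarray*}

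For part (1), the lemma immediately gives $[HW(d),HW(d)]\subseteq{\cal G}_d({\bf 1})$. For the reverse inclusion we need every $\omega(a){\bf 1}$ to occur as a commutator. Take $(\alpha_1,\beta_1)=(a,0)$ and $(\alpha_2,\beta_2)=(0,1)$. Then ${\cal A}=a$, so $[Z^a,X]=\omega(a){\bf 1}$. Since ${\cal G}_d({\bf 1})$ is cyclic of order $d$, it is generated by these commutators, and the two groups are equal. The group ${\cal G}_d({\bf 1})$ consists of scalars, hence is Abelian, so its own commutator subgroup is $\{{\bf 1}\}$.

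For part (2), the derived series is read off directly: $G_1={\cal G}_d({\bf 1})$ and $G_2=\{{\bf 1}\}$. The class is exactly $2$ and not $1$, because $HW(d)$ is non-Abelian: $[Z,X]=\omega(1){\bf 1}\neq{\bf 1}$, using that $d$ is odd and so $d\geq3$.

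For part (3), use the homomorphism $\phi: HW(d)\to{\mathbb Z}(d)\times{\mathbb Z}(d)$, $D(\alpha,\beta,\gamma)\mapsto(\alpha,\beta)$. By the multiplication law Eq.(\ref{199F}), the first two labels simply add, so $\phi$ is a homomorphism. It is clearly surjective. Its kernel is $\{D(0,0,\gamma)\}=\{\omega(\gamma){\bf 1}\}={\cal G}_d({\bf 1})$. The first isomorphism theorem then gives $HW(d)/{\cal G}_d({\bf 1})\cong{\mathbb Z}(d)\times{\mathbb Z}(d)$. The second isomorphism in Eq.(\ref{150}) is trivial.

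For part (4), note that ${\cal G}_d({\bf 1})$ consists of scalar multiples of the identity, so it is central. Therefore $[HW(d),[HW(d),HW(d)]]=[HW(d),{\cal G}_d({\bf 1})]=\{{\bf 1}\}$, and the lower central series terminates after two steps.

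The only step needing any care is the surjectivity in part (1), namely showing that every scalar arises as a commutator, and the explicit choice $[Z^a,X]$ settles it. There is also a small point to check in the representation: the labels $(\alpha,\beta,\gamma)\in{\mathbb Z}(d)^3$ must correspond faithfully to operators, so that the $d^3$ elements are distinct. This holds because the operators $Z^\alpha X^\beta$ are linearly independent.
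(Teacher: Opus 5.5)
Your proof is correct and follows the same route as the paper. The lemma makes every commutator a scalar; ${\cal G}_d({\bf 1})$ is the kernel of $D(\alpha,\beta,\gamma)\mapsto(\alpha,\beta)$, which the paper phrases as cosets modulo the phase $\omega(\gamma)$; and ${\cal G}_d({\bf 1})$ is central, which gives nilpotency. Your explicit commutator $[Z^a,X]=\omega(a){\bf 1}$ and the faithfulness check via linear independence of the $Z^\alpha X^\beta$ fill in details the paper leaves implicit.

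One small quibble: oddness of $d$ does not by itself give $d\ge 3$ (e.g.\ $d=1$). So ``class exactly $2$'' rests on the paper's standing assumption $d\ge 3$, not on $d$ being odd.
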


\begin{proof}
\mbox{}
\begin{itemize}
\item[(1)]
 The commutators
${\cal L}_1(\alpha_1, \beta_1|\alpha_2, \beta_2)$ in Eq.(\ref{199n}) generate ${\cal G}_d({\bf 1})$, which is the commutator subgroup (or derived subgroup) of $HW(d)$. This proves the first of Eqs(\ref{13}).
${\cal G}_d({\bf 1})$ is an Abelian group, and therefore the commutator $[{\cal G}_d({\bf 1}),{\cal G}_d({\bf 1})] \cong\{{\bf 1}\}$.
\item[(2)]
From Eq.(\ref{13}), follows immediately that $HW(d)$ is a solvable group and  that its derived series is the one in Eq.(\ref{101}).
\item[(3)]
With regard to the Abelian groups in Eq.(\ref{100}), the first one is 
\begin{eqnarray}
HW(d)/[HW(d),HW(d)]\cong HW(d)/{\cal G}_d({\bf 1})\cong {\mathbb Z}(d)\times {\mathbb Z}( d)=PS(X,Z).
\end{eqnarray}
$HW(d)/{\cal G}_d({\bf 1})$ consists of cosets of displacement operators
$D(\alpha, \beta, \gamma)$ defined modulo a phase factor $\omega(\gamma)$.
$HW(d)/{\cal G}_d({\bf 1})$ is isomorphic to the phase space group $PS(X,Z)={\mathbb Z}(d)\times {\mathbb Z}( d)$ which is an additive Abelian  group, with componentwise addition 
\begin{eqnarray}\label{140} 
(\alpha, \beta)+(\gamma, \delta)=(\alpha+\gamma, \beta+\delta);\;\;\;\alpha, \beta, \gamma,\delta \in {\mathbb Z}(d).
\end{eqnarray}

The other relation ${\cal G}_d({\bf 1})/\{{\bf 1}\}={\cal G}_d({\bf 1})\cong {\mathbb Z}(d)$ is straightforward.
\item[(4)]

The lower central series of $HW(d)$ terminates in $\{\bf 1\}$:
\begin{eqnarray}
[HW(d),HW(d)]={\cal G}_d({\bf 1});\;\;\;[HW(d), [HW(d),HW(d)]]=[HW(d),{\cal G}_d({\bf 1})]\cong\{{\bf 1}\}.
\end{eqnarray}
Therefore $HW(d)$ is nilpotent group.

\end{itemize}

\end{proof}

The physical importance of solvability,  lies in the fact that the commutators $[g,h]=ghg^{-1}h^{-1}$ (which are elements of $[HW(d),HW(d)]={\cal G}_d({\bf 1})$) perform displacements of quantum states along loops in
the $PS(X,Z)$ phase space.
The ${\cal A}(\alpha_1,\beta_1|\alpha_2,\beta_2)$ in Eq.(\ref{199n}), is  area related to the magnitude of the exterior product of the vectors $(\alpha_1,\beta_1)$ and $(\alpha_2,\beta_2)$, that define the loop.
Acting with ${\cal L}_1(\alpha_1, \beta_1|\alpha_2, \beta_2) $ on a state $\ket{f}$, 
displaces this state along a loop  in the $PS( X,Z)$ phase space, and it will change its phase (but not its direction)
\begin{eqnarray}\label{14}
\ket{f}\rightarrow {\cal L}_1(\alpha_1, \beta_1|\alpha_2, \beta_2)\ket{f}=\omega[{\cal A}(\alpha_1,\beta_1|\alpha_2,\beta_2)]\ket{f}.
\end{eqnarray}
Then
\begin{eqnarray}\label{loop}
|\bra{f}{\cal L}_1(\alpha_1, \beta_1|\alpha_2, \beta_2)\ket{f}|=1.
\end{eqnarray}
This should be compared with Eqs.(\ref{341}),(\ref{loop1}) below.

We note that the phase ${\cal A}(\alpha_1, \beta_1|\alpha_2, \beta_2)$ might be only of dynamical nature, if the displacements are due to a time evolution with appropriate Hamiltonian.
But it might also include a Berry (geometric) phase 
if the system evolves with a Hamiltonian that depends on parameters that change slowly (adiabatically) around a closed loop in parameter space.

\begin{remark}\label{rem1}
The methodology used in this paper, is that of solvable groups and semi-direct products in a physical context.
Another approach to $HW(d)$ is to study central extensions of the Abelian group $HW(d)/{\mathbb Z}(d)\cong  {\mathbb Z}(d)\times {\mathbb Z}(d)$ by ${\mathbb Z}(d)$, and the two-cohomology group
$H^2[HW(d)/{\mathbb Z}(d),{\mathbb Z}(d)]$(e.g., \cite{VOUR}). In further work, this approach could be generalised to 
the more complex group $HWP(d)$ where apart from the quantum non-commutativity between position and momentum, we also have non-commutativity between parity and position and momentum.

\end{remark}

\begin{remark}
For odd $d$ the fact that $HW(d)$ is solvable, follows immediately from the general result that all finite groups of odd order are solvable (Feit-Thompson theorem\cite{FT}).
But we discuss explicitly the solvability of $HW(d)$ (in proposition \ref{pro3}), because it is the first step towards the solvability of the $HWP(d)$ group (in proposition \ref{pro45}), and the latter group and its 
physical consequences  are the main topic of the present paper. We will see that $HWP(d)$ is solvable but not nilpotent, and that it is of even order (so the Feit-Thompson theorem is not applicable to it).
\end{remark}

 \subsection{Coherent states related to the $HW(d)$ group}\label{sec24}

Coherent states\cite{7,8,9,9A} related to the $HW(d)$ group are the following $d^2$ states in $H(d)$:
\begin{eqnarray}\label{coh1}
\ket{C; \alpha, \beta}=D(\alpha, \beta,0)\ket{s};\;\;\;\ket{s}=\sum _m s_m\ket{X;r};\;\;\;\sum _m|s_m|^2=1.
\end{eqnarray}
The $C$ in the notation $\ket{C; \alpha, \beta}$ is not a variable, but it indicates coherent states.
$\ket{s}$ is a fiducial vector (reference vector), and it should be {\bf generic} so that the coherent states are different from each other 
(it should not be a position or momentum state, because in this case many coherent states differ only by a phase factor and represent the same state).
We denote as ${\cal C}[HW(d)]$ the set of these $d^2$ states.

Some properties of the coherent states are given in the following proposition.
\begin{proposition}\label{pro32}
\mbox{}
\begin{itemize}
\item[(1)]
Closure under displacement transformations in $HW(d)$: acting with displacement transformations on coherent states, we get other coherent states.

\item[(2)]
The resolution of the identity
\begin{eqnarray}\label{TR7}
\frac {1}{d}\sum _{\alpha, \beta } \ket{C; \alpha, \beta} \bra{C;\alpha, \beta}={\bf 1}.
\end{eqnarray}

\item[(3)]
An arbitrary state $\ket{f}$ in $H(d)$, can be expanded in terms of the $d^2$ coherent states:
\begin{eqnarray}\label{A12}
\ket{f}=\frac{1}{d}\sum _{\alpha, \beta } f(\alpha, \beta)\ket{C; \alpha, \beta} ;\;\;\;f(\alpha, \beta)=\bra{C;\alpha, \beta} f \rangle.
\end{eqnarray}
The $f(\alpha, \beta)$ are Bargmann coefficients (but there is no analyticity in the present discrete context ).
The scalar product of two states $\ket{f}, \ket{g}$ is given in terms of $f(\alpha, \beta)$ and $g(\alpha, \beta)$ as
\begin{eqnarray}\label{B12}
\langle g\ket{f}=\frac{1}{d}\sum _{\alpha, \beta }[g(\alpha, \beta)]^* f(\alpha, \beta).
\end{eqnarray}

\item[(4)]
The
\begin{eqnarray}\label{A12C}
Q(\alpha, \beta)=|f(\alpha, \beta)|^2;\;\;\;\frac{1}{d}\sum _{\alpha, \beta } Q(\alpha, \beta)=1,
\end{eqnarray}
is the $Q$-function or Husimi function in the present context (and depends on the fiducial vector $\ket{s}$). It takes non-negative values and the $\frac{1}{d}Q(\alpha, \beta)$
can be interpreted as a pseudo-probability distribution.

\item[(5)]
Non-orthogonality:
\begin{eqnarray}
&&|\langle C;\gamma, \delta\ket{C;\alpha, \beta}|^2=\left |\sum _m \omega [(\alpha-\gamma)m]s_{m+\beta-\delta}^*s_m\right|^2.
\end{eqnarray}
\end{itemize}
\end{proposition}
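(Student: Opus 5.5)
The plan is to prove the five items in order, using only the multiplication law Eq.(\ref{199F}) and the action Eq.(\ref{122}) on position states. Everything will rest on two facts: the group law, and the identity $\frac{1}{d}\sum_\alpha\omega(\alpha k)=\delta_{k,0}$ from Eq.(\ref{FF}).

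For (1), I will take $D(\alpha',\beta',\gamma')$ and apply Eq.(\ref{199F}) to get
\[
D(\alpha',\beta',\gamma')\ket{C;\alpha,\beta}=D(\alpha+\alpha',\beta+\beta',\gamma'+2^{-1}(\alpha'\beta-\alpha\beta'))\ket{s}.
\]
Since $D(\alpha,\beta,\gamma)=\omega(\gamma)D(\alpha,\beta,0)$, this equals a phase times $\ket{C;\alpha+\alpha',\beta+\beta'}$, which is again a coherent state.

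Item (2) is the main step. I will write $\ket{s}=\sum_m s_m\ket{X;m}$; the ket $\ket{X;r}$ in Eq.(\ref{coh1}) is presumably a typo for $\ket{X;m}$. Using Eq.(\ref{122}) with $\gamma=0$,
\[
D(\alpha,\beta,0)\ket{s}=\sum_m s_m\,\omega(2^{-1}\alpha\beta+\alpha m)\ket{X;m+\beta}.
\]
Forming the projector, the phase $\omega(2^{-1}\alpha\beta)$ cancels against its conjugate, and one obtains
\[
\sum_{m,n}s_m s_n^*\,\omega(\alpha(m-n))\ket{X;m+\beta}\bra{X;n+\beta}.
\]
Summing over $\alpha$ gives $d\,\delta_{m,n}$. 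This leaves $d\sum_\beta\sum_m|s_m|^2\ket{X;m+\beta}\bra{X;m+\beta}$. Relabelling $j=m+\beta$, the sum over $\beta$ for fixed $m$ is $\sum_j\ket{X;j}\bra{X;j}={\bf 1}$. The remaining factor $\sum_m|s_m|^2=1$ then gives $d\cdot{\bf 1}$, and dividing by $d$ yields Eq.(\ref{TR7}). The only care needed is this order of summation: the $\alpha$-sum first produces the Kronecker delta, and the normalisation of $\ket{s}$ enters only at the end.

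Items (3) and (4) then follow immediately. For (3), I will insert Eq.(\ref{TR7}) before $\ket{f}$ to obtain Eq.(\ref{A12}), and sandwich it between $\bra{g}$ and $\ket{f}$ to obtain Eq.(\ref{B12}). For (4), I will set $g=f$ with $\langle f|f\rangle=1$ in Eq.(\ref{B12}), giving $\frac1d\sum Q=1$. Nonnegativity holds because $Q$ is a squared modulus, so $\frac1d Q$ is a pseudo-probability distribution.

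For (5), I will compute $\langle C;\gamma,\delta\ket{C;\alpha,\beta}$ using the same expansion for both states. The position-state overlap forces $n+\delta=m+\beta$, i.e.\ $n=m+\beta-\delta$. The $m$-dependent part of the phase is then $\omega(\alpha m)\,\omega(-\gamma(m+\beta-\delta))=\omega((\alpha-\gamma)m)$ times an $m$-independent phase. That constant phase disappears on taking the modulus squared, which gives the stated formula.
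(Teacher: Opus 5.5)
Your proposal is correct and follows essentially the same route as the paper. Items (1), (3) and (4) are argued identically, and for (2) the paper does the same position-basis computation, only for a general operator $\Theta$ with ${\rm Tr}(\Theta)\ne 0$ (Eq.(\ref{TR})), recovering Eq.(\ref{TR7}) by taking $\Theta=\ket{s}\bra{s}$; for (5) the paper first combines $[D(\gamma,\delta,0)]^\dagger D(\alpha,\beta,0)$ via Eq.(\ref{199F}) whereas you expand both states directly, a cosmetic difference, and your index shift $n=m+\beta-\delta$ and phase bookkeeping are right.
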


\begin{proof}
\begin{itemize}
\item[(1)]
Using Eq.(\ref{199F}) we prove the closure property:
\begin{eqnarray}
D(\alpha_1, \beta_1,\gamma_1)\ket{C; \alpha, \beta}=\omega[\gamma+2^{-1}(\alpha_1\beta-\alpha \beta_1)]\ket{C; \alpha+\alpha_1, \beta+\beta_1}.
\end{eqnarray}
\item[(2)]
The resolution of the identity in Eq.(\ref{TR7}) is a special case of the following more general relation, which holds for any operator $\Theta$ with ${\rm Tr}(\Theta)\ne 0$:
\begin{eqnarray}\label{TR}
\frac {1}{d}\sum _{\alpha, \beta } D(\alpha, \beta,0) \frac{\Theta}{{\rm Tr}(\Theta)} [D(\alpha, \beta,0)]^\dagger={\bf 1}.
\end{eqnarray}
In order to prove it, we express $\Theta$ as $\Theta=\sum \Theta_{mn}\ket{X;m}\bra{X;n}$ and then
\begin{eqnarray}
D(\alpha, \beta ,0) \Theta [D(\alpha, \beta ,0)]^\dagger=\sum_{m,n}\omega [\alpha(m-n)]\Theta_{mn}\ket{X;m+\beta}\bra{X;n+\beta}.
\end{eqnarray}
Taking into account Eq.(\ref{FF}) we prove that
\begin{eqnarray}
\frac {1}{d}\sum _{\alpha, \beta }D(\alpha, \beta ,0) \Theta [D(\alpha, \beta ,0)]^\dagger={\rm Tr}(\Theta){\bf 1}.
\end{eqnarray}
\item[(3)]
Using the resolution of the identity we prove both Eqs(\ref{A12}),(\ref{B12}).
\item[(4)]
The first part of Eq.(\ref{A12C}) is definition, and the second part follows from Eq.(\ref{B12}).
\item[(5)]
This is proved using Eq.(\ref{199F}).
\end{itemize}
\end{proof}

There is a lot of work on finding fiducial vectors (for various dimensions $d$) such that all the $|\langle C;\gamma, \delta\ket{C;\alpha, \beta}|^2$ 
with $(\gamma, \delta)\ne(\alpha, \beta)$, are equal to each other. In this case, using the resolution of the identity we show that
\begin{eqnarray}
&&|\langle C;\gamma, \delta\ket{C;\alpha, \beta}|^2=\frac{1}{d+1};\;\;\;(\gamma, \delta)\ne(\alpha, \beta).
\end{eqnarray}
This is called symmetric informationally complete positive operator-valued measures or SIC-POVM \cite{SIC1,SIC2,SIC3}). This problem is outside the scope of this paper.

Below we introduce $2d^2$ coherent states, so we have a more dense distribution of the states in Hilbert space.
From a practical point of view the use of  $2d^2$ coherent states (expansion in Eq.(\ref{A13}) below) is more immune in the presence of noise, than Eq.(\ref{A12}) which uses $d^2$ coherent states.
We discuss this in examples in section \ref{sec379}.

\section{The parity operator and Wigner and Weyl functions}

\subsection{The parity operator}

The parity operator around the origin in phase space\cite{11A,11,12,13}, is the unitary and Hermitian operator
\begin{eqnarray}
&&{\mathfrak P}=F^2;\;\;\;{\mathfrak P}^2={\bf 1};\;\;\;{\mathfrak P}={\mathfrak P}^\dagger;\;\;\;[{\mathfrak P},F]=0\\
&&{\mathfrak P}\ket{X;j}=\ket{X;-j};\;\;\;{\mathfrak P}\ket{P;j}=\ket{P;-j}.
\end{eqnarray}
The parity operator transforms the position and momentum operators ${\widehat x}$, ${\widehat p}$, as
\begin{eqnarray}
{\mathfrak P}{\widehat x}{\mathfrak P}=-{\widehat x};\;\;\;{\mathfrak P}{\widehat p}{\mathfrak P}=-{\widehat p}.
\end{eqnarray}
As we explained earlier we work with exponentials of these quantities which are single-valued and we get
\begin{eqnarray}
{\mathfrak P}X{\mathfrak P}=X^{-1};\;\;\;{\mathfrak P}Z{\mathfrak P}=Z^{-1}.
\end{eqnarray}
More generally
\begin{eqnarray}
{\mathfrak P}^\nu D(\alpha, \beta, \gamma){\mathfrak P}^\nu =D[(-1)^\nu\alpha, (-1)^\nu \beta,\gamma];\;\;\;\nu\in {\mathbb Z}(2).
\end{eqnarray}

We note that even in classical physics, displacements and the parity transformation do not commute. When displacement is applied first and parity transformation afterwards, we get
\begin{eqnarray}\label{C1}
x\xrightarrow {\rm displacement} x+a\xrightarrow {\rm parity}-(x+a)
\end{eqnarray}
When parity transformation is applied first and displacement afterwards, we get
\begin{eqnarray}\label{C2}
x\xrightarrow {\rm parity}-x\xrightarrow {\rm displacement}-x+a.
\end{eqnarray}

In the case of odd $d$, the eigenvalues of ${\mathfrak P}$ are $1$ (with multiplicity $\frac{d+1}{2}$) and $-1$ (with multiplicity $\frac{d-1}{2}$). The eigenvectors corresponding to the eigenvalue $1$ span a $\left (\frac{d+1}{2}\right )$-dimensional subspace of $H(d)$ which we denote as $H_0\left (\frac{d+1}{2}\right )$, and the 
eigenvectors corresponding to the eigenvalue $-1$ span a $\left (\frac{d-1}{2}\right )$-dimensional subspace of $H(d)$ which we denote as $H_1\left (\frac{d-1}{2}\right )$.
Then
\begin{eqnarray}
H(d)=H_0\left (\frac{d+1}{2}\right )\oplus H_1\left (\frac{d-1}{2}\right ).
\end{eqnarray}
Orthonormal bases in $H_0\left (\frac{d+1}{2}\right )$ and $H_1\left (\frac{d-1}{2}\right )$ are
\begin{eqnarray}
\left\{\ket {X;0}, \frac{1}{\sqrt{2}}(\ket{X;j}+\ket{X;-j})\;|\;j=1,...,\frac{d-1}{2}\right \};\;\;\;\left \{\frac{1}{\sqrt{2}}(\ket{X;j}-\ket{X;-j})\;|\;j=1,...,\frac{d-1}{2}\right \}.
\end{eqnarray}
We call $\varpi_0$, $\varpi_1$ the projectors to these subspaces.Then
\begin{eqnarray}\label{158}
\varpi_0-\varpi_1={\mathfrak P};\;\;\;\varpi_0+\varpi_1={\bf 1}.
\end{eqnarray}
If $\rho_0$ is a density matrix, the expectation value of the observable (Hermitian matrix) ${\mathfrak P}$ is
\begin{eqnarray}\label{158A}
<{\mathfrak P}>={\rm Tr}({\mathfrak P}\rho_0)\in [-1,1].
\end{eqnarray}

Parity is a symmetry of a  Hamiltonian $h$, when $h$ commutes with ${\mathfrak P}$:
\begin{eqnarray}
[{\mathfrak P},h]=0.
\end{eqnarray}
If the density matrix $\rho_0$ evolves in time with a Hamiltonian $h$ as $\rho(t)=\exp(iht)\rho_0\exp(-iht)$, then
\begin{eqnarray}
\frac{d<{\mathfrak P}>}{dt}=i{\rm Tr}([{\mathfrak P},h]\rho_0).
\end{eqnarray}
If parity is a symmetry of $h$, then $\frac{d<{\mathfrak P}>}{dt}=0$.

The displaced parity operator (i.e., parity operator around a point $(\alpha, \beta) \in PS(X,Z)$ ), is the unitary operator
\begin{eqnarray}\label{201}
&&{\mathfrak P}(\alpha, \beta)=D(\alpha, \beta,\gamma){\mathfrak P}[D(\alpha, \beta,\gamma)]^\dagger=D(2\alpha, 2\beta,0){\mathfrak P}={\mathfrak P}D(-2\alpha, -2\beta,0)\nonumber\\
&&[{\mathfrak P}(\alpha, \beta)]^2={\bf 1}.
\end{eqnarray}
Clearly ${\mathfrak P}(0,0)={\mathfrak P}$. 
Also ${\mathfrak P}(\alpha, 0)=Z^{2\alpha}{\mathfrak P}$ are displaced parity operators in the momentum direction, and
${\mathfrak P}(0,\beta)=X^{2\beta}{\mathfrak P}$ are displaced parity operators in the position direction.

\begin{example}
For $d=3$ 
\begin{eqnarray}\label{30AB}
\ket{X;-1}=\begin{pmatrix}
1\\0\\0
\end{pmatrix};\;\;\;
\ket{X;0}=\begin{pmatrix}
0\\1\\0
\end{pmatrix};\;\;\;
\ket{X;1}=\begin{pmatrix}
0\\0\\1
\end{pmatrix}.
\end{eqnarray}
Then the matrices $Z,X,F, {\mathfrak P}$ are
\begin{eqnarray}\label{700}
Z={\rm diag}\begin{pmatrix}
\omega(-1)&1&\omega(1)
\end{pmatrix};\;\;\;
X=\begin{pmatrix}
0&0&1\\
1&0&0\\
0&1&0
\end{pmatrix};\;\;\;
F=\frac{1}{\sqrt{3}}\begin{pmatrix}
\omega(1)&1&\omega(-1)\\
1&1&1\\
\omega(-1)&1&\omega(1)
\end{pmatrix};\;\;\;{\mathfrak P}=\begin{pmatrix}
0&0&1\\
0&1&0\\
1&0&0
\end{pmatrix}.
\end{eqnarray}
We used Eq.(\ref{HAM}) to calculate the Hamiltonian that will give stroboscopically the time evolution operator $D(1,1,0)=ZX\omega(-2)$:
\begin{eqnarray}
h=\frac{3}{2\pi i}\log D(1,1,0)=\begin{pmatrix}
0& 0.50 - 0.28i&   -0.57i\\
  0.50 + 0.28i& 0& -0.50-0.28i\\
   0.57i & - 0.50+0.28i& 0
\end{pmatrix}.
\end{eqnarray}
In this case $[{\mathfrak P},h]\ne 0$ and parity symmetry is violated.

\end{example}

\subsection{Properties of the displacement and parity operators}

Here we summarise briefly known (e.g.\cite{1}) properties of the displacement and parity operators, because they are generalised in the present paper later. 
Their proof is based on getting the matrix elements of both sides with the $\bra{X;k}$ and $\ket{X;j}$, using Eq.(\ref{122}):
\begin{itemize}\item
The $D(\alpha, \beta, 0)$ and ${\mathfrak P}(\gamma, \delta)$ are related through a Fourier transform:
\begin{eqnarray}\label{KKLL}
\frac {1}{d}\sum _{\alpha ,\beta } D(\alpha, \beta,0) \omega(\beta \gamma -\alpha \delta)&=&{\mathfrak P}(\gamma,\delta).
\end{eqnarray}
The inverse Fourier transform gives
\begin{eqnarray}\label{KKLL1}
\frac {1}{d}\sum _{\gamma,\delta } {\mathfrak P}(\gamma,\delta) \omega(-\beta \gamma +\alpha \delta)&=&D(\alpha, \beta,0).
\end{eqnarray}
\item
If $D_{ij}(\alpha, \beta,0)=\bra{X;i}D(\alpha, \beta,0) \ket{X;j}$ and  ${\mathfrak P}_{ij}(\alpha, \beta)=\bra{X;i}{\mathfrak P}(\alpha, \beta) \ket{X;j}$, then
\begin{eqnarray}\label{DFF}
\frac {1}{d}\sum _{\alpha,\beta}D_{ij}(\alpha, \beta,0)D_{k\ell}(-\alpha, -\beta,0)=\frac {1}{d}\sum _{\alpha,\beta}{\mathfrak P}_{ij}(\alpha, \beta){\mathfrak P}_{k\ell}(\alpha, \beta)=\delta_{i,\ell}\delta_{j,k}.
\end{eqnarray}
\item
The displacement operators obey the marginal relations
\begin{eqnarray}\label{AX1}
&&\frac {1}{d}\sum _{\beta } D(\alpha, \beta,0) =\ket{P;2^{-1}\alpha}\bra {P;2^{-1}\alpha}{\mathfrak P}\nonumber\\
&&\frac {1}{d}\sum _{\alpha} D(\alpha, \beta,0) =\ket{X;2^{-1}\beta}\bra {X;2^{-1}\beta}{\mathfrak P}\nonumber\\
&&\frac {1}{d}\sum _{\alpha,\beta} D(\alpha, \beta,0) ={\mathfrak P};\;\;\;(\alpha, \beta) \in PS(X,Z),
\end{eqnarray}
and the displaced parity operators obey the marginal relations
\begin{eqnarray}\label{AX2}
&&\frac {1}{d}\sum _{\beta } {\mathfrak P}(\alpha, \beta) =\ket{P;\alpha}\bra {P;\alpha}\nonumber\\
&&\frac {1}{d}\sum _{\alpha} {\mathfrak P}(\alpha, \beta) =\ket{X;\beta}\bra {X;\beta}\nonumber\\
&&\frac {1}{d}\sum _{\alpha, \beta} {\mathfrak P}(\alpha, \beta) ={\bf 1};\;\;\;(\alpha, \beta) \in PS(X,Z).
\end{eqnarray}
\end{itemize}
The Fourier transform between displacement operators and displaced parity operators in Eqs.(\ref{KKLL}),(\ref{KKLL1}), and also the analogy (`duality') between Eqs.(\ref{AX1}),(\ref{AX2})
suggest that the displacement operators and the displaced parity operators might be  `two sides of the same coin'.
The unification of these quantities and also of the Weyl and Wigner functions which are intimately related to them, is one of the objectives of this paper.

\subsection{Wigner and Weyl functions}
There is a lot of work on Wigner functions in a discrete phase space(\cite{5,6,6AA,6A,6B,6C,6D,6E,6F,6G}) and there are different definitions (especially for systems with even dimension $d$). 
We consider the case of odd $d$, and define Wigner and Weyl functions in terms of displaced parity operators and the displacement operators correspondingly. This is analogous to the  Wigner and Weyl functions
in the case of continuous variables. 

Let $\Theta$ be an operator.
Its Wigner function $W(\Theta; \alpha ,\beta )$ and its Weyl (or ambiguity) function $\widetilde W(\Theta; \alpha ,\beta )$ are defined as
\begin{eqnarray}\label{WIG}
W(\Theta; \alpha ,\beta )={\rm Tr}[\Theta {\mathfrak P}(\alpha, \beta)];\;\;\;\widetilde W(\Theta; \alpha ,\beta )\equiv {\rm Tr}[\Theta D(\alpha, \beta,0)];\;\;\;(\alpha, \beta) \in PS(X,Z).
\end{eqnarray}
If $\Theta$ is a Hermitian operator, the Wigner function is real.

We summarise briefly some properties which follow from the properties in the previous  section (e.g.\cite{1}). They are generalised later in section \ref{sec18}.
\begin{itemize}
\item
The Wigner and Weyl functions are related through Fourier transforms (this follows immediately from Eqs(\ref{KKLL}), (\ref{KKLL1})):
\begin{eqnarray}\label{QW10}
\frac {1}{d}\sum _{\alpha ,\beta }  {\widetilde W}(\Theta; \alpha, \beta)\omega(\beta \gamma -\alpha \delta)&=&W(\Theta; \gamma,\delta),
\end{eqnarray}
and
\begin{eqnarray}\label{QW11}
\frac {1}{d}\sum _{\gamma,\delta } W(\Theta; \gamma,\delta) \omega(-\beta \gamma +\alpha \delta)&=&{\widetilde W}(\Theta;\alpha, \beta).
\end{eqnarray}
\item
From Eq.(\ref{DFF}) follows that an operator $\Theta$ can be expanded in terms of the $d^2$ displacement operators $D(\alpha, \beta, 0)$ (elements of $HW(d)/{\mathbb Z}(d)$ group) with the Weyl functions as coefficients.  It can also be expanded in terms of the $d^2$ displaced parity operators ${\mathfrak  P}(\alpha, \beta)$, with the Wigner functions as coefficients:
\begin{eqnarray}\label{exp}
&&\Theta=\frac{1}{d}\sum _{\alpha ,\beta} \widetilde W(\Theta; -\alpha ,-\beta ) D(\alpha, \beta, 0)\nonumber\\
&&\Theta=\frac{1}{d} \sum _{\alpha ,\beta} W(\Theta; \alpha ,\beta ){\mathfrak  P}(\alpha, \beta).
\end{eqnarray}
We note that there is no redundancy in these expansions. The operator $\Theta$  is a $d\times d$ matrix with $d^2$ complex elements, and is represented with the $d^2$ complex numbers
$\widetilde W(\Theta; -\alpha ,-\beta )$ (or ${W}(\Theta; \alpha ,\beta ) $). In contrast, there is redundancy in the expansion in Eq.(\ref{300}) below.

\item
In the first expansion in Eq.(\ref{exp}), the $D(\alpha, \beta, 0)$ are elements of a group (the $HW(d)/{\cal G}_d({\bf 1})$ group), and therefore if we multiply two operators $\Theta_1\Theta_2$ we get an expansion of the same type:
\begin{eqnarray}
\Theta_1\Theta_2&=&\frac{1}{d^2}\sum _{\alpha _1,\beta _1, \alpha _2,\beta_2} \widetilde W(\Theta_1; -\alpha _1,-\beta _1)
\widetilde W(\Theta_2; -\alpha _2,-\beta _2)\nonumber \\
&\times &\omega[2^{-1}(\alpha_1\beta_2-\alpha_2\beta_1)]D(\alpha_1+\alpha_2, \beta_1+\beta_2, 0).
\end{eqnarray}
We change the variables $\alpha_1, \alpha_2$ into $\alpha=\alpha_1+\alpha_2, \alpha_2$ (and similarly for $\beta$), we get 
\begin{eqnarray}
\widetilde W(\Theta_1\Theta_2; -\alpha ,-\beta )&=&\frac{1}{d}\sum _{\alpha_2, \beta_2}
\omega ( 2^{-1}\alpha \beta_2-2^{-1}\alpha_2 \beta)
\widetilde W(\Theta_1; \alpha_2 -\alpha,\beta_2-\beta) 
\widetilde W(\Theta _2; -\alpha_2, -\beta_2).
\end{eqnarray}
In the second expansion in Eq.(\ref{exp}) the ${\mathfrak  P}(\alpha, \beta)$ do not form a group, and if we multiply two operators $\Theta_1\Theta_2$ we do not get an expansion of the same type.
The Wigner function of the product $\Theta_1\Theta_2$ is given by the Moyal star product\cite{M1,M2,M3}, 
which in the context of quantum systems with finite-dimensional Hilbert space
is given by
\begin{eqnarray}\label{Moyal}
[W(\Theta _1)\star W(\Theta _2)] (\alpha ,\beta )\equiv
W(\Theta _1\Theta _2; \alpha ,\beta )&=&\frac{1}{d^2}\sum _{\alpha _1,\beta _1,\alpha _2,\beta _2}
\omega (2\alpha _2 \beta _1-2\alpha _1\beta _2)\nonumber\\ &\times&
 W(\Theta _1; \alpha +\alpha _1,\beta +\beta _1)W(\Theta _2; \alpha +\alpha _2,\beta +\beta _2) 
\end{eqnarray}
The Moyal star product is associative, and (in general) non-commutative.

\item
From Eqs(\ref{AX2}) follow the following marginal properties of the Wigner function:
\begin{eqnarray}\label{AX22}
&&\frac {1}{d}\sum _{\beta } W(\Theta; \alpha, \beta) =\bra {P;\alpha}\Theta \ket{P;\alpha}\nonumber\\
&&\frac {1}{d}\sum _{\alpha} W(\Theta; \alpha, \beta) =\bra {X;\beta}\Theta \ket{X;\beta}\nonumber\\
&&\frac {1}{d}\sum _{\alpha, \beta} W(\Theta; \alpha, \beta) ={\rm Tr}(\Theta);\;\;\;(\alpha, \beta) \in PS(X,Z).
\end{eqnarray}
If $\Theta$ is a density matrix, the right hand sides of the first two equations, are probabilities.
Also from Eqs(\ref{AX1}) follow the following marginal properties of the Weyl function:
\begin{eqnarray}\label{AX11}
&&\frac {1}{d}\sum _{\beta } {\widetilde W}(\Theta; \alpha, \beta) =\bra {P;2^{-1}\alpha}{\mathfrak P}\Theta \ket{P;2^{-1}\alpha}\nonumber\\
&&\frac {1}{d}\sum _{\alpha} {\widetilde W}(\Theta; \alpha, \beta) =\bra {X;2^{-1}\beta}{\mathfrak P}\Theta \ket{X;2^{-1}\beta}\nonumber\\
&&\frac {1}{d}\sum _{\alpha,\beta} {\widetilde W}(\Theta; \alpha, \beta) ={\rm Tr}({\mathfrak P}\Theta);\;\;\;(\alpha, \beta) \in PS(X,Z).
\end{eqnarray}

\end{itemize}

\section{Unification of displacement and parity transformations with the dihedral group}

\subsection{The dihedral group $\Delta_d$}
The dihedral group $\Delta_d$ is the group of symmetries of a regular $d$-sided polygon (some authors use the notation $\Delta_{2d}$).
It has $2d$ elements $x_a$ (rotations), and $p_a$  (reflections), where the index $a$ takes values in ${\mathbb Z}(d)$.
The unit element is $x_0={\bf 1}$. Multiplication is defined as
\begin{eqnarray}
x_a x_b=x_{a+b};\;\;\;x_a p_b=p_{a+b};\;\;\;p_ax_b=p_{a-b};\;\;\;p_ap_b=x_{a-b};\;\;\;a,b\in {\mathbb Z}(d).
\end{eqnarray}
It is easily seen that 
\begin{eqnarray}
x_a=x_1^a;\;\;\;x_1^d={\bf 1};\;\;\;p_a^2={\bf 1};\;\;\;x_a^{-1}=x_{-a};\;\;\;p_a^{-1}=p_a.
\end{eqnarray}
In this paper we use odd $d$ with $d\ge 3$ and then the group $\Delta_d$ is non-Abelian.
Rotations commute with each other ($x_ax_b=x_bx_a$), but in general reflections do not commute with each other ($p_ap_b\ne p_bp_a$), and also in general rotations do not commute with reflections($x_ap_b\ne p_bx_a$).

All elements can be generated from $x_1, p_1$ as follows:
\begin{eqnarray}
x_a=x_1^a;\;\;\;p_a=x_1^{a-1}p_1.
\end{eqnarray}
The general element of  $\Delta_d$ can be written as $R(a,\nu)=x_1^a p_1^\nu=p_1^\nu x_1^{(-1)^\nu a}$, where $a\in {\mathbb Z}(d)$ and $\nu\in{\mathbb Z}(2)$.
$\nu=0$ corresponds to rotations, and $\nu=1$ to reflections.
The multiplication rule is
\begin{eqnarray}\label{32}
R(a_1,\nu_1)R(a_2,\nu_2)=R[a_1+(-1)^{\nu_1}a_2,\nu_1+\nu_2].
\end{eqnarray}
The unit element is $R(0,0)={\bf 1}$. The inverse of $R(a,\nu)$ is
\begin{eqnarray}\label{33}
[R(a,\nu)]^{-1}=R[(-1)^{\nu+1}a, -\nu]=R[(-1)^{\nu+1}a, \nu].
\end{eqnarray}
It is easily seen that
\begin{eqnarray}
[R(a,0)]^{d}=[R(a,1)]^2={\bf 1},
\end{eqnarray}
and therefore in general
\begin{eqnarray}
[R(a,\nu)]^{2d}={\bf 1}.
\end{eqnarray}

\begin{proposition}
The dihedral group is the semidirect product of ${\mathbb Z}(d)$ by ${\mathbb Z}(2)$:
\begin{eqnarray}\label{18}
\Delta_d={\mathbb Z}(d) \rtimes {\mathbb Z}(2).
\end{eqnarray}
\end{proposition}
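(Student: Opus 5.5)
The plan is to verify the definition of an (internal) semidirect product directly, using the parametrisation $R(a,\nu)$ and the multiplication rule in Eq.(\ref{32}). I will take $N=\{R(a,0)\,|\,a\in{\mathbb Z}(d)\}$ (the rotations) and $K=\{R(0,\nu)\,|\,\nu\in{\mathbb Z}(2)\}=\{{\bf 1},p_1\}$. By Eq.(\ref{32}), $R(a_1,0)R(a_2,0)=R(a_1+a_2,0)$, so $N\cong{\mathbb Z}(d)$ via $a\mapsto R(a,0)=x_1^a$. Likewise $R(0,\nu_1)R(0,\nu_2)=R(0,\nu_1+\nu_2)$, so $K\cong{\mathbb Z}(2)$.

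First I show that $N$ is normal. Conjugating $R(b,0)$ by $R(a,\nu)$ and using Eq.(\ref{32}) with the inverse in Eq.(\ref{33}) gives
\begin{eqnarray}
R(a,\nu)R(b,0)[R(a,\nu)]^{-1}=R[(-1)^\nu b,0]\in N .
\end{eqnarray}
The $a$-dependence cancels because $N$ is Abelian. Next, $N\cap K=\{R(0,0)\}={\bf 1}$, since elements of $N$ have $\nu=0$ and elements of $K$ have $a=0$. Finally, $NK=\Delta_d$, because $R(a,0)R(0,\nu)=R(a,\nu)$ by Eq.(\ref{32}), and this exhausts all $2d$ elements. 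The same identity shows that the map $(a,\nu)\mapsto R(a,\nu)$ is a bijection ${\mathbb Z}(d)\times{\mathbb Z}(2)\to\Delta_d$.

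Writing the action of $K$ on $N$ explicitly as $\phi:{\mathbb Z}(2)\to{\rm Aut}({\mathbb Z}(d))$, $\phi_\nu(a)=(-1)^\nu a$, I then check that the semidirect product law
\begin{eqnarray}
(a_1,\nu_1)(a_2,\nu_2)=(a_1+\phi_{\nu_1}(a_2),\nu_1+\nu_2)
\end{eqnarray}
is exactly Eq.(\ref{32}). This identifies $\Delta_d$ with ${\mathbb Z}(d)\rtimes_\phi{\mathbb Z}(2)$. Since $d\ge3$, the action $\phi_1$ (inversion) is nontrivial, so the product is genuinely semidirect and not direct, consistent with $\Delta_d$ being non-Abelian.

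There is no real obstacle here. The only point needing care is a consistency check: Eq.(\ref{32}) must be compatible with the original rules, for example $p_ap_b=x_{a-b}$, so that $R(a,\nu)=x_1^ap_1^\nu$ is a faithful parametrisation. I would check this briefly via $p_1x_1p_1=x_1^{-1}$, which follows from $p_ax_b=p_{a-b}$ and $x_ap_b=p_{a+b}$.
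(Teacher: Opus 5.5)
Your proposal is correct and follows essentially the same argument as the paper. Both show that the rotations form a normal subgroup via $R(a,\nu)R(b,0)[R(a,\nu)]^{-1}=R[(-1)^\nu b,0]$, take $\{{\bf 1},p_1\}\cong{\mathbb Z}(2)$ as the complement, note the trivial intersection, and use the factorisation $R(a,\nu)=R(a,0)R(0,\nu)$. Your explicit identification of the action $\phi_\nu(a)=(-1)^\nu a$ with the multiplication rule in Eq.~(\ref{32}) is a small addition that the paper leaves implicit.
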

\begin{proof}
The proof is well known. But  we briefly sketch the proof because it is the first step towards a similar proof for $HWP(d)$ later (in proposition \ref{pro123}).
It involves the following steps:
\begin{itemize}
\item

The $\{R(a,0)=x_1^a|a\in {\mathbb Z}(d)\}$ form a normal subgroup of $\Delta_d$ isomorphic to ${\mathbb Z}(d)$.
Indeed
\begin{eqnarray}
R(a,\nu)R(b,0)[R(a,\nu)]^{-1}=R[(-1)^\nu b, 0].
\end{eqnarray}
\item
The $\{{\bf 1}, R(0,1)=p_1\}$ form a subgroup of $\Delta_d$ isomorphic to ${\mathbb Z}(2)$.
\item
Every element in $\Delta_d$ is a product of an element of ${\mathbb Z}(d)$ times an element of ${\mathbb Z}(2)$ and ${\mathbb Z}(d)\cap{\mathbb Z}(2)=\{{\bf 1}\}$.
\end{itemize}
This completes the proof.
\end{proof}

Physically, the semidirect product enlarges the group ${\mathbb Z}(d)$ of rotations, with reflections. 
A natural symmetry of a regular $d$-sided polygon is the rotations, associated with ${\mathbb Z}(d)$.
The semidirect product brings some extra symmetry related to reflections.

\subsection{A representation of the dihedral group with $Z$ and ${\mathfrak P}$ operators}

We introduce a representation of the dihedral group, with the following correspondence: 
\begin{eqnarray}
x_1\rightarrow Z;\;\;\;p_1\rightarrow {\mathfrak P}.
\end{eqnarray}
Then $x_a$ and $p_b$ correspond to the displacement operators and displaced parity operators, as follows:
\begin{eqnarray}
x_a\rightarrow Z^a=D(a,0,0);\;\;\;p_b\rightarrow Z^{b-1}{\mathfrak P}={\mathfrak P}[2^{-1}(b-1),0].
\end{eqnarray}
All elements can be generated from $Z$ and ${\mathfrak P}$. The general element is
\begin{eqnarray}
{\mathfrak Z}(a,\nu)=Z^{a}{\mathfrak P}^\nu={\mathfrak P}^\nu Z^{(-1)^\nu a};\;\;\;a\in {\mathbb Z}(d);\;\;\;\nu \in{\mathbb Z}(2).
\end{eqnarray}
$\nu=0$ corresponds to displacements with $Z$ in the momentum direction, and $\nu=1$ to displaced parity transformations with $Z^a{\mathfrak P}={\mathfrak P}[2^{-1}a,0]$. 

Relations analogous to the ones in the previous section hold here also:
\begin{eqnarray}\label{abc}
&&[{\mathfrak Z}(a,\nu)]^{-1}={\mathfrak Z}[(-1)^{\nu+1}a, \nu]={\mathfrak Z}[(-1)^{\nu+1}a, -\nu]\nonumber\\
&&{\mathfrak Z}(a_1,\nu_1){\mathfrak Z}(a_2,\nu_2)={\mathfrak Z}[a_1+(-1)^{\nu_1}a_2,\nu_1+\nu_2]\nonumber\\
&&[{\mathfrak Z}(a,0)]^{d}={\bf 1};\;\;\;[{\mathfrak Z}(a,1)]^{2}={\bf 1};\;\;\;[{\mathfrak Z}(a,\nu)]^{2d}={\bf 1}.
\end{eqnarray}

There is an analogy between the symmetries of a regular $d$-sided polygon described with the dihedral group $\Delta _d$ and the symmetries of the momentum states in $H(d)$:
\begin{itemize}
\item
Vertices of a regular $d$-sided polygon correspond to the momentum states $\ket{P;j}$.
\item
Rotations of the vertices with $x_a$, correspond to acting with the displacement operators $Z^a$ on the momentum states $\ket{P;j}$.
\item
Reflections with $p_b$ on the vertices, correspond to acting with the displaced parity operators ${\mathfrak P}[2^{-1}a,0]$ (with $a=b-1$) on the momentum states $\ket{P;j}$:
\begin{eqnarray}
{\mathfrak P}[2^{-1}a,0]\ket{P;j}=Z^a{\mathfrak P}\ket{P;j}=\ket {P;-j+a}
\end{eqnarray}
\end{itemize}
We denote this representation as $\Delta_d(Z)$.
The group of displacements with $Z$ in the momentum direction (which is ${\mathbb Z}(d)$),
is enlarged with parity transformations. 
Displacements in the momentum direction do not commute with the parity transformations and $\Delta_d(Z)$ is non-Abelian.

Physically we can implement stroboscopically  ${\mathfrak Z}(a,\nu)$ transformations, with a Hamiltonian that is the principal logarithm of ${\mathfrak Z}(a,\nu)$:
\begin{eqnarray}
h=\frac{d}{2\pi i}\log {\mathfrak Z}(a,\nu).
\end{eqnarray}
Then stroboscopically we get
\begin{eqnarray}\label{59}
\exp(iht)={\mathfrak Z}(a,\nu);\;\;\;t=\frac{2\pi}{d}+4\pi N;\;\;\;N=0,1,...
\end{eqnarray}

From Eqs(\ref{AX1}), (\ref{AX2}) we get the following marginal property of ${\mathfrak Z}(a,\nu)$:
\begin{eqnarray}\label{60}
\frac{1}{d}\sum_a {\mathfrak Z}(a,\nu)=\ket{X;0}\bra{X;0}.
\end{eqnarray}
This can also be proved directly by taking the matrix elements with $\bra{X;j}$ and $\ket{X;k}$.

If $\Theta$ is an arbitrary operator we define the 
\begin{eqnarray}
{\mathfrak W}_Z(\Theta; a,\nu)={\rm Tr}[\Theta {\mathfrak Z}(a,\nu)].
\end{eqnarray}
Then ${\mathfrak W}_Z(\Theta; a,0)$ is equal to the Weyl function (with the second variable equal to zero),  and ${\mathfrak W}_Z(\Theta; a,1)$ is equal to the Wigner function (with the second  variable equal to zero):
\begin{eqnarray}
{\mathfrak W}_Z(\Theta; a,0)={\widetilde W}(\Theta, a,0);\;\;\;{\mathfrak W}_Z(\Theta; a,1)={W}(\Theta; a,0).
\end{eqnarray}
The ${\mathfrak W}_Z(\Theta; a,\nu)$ unifies the Weyl and Wigner functions (with the second variable equal to zero), and we call it `Wigner-Weyl function'.
The full unification is discussed in section \ref{sec18}.

From Eq.(\ref{60}) follows the following marginal property of ${\mathfrak W}_Z(\Theta; a,\nu)$:
\begin{eqnarray}\label{139}
\frac{1}{d}\sum _a{\mathfrak W}_Z(\Theta; a,\nu )=\bra{X;0}\Theta \ket{X;0}.
\end{eqnarray}

\begin{example}\label{ex1}
In $H(d)$ let  $\rho_1=\ket{X;j}\bra{X;j}$. In this case we get
\begin{eqnarray}
{\mathfrak W}_Z(\rho_1; a, \nu)=\omega [a j(-1)^\nu]\delta(2\nu j,0).
\end{eqnarray}
We also consider the density matrix $\rho_2=\ket{P;j}\bra{P;j}$. In this case we get
\begin{eqnarray}
{\mathfrak W}_Z(\rho_2; a, \nu)=\delta(2\nu j,a).
\end{eqnarray}
In both cases the marginal relation in Eq.(\ref{139}) holds.

\end{example}

\subsection{$\Delta_d(Z)$ as solvable group and commutators that perform displacements and parity transformations along loops in phase space}\label{sec34}

\begin{lemma}
The commutator of elements of the $\Delta_d(Z)$ group is 
\begin{eqnarray}\label{34}
&&{\cal L}_2(a_1,\nu_1|a_2,\nu_2)=[{\mathfrak Z}(a_1,\nu_1),{\mathfrak Z}(a_2,\nu_2)]=Z^{2{\cal A}(a_1,\nu_1|a_2,\nu_2)}\nonumber\\
&&{\cal A}(a_1,\nu_1|a_2,\nu_2)=a_1\nu_2-a_2\nu_1\in {\mathbb Z}(d);\;\;\;\nu_1,\nu_2=0,1.
\end{eqnarray}

\end{lemma}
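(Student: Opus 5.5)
Since every element of $\Delta_d(Z)$ is of the form ${\mathfrak Z}(a,\nu)$, the plan is to compute the commutator directly from the multiplication and inversion rules in Eq.(\ref{abc}). No operator calculations are needed beyond these, because Eq.(\ref{abc}) already encodes ${\mathfrak P}Z{\mathfrak P}=Z^{-1}$.

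First I write
\begin{eqnarray}
[{\mathfrak Z}(a_1,\nu_1),{\mathfrak Z}(a_2,\nu_2)]={\mathfrak Z}(a_1,\nu_1){\mathfrak Z}(a_2,\nu_2){\mathfrak Z}[(-1)^{\nu_1+1}a_1,\nu_1]{\mathfrak Z}[(-1)^{\nu_2+1}a_2,\nu_2],\nonumber
\end{eqnarray}
using the inverse formula. I then multiply successively with the rule ${\mathfrak Z}(a,\nu){\mathfrak Z}(b,\mu)={\mathfrak Z}[a+(-1)^\nu b,\nu+\mu]$.

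The parity label of the product is $2\nu_1+2\nu_2=0$ in ${\mathbb Z}(2)$, so the commutator is a pure power of $Z$. Its exponent is
\begin{eqnarray}
a_1+(-1)^{\nu_1}a_2+(-1)^{\nu_1+\nu_2}(-1)^{\nu_1+1}a_1+(-1)^{\nu_2}(-1)^{\nu_2+1}a_2
=a_1[1-(-1)^{\nu_2}]-a_2[1-(-1)^{\nu_1}].\nonumber
\end{eqnarray}
Here I used that the cumulative sign before the last factor is $(-1)^{2\nu_1+\nu_2}=(-1)^{\nu_2}$.

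The final step is the elementary identity $1-(-1)^{\nu}=2\nu$ for $\nu\in\{0,1\}$, regarded as integers. This turns the exponent into $2(a_1\nu_2-a_2\nu_1)=2{\cal A}(a_1,\nu_1|a_2,\nu_2)$, which gives Eq.(\ref{34}).

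The only delicate point is bookkeeping. One must track the sign $(-1)^{\nu}$ accumulated by the left factor at each multiplication step. One must also treat $\nu_1,\nu_2$ as the integers $0,1$ when writing $2\nu$, rather than reducing modulo $2$, since the exponent lives in ${\mathbb Z}(d)$ with $d$ odd. As a sanity check, I would verify the four cases $(\nu_1,\nu_2)\in\{0,1\}^2$ separately. For instance, $\nu_1=\nu_2=1$ gives $Z^{a_1}{\mathfrak P}Z^{a_2}{\mathfrak P}Z^{a_1}{\mathfrak P}Z^{a_2}{\mathfrak P}=Z^{2(a_1-a_2)}$, in agreement with the formula.
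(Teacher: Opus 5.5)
Your proposal is correct and is essentially the paper's proof: the paper also multiplies out the four factors with the multiplication and inversion rules for ${\mathfrak Z}(a,\nu)$, obtains the exponent $[1-(-1)^{\nu_2}]a_1-[1-(-1)^{\nu_1}]a_2$, and reduces it to $2(\nu_2a_1-\nu_1a_2)$ using $(-1)^\nu=1-2\nu$ for $\nu=0,1$. Your sign bookkeeping (cumulative sign $(-1)^{\nu_2}$ before the last factor) and your $\nu_1=\nu_2=1$ check are both right.
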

\begin{proof}

The commutator of elements of $\Delta_d(Z)$ is 
\begin{eqnarray}
&&{\cal L}_2(a_1,\nu_1|a_2,\nu_2)={\mathfrak Z}(a_1,\nu_1){\mathfrak Z}(a_2,\nu_2)[{\mathfrak Z}(a_1,\nu_1)]^{-1}[{\mathfrak Z}(a_2,\nu_2)]^{-1}={\mathfrak Z}(A,0)=Z^A\nonumber\\
&&A=[1-(-1)^{\nu_2}]a_1-[1-(-1)^{\nu_1}]a_2=2(\nu_2a_1-\nu_1a_2);\;\;\;\nu_1,\nu_2=0,1.
\end{eqnarray}
In order to get a simple formula we used the following relation which holds only for $\nu=0,1$: 
\begin{eqnarray}\label{40}
(-1)^\nu=1-2\nu;\;\;\;\nu=0,1.
\end{eqnarray}
The $\nu_1,\nu_2$ take the values $0,1$ and $a_1,a_2$ take values in ${\mathbb Z}(d)$, so that ${\cal A}(a_1,\nu_1|a_2,\nu_2)\in {\mathbb Z}(d)$.
We note  that for odd $d$, the $2$ is invertible and therefore the $2(\nu_2a_1-\nu_1a_2)$ can take all values in ${\mathbb Z}(d)$. 

\end{proof}

\begin{proposition}
\mbox{}
\begin{itemize}
\item[(1)]
\begin{eqnarray}\label{77}
[\Delta_d(Z),\Delta_d(Z)]={\cal G}_d(Z);\;\;\;[{\cal G}_d(Z), {\cal G}_d(Z)]\cong \{{\bf 1}\}.
\end{eqnarray}
${\cal G}_d(Z)$ has been defined in Eq.(\ref{sub}).
\item[(2)]
$\Delta_d(Z)$ is a solvable group (of solvability class $2$) with derived series
\begin{eqnarray}\label{78}
\Delta_d(Z)\triangleright {\cal G}_d(Z)\triangleright \{{\bf 1}\}.
\end{eqnarray}
Here  the ${\cal G}_d(Z)$ is related to the non-commutativity between $Z$ and ${\mathfrak P}$ 
(as we explained in Eqs(\ref{C1}),(\ref{C2}) displacements do not commute with parity even in classical physics).
\item[(3)]
Here the Abelian groups in Eq.(\ref{100}) are
\begin{eqnarray}\label{151}
\Delta_d(Z)/{\cal G}_d(Z)\cong {\mathbb Z}(2);\;\;\;{\cal G}_d(Z)/\{{\bf 1}\}={\cal G}_d(Z)\cong {\mathbb Z}(d).
\end{eqnarray}
\end{itemize}
\end{proposition}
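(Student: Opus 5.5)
The plan mirrors the proof of proposition \ref{pro3}, with the lemma in Eq.(\ref{34}) playing the role of Eq.(\ref{199n}).

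For part (1), the lemma gives every commutator as ${\cal L}_2(a_1,\nu_1|a_2,\nu_2)=Z^{2(a_1\nu_2-a_2\nu_1)}$. Hence every commutator lies in ${\cal G}_d(Z)$, and so $[\Delta_d(Z),\Delta_d(Z)]\subseteq{\cal G}_d(Z)$. For the reverse inclusion, take $\nu_1=0$, $\nu_2=1$ and $a_2=0$. This gives $[{\mathfrak Z}(a_1,0),{\mathfrak P}]=Z^{2a_1}$. Because $d$ is odd, $2$ is invertible in ${\mathbb Z}(d)$, so choosing $a_1=2^{-1}a$ produces $Z^a$ for every $a\in{\mathbb Z}(d)$. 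Thus the commutators already exhaust ${\cal G}_d(Z)$, and the subgroup they generate is exactly ${\cal G}_d(Z)$. Since ${\cal G}_d(Z)\cong{\mathbb Z}(d)$ is Abelian ($Z^aZ^b=Z^bZ^a$), its commutator subgroup is trivial. This proves Eq.(\ref{77}).

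Part (2) follows at once. The derived series is $G_1={\cal G}_d(Z)$ and $G_2=\{{\bf 1}\}$, which is Eq.(\ref{78}). The solvability class is exactly $2$ rather than $1$ because $\Delta_d(Z)$ is non-Abelian for $d\ge 3$. For example, $[Z,{\mathfrak P}]=Z^2\ne{\bf 1}$, since $d$ odd means $Z^2\neq{\bf 1}$.

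For part (3), I will identify the quotient using the multiplication law in Eq.(\ref{abc}). The map ${\mathfrak Z}(a,\nu)\mapsto\nu\in{\mathbb Z}(2)$ is a homomorphism, because the $\nu$-components simply add. It is surjective, and its kernel is $\{{\mathfrak Z}(a,0)\}={\cal G}_d(Z)$. The first isomorphism theorem then gives $\Delta_d(Z)/{\cal G}_d(Z)\cong{\mathbb Z}(2)$. As a consistency check, the orders agree: $2d/d=2$. The second relation, ${\cal G}_d(Z)/\{{\bf 1}\}\cong{\mathbb Z}(d)$, is immediate from Eq.(\ref{sub}).

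The only step that is more than bookkeeping is the surjectivity onto ${\cal G}_d(Z)$ in part (1). It depends on $d$ being odd, so that $2^{-1}$ exists. For even $d$ the commutators would generate only the even powers of $Z$.
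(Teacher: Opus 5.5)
Your proof is correct and follows the paper's route. Part (1) uses the lemma's commutator formula, and your explicit choice $[{\mathfrak Z}(a_1,0),{\mathfrak P}]=Z^{2a_1}$ with $a_1=2^{-1}a$ spells out the generation claim that the paper only asserts. Part (2) is immediate, and part (3) uses the parity-index homomorphism, which is the first-isomorphism-theorem form of the paper's appeal to the semidirect product ${\mathbb Z}(d)\rtimes{\mathbb Z}(2)$; your extra check that $[Z,{\mathfrak P}]=Z^{2}\neq{\bf 1}$ also justifies that the solvability class is exactly $2$, which the paper leaves implicit.
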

\begin{proof}
\mbox{}
\begin{itemize}
\item[(1)]

The commutators ${\cal L}_2(a_1,\nu_1|a_2,\nu_2)$ in Eq.(\ref{34}), generate the group ${\cal G}_d(Z)$.
This proves the first of Eqs(\ref{77}).
Also ${\cal G}_d(Z)\cong {\mathbb Z}(d)$ is an Abelian group, and therefore its commutator is $\{{\bf 1}\}$.
\item[(2)]
From Eq.(\ref{77}) follows immediately that $\Delta_d(Z)$ is a solvable group with derived series given in Eq.(\ref{78}).
\item[(3)]
The fact that $\Delta_d(Z)$ is the semidirect product ${\mathbb Z}(d)\rtimes {\mathbb Z}(2)$ implies that $\Delta_d(Z)/{\mathbb Z}(d)\cong {\mathbb Z}(2)$.
The relation ${\cal G}_d(Z)/\{{\bf 1}\}={\cal G}_d(Z)\cong {\mathbb Z}(d)$ is straightforward.
\end{itemize}

\end{proof}

Acting with the commutator ${\cal L}_2(a_1,\nu_1|a_2,\nu_2)$ on a state $\ket{f}$, displaces it along a loop 
in ${\mathbb Z}(d)\times {\mathbb Z}(2)$, which is the momentum-parity phase space and is denoted as $PS(Z,{\mathfrak P})$ (phase space for $Z,{\mathfrak P}$).
The ${\cal A}(a_1,\nu_1|a_2,\nu_2)$ is `area' related to the exterior product of the vectors $(a_1,\nu_1)$ and $(a_2,\nu_2)$ defining this loop.  We get
\begin{eqnarray}\label{340}
{\cal L}_2(a_1,\nu_1|a_2,\nu_2)\ket{f}=Z^{2{\cal A}(a_1,\nu_1|a_2,\nu_2)}\ket{f}.
\end{eqnarray}
Then in general
\begin{eqnarray}\label{341}
|\bra{f}{\cal L}_2(a_1,\nu_1|a_2,\nu_2)\ket{ f}|=|\bra{f}Z^{2{\cal A}(a_1,\nu_1|a_2,\nu_2)}\ket{f}|\le 1.
\end{eqnarray}
This should be compared with Eq.(\ref{loop}).

\subsection{A representation of the dihedral group with $X$ and ${\mathfrak P}$ operators}

If $U$ is a unitary operator, we get another representation of $\Delta_d$ as
\begin{eqnarray}
x_a\rightarrow UZ^aU^\dagger;\;\;\;p_a\rightarrow UZ^{a-1}{\mathfrak P}U^\dagger,
\end{eqnarray}
In particular, we take $U$ to be the Fourier operator $F$. Then
\begin{eqnarray}
x_a\rightarrow F^\dagger Z^aF=X^a;\;\;\;p_a\rightarrow F^\dagger Z^{a-1}{\mathfrak P}F=X^{a-1}{\mathfrak P},
\end{eqnarray}
We denote this as $\Delta_d(X)$.
It combines displacements in the position direction, with parity transformations.

The analogy between the quantities in the previous section and those in the present section is:
\begin{eqnarray}
&&{\mathfrak Z}(a, \nu)\rightarrow {\mathfrak X}(b, \nu)=X^b{\mathfrak P}^\nu,\nonumber\\
&&PS(Z,{\mathfrak P})\rightarrow PS(X,{\mathfrak P}),\nonumber\\
&&[\Delta_d(Z),\Delta_d(Z)]={\cal G}_d(Z)\rightarrow [\Delta_d(X),\Delta_d(X)]={\cal G}_d(X),\nonumber\\
&&{\mathfrak W}_Z(\Theta; a, \nu)={\rm Tr}[\Theta {\mathfrak Z}(a,\nu)]\rightarrow {\mathfrak W}_X(\Theta; b, \nu)={\rm Tr}[\Theta {\mathfrak X}(b,\nu)].
\end{eqnarray}
Then
\begin{eqnarray}
{\mathfrak Z}(a, \nu)=F{\mathfrak X}(a, \nu)F^\dagger.
\end{eqnarray}
Also
\begin{eqnarray}
{\mathfrak W}_X(\Theta; b,0)={\widetilde W}(\Theta; 0,b);\;\;\;{\mathfrak W}_X(\Theta; b,1)={W}(\Theta; 0,b) 
\end{eqnarray}

This leads naturally to the next section, where we unify the ${\mathfrak Z}(a, \nu)$ with ${\mathfrak X}(b, \nu)$, and to section \ref{sec18} where we unify ${\mathfrak W}_Z(\Theta; a, \nu)$ with 
${\mathfrak W}_X(\Theta; b, \nu)$.

\section{Heisenberg-Weyl-parity group: a generalised dihedral group}
\begin{definition}
The `displacement-parity operators'  ${\mathfrak D}(\alpha, \beta, \gamma, \nu)$ are the unitary operators
\begin{eqnarray}
{\mathfrak D}(\alpha, \beta, \gamma, \nu)=D(\alpha, \beta, \gamma){\mathfrak P}^\nu={\mathfrak P}^\nu D[(-1)^\nu\alpha, (-1)^\nu\beta, \gamma];\;\;\;\nu\in{\mathbb Z}(2);\;\;\;\alpha, \beta, \gamma\in {\mathbb Z}(d).
\end{eqnarray}
\end{definition}
Then:
\begin{itemize}
\item
${\mathfrak D}(\alpha, \beta, \gamma, \nu)$ unifies many of the operators introduced in previous sections:
\begin{eqnarray}
&&{\mathfrak D}(\alpha, \beta, \gamma, 0)={D}(\alpha, \beta, \gamma);\;\;\;
{\mathfrak D}(\alpha, \beta, 0,1)={\mathfrak P}(2^{-1}\alpha,2^{-1} \beta)\nonumber\\
&&{\mathfrak D}(\alpha,0,0, \nu)={\mathfrak Z}(\alpha, \nu);\;\;\;{\mathfrak D}(0,\beta,0, \nu)={\mathfrak X}(\beta, \nu)
\end{eqnarray}
So the ${\mathfrak D}(\alpha, \beta, \gamma, 0)$ are the same as the displacement operators ${D}(\alpha, \beta, \gamma)$ in section \ref{sec19}.
We will show (Eq.(\ref{WW2})) that the extra displacement operators ${\mathfrak D}(\alpha, \beta, \gamma, 1)$ are related to ${D}(\alpha, \beta, \gamma)$ through a Fourier transform.

\item
The inverse of ${\mathfrak D}(\alpha, \beta, \gamma, \nu)$ is
\begin{eqnarray}\label{inv}
[{\mathfrak D}(\alpha, \beta, \gamma, \nu)]^{-1}=[{\mathfrak D}(\alpha, \beta, \gamma, \nu)]^\dagger={\mathfrak P}^\nu D(-\alpha, -\beta, -\gamma)={\mathfrak D}[(-1)^{\nu+1}\alpha, (-1)^{\nu+1}\beta, -\gamma, \nu].
\end{eqnarray}
For $\nu=0,1$ we get
\begin{eqnarray}\label{inv0}
[{\mathfrak D}(\alpha, \beta, \gamma, 0)]^{-1}={\mathfrak D}(-\alpha, -\beta, -\gamma, 0);\;\;\;[{\mathfrak D}(\alpha, \beta, \gamma, 1)]^{-1}={\mathfrak D}(\alpha, \beta, -\gamma, 1).
\end{eqnarray}
\item
It is easily seen that
\begin{eqnarray}\label{inv1}
[{\mathfrak D}(\alpha, \beta, \gamma, 0)]^{d}=[{\mathfrak D}(\alpha, \beta, \gamma, 1)]^{2}={\bf 1},
\end{eqnarray}
and therefore in general
\begin{eqnarray}\label{inv10}
[{\mathfrak D}(\alpha, \beta, \gamma, \nu)]^{2d}={\bf 1}.
\end{eqnarray}
\item The trace of ${\mathfrak D}(\alpha, \beta, 0, \nu)$ is
\begin{eqnarray}
&&(\alpha, \beta)\ne (0,0)\;\Rightarrow\;{\rm Tr}[{\mathfrak D}(\alpha, \beta, 0, \nu)]=\delta (\nu,1)\nonumber\\
&&(\alpha, \beta)= (0,0)\;\Rightarrow\;{\rm Tr}[{\mathfrak D}(0,0, 0, \nu)]=d.
\end{eqnarray}
\item
The ${\mathfrak D}(\alpha, \beta, \gamma, \nu)$ act on position and momentum states as follows:
\begin{eqnarray}\label{190}
&&{\mathfrak D}(\alpha, \beta, \gamma,\nu)\ket{X;j}=\omega[2^{-1}\alpha\beta+\alpha j(-1)^\nu+\gamma]\ket{X;(-1)^\nu j+\beta}\nonumber\\
&&{\mathfrak D}(\alpha, \beta, \gamma,\nu)\ket{P;j}=\omega[-2^{-1}\alpha\beta-\beta j(-1)^\nu+\gamma]\ket{P;(-1)^\nu j+\alpha}.
\end{eqnarray}
\end{itemize}
\begin{proposition}\label{pro123}
\mbox{}
\begin{itemize}
\item[(1)]
The $2d^3$ displacement-parity operators with multiplication, form a non-Abelian group which we call Heisenberg-Weyl-parity group and denote as $HWP(d)$.
\item[(2)]
The $d^3$ operators ${\mathfrak D}(\alpha, \beta, \gamma, 0)$ form a normal subgroup of $HWP(d)$, isomorphic to the Heisenberg-Weyl group $HW(d)$.
\item[(3)]
$HWP(d)$ is the semidirect product
\begin{eqnarray}\label{semi}
HWP(d)=HW(d)\rtimes {\mathbb Z}(2).
\end{eqnarray}

\end{itemize}
\end{proposition}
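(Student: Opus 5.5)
The plan is to derive an explicit multiplication law for the displacement-parity operators, and then read off (1)--(3) from it in the same way as for the dihedral group. Using the definition ${\mathfrak D}(\alpha,\beta,\gamma,\nu)=D(\alpha,\beta,\gamma){\mathfrak P}^\nu$ and the relation ${\mathfrak P}^\nu D(\alpha,\beta,\gamma){\mathfrak P}^\nu=D[(-1)^\nu\alpha,(-1)^\nu\beta,\gamma]$, I would move ${\mathfrak P}^{\nu_1}$ past the second displacement operator and then apply Eq.(\ref{199F}). This gives
\begin{eqnarray}
&&{\mathfrak D}(\alpha_1,\beta_1,\gamma_1,\nu_1){\mathfrak D}(\alpha_2,\beta_2,\gamma_2,\nu_2)=D(\alpha_1,\beta_1,\gamma_1)D[(-1)^{\nu_1}\alpha_2,(-1)^{\nu_1}\beta_2,\gamma_2]{\mathfrak P}^{\nu_1+\nu_2}\nonumber\\
&&={\mathfrak D}[\alpha_1+(-1)^{\nu_1}\alpha_2,\beta_1+(-1)^{\nu_1}\beta_2,\gamma_1+\gamma_2+(-1)^{\nu_1}2^{-1}(\alpha_1\beta_2-\alpha_2\beta_1),\nu_1+\nu_2],
\end{eqnarray}
where $\nu_1+\nu_2$ is taken mod $2$, which is legitimate because ${\mathfrak P}^2={\bf 1}$. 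This is the analogue of Eq.(\ref{abc}).

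For (1), closure follows from this law. The identity is ${\mathfrak D}(0,0,0,0)$, and inverses are given by Eq.(\ref{inv}). Associativity is inherited from operator multiplication. To count the elements as exactly $2d^3$, I must check that the operators are distinct. For $\nu$ fixed, the operators with different $(\alpha,\beta,\gamma)$ are distinct by Eq.(\ref{190}): the action on $\ket{X;j}$ determines $\beta$, then $\alpha$ (from the $j$-dependence), and then $\gamma$. An operator with $\nu=0$ cannot equal one with $\nu=1$, since otherwise ${\mathfrak P}$ would belong to $HW(d)$; this is ruled out because ${\mathfrak D}(\alpha,\beta,\gamma,1)$ sends $\ket{X;j}$ to a multiple of $\ket{X;-j+\beta}$, while a displacement sends it to a multiple of $\ket{X;j+\beta'}$, and $-j+\beta=j+\beta'$ cannot hold for all $j$ when $d\ge3$ is odd. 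To show the group is non-Abelian, it suffices to exhibit $Z{\mathfrak P}\neq{\mathfrak P}Z$, since ${\mathfrak P}Z{\mathfrak P}=Z^{-1}\ne Z$ for odd $d$.

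For (2), the $\nu=0$ elements are exactly the $D(\alpha,\beta,\gamma)$, which form the group $HW(d)$. For normality I would compute the conjugate ${\mathfrak D}(a,b,c,\nu)D(\alpha,\beta,\gamma){\mathfrak D}(a,b,c,\nu)^{-1}$ using the multiplication law. The $\nu$ index of the result adds up to $\nu+0+\nu=0$ mod $2$, so the conjugate lies again in the $\nu=0$ set. Alternatively, normality follows because this subgroup has index $2$.

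For (3), I would follow the three steps of the dihedral proof. First, $HW(d)$ is a normal subgroup, by (2). Second, $\{{\bf 1},{\mathfrak P}\}={\mathfrak D}(0,0,0,\nu)$ is a subgroup isomorphic to ${\mathbb Z}(2)$. Third, every element factorises as $D(\alpha,\beta,\gamma){\mathfrak P}^\nu$, and $HW(d)\cap\{{\bf 1},{\mathfrak P}\}=\{{\bf 1}\}$, which is the distinctness fact already shown in (1). The conjugation action is the automorphism $D(\alpha,\beta,\gamma)\mapsto D(-\alpha,-\beta,\gamma)$, which confirms the semidirect product structure.

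The main obstacle is the bookkeeping of the $\gamma$ phase in the multiplication law. In particular one must check that the sign $(-1)^{\nu_1}$ correctly multiplies the symplectic term; the rest is routine.
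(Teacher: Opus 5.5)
Your proposal is correct and is essentially the paper's proof: the composition law ${\mathfrak D}(A,B,\Gamma,\nu_1+\nu_2)$ with $\Gamma=\gamma_1+\gamma_2+2^{-1}(-1)^{\nu_1}(\alpha_1\beta_2-\alpha_2\beta_1)$ gives closure, the identity, inverses and associativity. From there the $\nu=0$ operators are shown to form a normal subgroup, and the semidirect product is read off by the same three steps as for $\Delta_d$. Your distinctness check (which justifies the count $2d^3$ and ${\mathfrak P}\notin HW(d)$) and your $\nu$-index argument for normality are more careful than the paper, whose explicit conjugation formula is wrong for $\nu_1=1$: the conjugate of $D(\alpha,\beta,\gamma)$ by ${\mathfrak D}(\alpha_1,\beta_1,\gamma_1,1)$ is $D(-\alpha,-\beta,\gamma+\alpha\beta_1-\alpha_1\beta)$, although this does not affect the conclusion.
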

\begin{proof}
\begin{itemize}
\item[(1)]
We first show closure under multiplication:
\begin{eqnarray}\label{46}
&&{\mathfrak D}(\alpha_1, \beta_1, \gamma_1, \nu_1){\mathfrak D}(\alpha_2, \beta _2, \gamma_2,\nu_2)=
{D}(\alpha_1, \beta_1, \gamma_1){\mathfrak P}^{\nu_1}{D}(\alpha_2, \beta _2, \gamma_2){\mathfrak P}^{\nu_2}\nonumber\\&&=
{D}(\alpha_1, \beta_1, \gamma_1){D}[(-1)^{\nu_1}\alpha_2, (-1)^{\nu_1}\beta _2, \gamma_2]{\mathfrak P}^{\nu_1+\nu_2}={\mathfrak D}(A,B,\Gamma, \nu_1+\nu_2),
\end{eqnarray}
where
\begin{eqnarray}
&&A=\alpha_1+(-1)^{\nu_1}\alpha_2;\;\;\;B=\beta_1+(-1)^{\nu_1}\beta_2;\;\;\;\nu_1,\nu_2=0,1\nonumber\\
&&\Gamma= \gamma_1+\gamma_2+2^{-1}(-1)^{\nu_1}(\alpha_1\beta _2-\alpha_2\beta_1).
\end{eqnarray}
When $\nu_1=\nu_2=0$ this is simply the multiplication rule in Eq(\ref{199F}).
Associativity holds. The unit element is ${\cal D}(0,0,0,0)={\bf 1}$. Inverses exist and are given in Eq.(\ref{inv}).
This proves that the $2d^3$ displacement-parity operators with multiplication, form a group.
\item[(2)]

The ${\mathfrak D}(\alpha, \beta, \gamma, 0)=D(\alpha, \beta, \gamma)$ are closed under multiplication and they are elements of the Heisenberg-Weyl group $HW(d)$, which
is a subgroup of $HWP(d)$. Also
\begin{eqnarray}
{\mathfrak D}(\alpha_1, \beta_1, \gamma_1, \nu_1){\mathfrak D}(\alpha, \beta , \gamma,0)[{\mathfrak D}(\alpha_1, \beta_1, \gamma_1, \nu_1)]^{-1}={\mathfrak D}(A,B,\Gamma,0),
\end{eqnarray}
where
\begin{eqnarray}
&&A=\alpha+2\nu_1(\alpha_1-\alpha);\;\;\;B=\beta+2\nu_1(\beta_1-\beta);\;\;\;\nu_1=0,1\nonumber\\
&&\Gamma=\gamma+(1-\nu_1)(\alpha_1\beta-\alpha\beta_1)
\end{eqnarray}
This proves that the Heisenberg-Weyl group $HW(d)$ is a normal subgroup of $HWP(d)$.
\item[(3)]
We have proved that $HW(d)$ is a normal subgroup of $HWP(d)$.
Also the 
\begin{eqnarray}
\{{\mathfrak D}(0,0,0,0)={\bf 1}, {\mathfrak D}(0,0,0,1)={\mathfrak P}\}\simeq{\mathbb Z}(2),
\end{eqnarray}
is a subgroup of $HWP(d)$. In addition to that every element of $HWP(d)$ can be written as a product of an element of $HW(d)$ and an element of ${\mathbb Z}(2)$:
\begin{eqnarray}
{\mathfrak D}(\alpha, \beta, \gamma, \nu)=D(\alpha, \beta, \gamma){\mathfrak P}^\nu.
\end{eqnarray}
Furthermore $HW(d)\cap {\mathbb Z}(2)=\{\bf 1\}$.
This proves that $HWP(d)$ is the semidirect product in Eq.(\ref{semi}).

\end{itemize}
\end{proof}
We note that $\Delta_d(X)$ and $\Delta_d(Z)$ are subgroups of $HWP(d)$.

We have enlarged the group $HW(d)$ of displacements in phase space, with parity transformations.
The semidirect product makes the parity operator an integral part of the Heisenberg-Weyl formalism.

Physically we can implement stroboscopically these transformations, with a Hamiltonian that is the principal logarithm of ${\mathfrak D}(\alpha, \beta, \gamma, \nu)$:
\begin{eqnarray}\label{HA}
h=\frac{d}{2\pi i}\log {\mathfrak D}(\alpha, \beta, \gamma, \nu).
\end{eqnarray}
The time evolution operator is  stroboscopically
\begin{eqnarray}\label{10B}
\exp(iht)={\mathfrak D}(\alpha, \beta, \gamma, \nu);\;\;\;t=\frac{2\pi}{d}+4\pi N;\;\;\;N=0,1,...
\end{eqnarray}

\begin{proposition}\label{pro1}
The following are properties of the $2d^2$ operators ${\mathfrak D}(\alpha, \beta,0,\nu) $ which are elements of the $HWP(d)/{\mathbb Z}(d)$ group 
(that contains the  ${\mathfrak D}(\alpha, \beta,\gamma,\nu) $ modulo a phase factor $\omega(\gamma)$):

\begin{itemize}
\item[(1)]
The ${\mathfrak D}(\alpha, \beta,0,\nu) $ is related to ${\mathfrak D}(\gamma, \delta, 0,\nu+1)$ through a Fourier transform:
\begin{eqnarray}\label{WW2}
\frac {1}{d}\sum _{\alpha ,\beta } {\mathfrak D}(\alpha, \beta,0,\nu) \omega[2^{-1}(-1)^\nu(\beta \gamma -\alpha \delta)]={\mathfrak D}(\gamma, \delta, 0,\nu+1);\;\;\;\nu\in{\mathbb Z}(2).
\end{eqnarray}
\item[(2)]
If ${\mathfrak D}_{ij}(\alpha, \beta,0,\nu)=\bra{X;i}{\mathfrak D}(\alpha, \beta,0,\nu) \ket{X;j}$, then
\begin{eqnarray}\label{350}
\frac {1}{2d}\sum_{\nu=0}^1\sum _{\alpha,\beta}{\mathfrak D}_{ij}[(-1)^{\nu+1}\alpha, (-1)^{\nu+1}\beta,0,\nu]{\mathfrak D}_{k\ell}(\alpha, \beta,0,\nu)=\delta_{i\ell}\delta_{jk}.
\end{eqnarray}

\item[(3)]
For any operator $\Theta$ with ${\rm Tr}(\Theta)\ne 0$, we get
\begin{eqnarray}\label{TR55}
\frac {1}{2d}\sum_{\nu=0}^1\sum _{\alpha, \beta } D(\alpha, \beta,0, \nu) \frac{\Theta}{{\rm Tr}(\Theta)} [D(\alpha, \beta,0, \nu)]^\dagger={\bf 1}.
\end{eqnarray}
\item[(4)]
The following  marginal relations hold for ${\mathfrak D}(\alpha, \beta,0,\nu)$:
\begin{eqnarray}\label{121}
&&\frac{1}{d}\sum_{\beta}{\mathfrak D}(\alpha, \beta,0,\nu)=\ket{P;2^{-1}\alpha}\bra{P;2^{-1}\alpha}{\mathfrak P}^{\nu+1};\;\;\;\nu\in{\mathbb Z}(2)\nonumber\\
&&\frac{1}{d}\sum_{\alpha}{\mathfrak D}(\alpha, \beta,0,\nu)=\ket{X;2^{-1}\beta}\bra{X;2^{-1}\beta}{\mathfrak P}^{\nu+1}\nonumber\\
&&\frac{1}{d}\sum_{\alpha,\beta}{\mathfrak D}(\alpha, \beta,0,\nu)={\mathfrak P}^{\nu+1}.
\end{eqnarray}
Additional marginal relations are
\begin{eqnarray}\label{70}
&&\frac{1}{2d}\sum_{\nu=0}^1\sum_{\alpha,\beta}{\mathfrak D}(\alpha, \beta,0,\nu)=\varpi_0\nonumber\\
&&\frac{1}{2d}\sum_{\nu=0}^1\sum_{\alpha,\beta}(-1)^{\nu+1} {\mathfrak D}(\alpha, \beta,0,\nu)=\varpi_1,
\end{eqnarray}
where $\varpi_0, \varpi_1$ are the projectors in Eq.(\ref{158}).

\end{itemize}
\end{proposition}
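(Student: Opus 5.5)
The plan is to reduce every statement to the case $\nu=0$, where it is a known property of the displacement operators (Eqs.(\ref{KKLL}), (\ref{KKLL1}), (\ref{DFF}), (\ref{TR}), (\ref{AX1})). For $\nu=1$ we use the two identities
\begin{eqnarray*}
{\mathfrak D}(\alpha,\beta,0,1)=D(\alpha,\beta,0){\mathfrak P}={\mathfrak P}(2^{-1}\alpha,2^{-1}\beta),
\end{eqnarray*}
which follow from Eq.(\ref{201}). Since $d$ is odd, $\alpha\mapsto 2^{-1}\alpha$ is a bijection of ${\mathbb Z}(d)$, so sums over $(\alpha,\beta)$ may be freely re-indexed. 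Where a step is unclear, we can fall back on checking matrix elements between $\bra{X;k}$ and $\ket{X;j}$ with Eq.(\ref{190}).

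For (1), take $\nu=0$ first. The left hand side is $\frac1d\sum D(\alpha,\beta,0)\,\omega[\beta(2^{-1}\gamma)-\alpha(2^{-1}\delta)]$. By Eq.(\ref{KKLL}) this equals ${\mathfrak P}(2^{-1}\gamma,2^{-1}\delta)={\mathfrak D}(\gamma,\delta,0,1)$. For $\nu=1$ we write ${\mathfrak D}(\alpha,\beta,0,1)={\mathfrak P}(\alpha',\beta')$ with $\alpha=2\alpha'$ and $\beta=2\beta'$. The phase then becomes $\omega(-\beta'\gamma+\alpha'\delta)$, and Eq.(\ref{KKLL1}) gives $D(\gamma,\delta,0)={\mathfrak D}(\gamma,\delta,0,0)={\mathfrak D}(\gamma,\delta,0,\nu+1)$, since $\nu+1$ is taken mod $2$. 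I expect the main obstacle to be this bookkeeping of the factors $2^{-1}$ and of the signs $(-1)^\nu$ in the exponent.

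For (2), split the sum over $\nu$. The $\nu=0$ part is $\sum_{\alpha,\beta}D_{ij}(-\alpha,-\beta,0)D_{k\ell}(\alpha,\beta,0)$. After relabelling $\alpha\to-\alpha$, $\beta\to-\beta$ it equals $d\,\delta_{i\ell}\delta_{jk}$ by the first equality of Eq.(\ref{DFF}). The $\nu=1$ part is $\sum_{\alpha,\beta}{\mathfrak P}_{ij}(2^{-1}\alpha,2^{-1}\beta){\mathfrak P}_{k\ell}(2^{-1}\alpha,2^{-1}\beta)$. After re-indexing it also equals $d\,\delta_{i\ell}\delta_{jk}$, by the second equality of Eq.(\ref{DFF}). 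The total $2d\,\delta_{i\ell}\delta_{jk}$ is divided by $2d$.

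For (3), the operators in Eq.(\ref{TR55}) are understood as ${\mathfrak D}(\alpha,\beta,0,\nu)$. The $\nu=0$ half is exactly Eq.(\ref{TR}) and contributes $d\,{\bf 1}$. For $\nu=1$ we have
\begin{eqnarray*}
{\mathfrak D}(\alpha,\beta,0,1)\,\Theta\,[{\mathfrak D}(\alpha,\beta,0,1)]^\dagger=D(\alpha,\beta,0)\,({\mathfrak P}\Theta{\mathfrak P})\,[D(\alpha,\beta,0)]^\dagger .
\end{eqnarray*}
Since ${\rm Tr}({\mathfrak P}\Theta{\mathfrak P})={\rm Tr}\,\Theta$, Eq.(\ref{TR}) applied to ${\mathfrak P}\Theta{\mathfrak P}$ gives another $d\,{\bf 1}$, and dividing by $2d$ gives ${\bf 1}$.

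For (4), the $\nu=0$ case of Eq.(\ref{121}) is Eq.(\ref{AX1}) verbatim, because ${\mathfrak P}^{1}={\mathfrak P}$. For $\nu=1$ we multiply Eq.(\ref{AX1}) on the right by ${\mathfrak P}$ and use ${\mathfrak P}^2={\bf 1}={\mathfrak P}^{0}={\mathfrak P}^{\nu+1}$. For Eq.(\ref{70}) we sum the third relation of Eq.(\ref{121}) over $\nu$:
\begin{eqnarray*}
\frac{1}{2d}\sum_{\nu=0}^1\sum_{\alpha,\beta}{\mathfrak D}(\alpha,\beta,0,\nu)=\frac12({\mathfrak P}+{\bf 1}),\;\;\;
\frac{1}{2d}\sum_{\nu=0}^1\sum_{\alpha,\beta}(-1)^{\nu+1}{\mathfrak D}(\alpha,\beta,0,\nu)=\frac12({\bf 1}-{\mathfrak P}).
\end{eqnarray*}
By Eq.(\ref{158}) these are $\varpi_0$ and $\varpi_1$ respectively.
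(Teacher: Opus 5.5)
Your treatment of (2), (3), (4) and of the $\nu=0$ case of (1) is correct and follows the paper's route (reduction to Eqs.~(\ref{KKLL}), (\ref{DFF}), (\ref{TR}), (\ref{AX1})). In (3) you apply Eq.~(\ref{TR}) to ${\mathfrak P}\Theta{\mathfrak P}$, and in (4) you right-multiply Eq.~(\ref{AX1}) by ${\mathfrak P}$. These are harmless variants of the paper's conjugation of Eq.~(\ref{TR}) by ${\mathfrak P}$ and its appeal to Eq.~(\ref{AX2}).

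The step that fails is the $\nu=1$ case of (1). Your sum $\frac{1}{d}\sum_{\alpha',\beta'}{\mathfrak P}(\alpha',\beta')\,\omega(\alpha'\delta-\beta'\gamma)$ is correct, but you matched it to Eq.~(\ref{KKLL1}) by the look of the letters. In Eq.~(\ref{KKLL1}) the summed pair is the argument of ${\mathfrak P}$ and the free pair $(\alpha,\beta)$ is the argument of $D$. There the first ${\mathfrak P}$-argument is multiplied by $-\beta$ and the second by $+\alpha$. In your sum the first ${\mathfrak P}$-argument is multiplied by $+\delta$ and the second by $-\gamma$. So Eq.~(\ref{KKLL1}) applies with $(\alpha,\beta)=(-\gamma,-\delta)$ and yields $D(-\gamma,-\delta,0)={\mathfrak D}(-\gamma,-\delta,0,0)$, not ${\mathfrak D}(\gamma,\delta,0,0)$.

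This cannot be patched, because the $\nu=1$ case of Eq.~(\ref{WW2}) is false as printed. Take $(\gamma,\delta)=(1,0)$. By the second relation in Eq.~(\ref{AX1}), $\frac{1}{d}\sum_\alpha D(\alpha,\beta,0){\mathfrak P}=\ket{X;2^{-1}\beta}\bra{X;2^{-1}\beta}$, hence
\[
\frac{1}{d}\sum_{\alpha,\beta}{\mathfrak D}(\alpha,\beta,0,1)\,\omega(-2^{-1}\beta)=\sum_{j}\omega(-j)\ket{X;j}\bra{X;j}=Z^{-1},
\]
while the right-hand side of Eq.~(\ref{WW2}) is ${\mathfrak D}(1,0,0,0)=Z\neq Z^{-1}$. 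What your reduction, done correctly, proves is the identity without the factor $(-1)^\nu$:
\[
\frac{1}{d}\sum_{\alpha,\beta}{\mathfrak D}(\alpha,\beta,0,\nu)\,\omega[2^{-1}(\beta\gamma-\alpha\delta)]={\mathfrak D}(\gamma,\delta,0,\nu+1),\qquad \nu\in{\mathbb Z}(2).
\]
This reflects that the symplectic kernel is its own inverse:
\[
\frac{1}{d^2}\sum_{\gamma,\delta}\omega[2^{-1}(\beta\gamma-\alpha\delta)]\,\omega[2^{-1}(\delta a-\gamma b)]=\delta_{\alpha a}\delta_{\beta b}.
\]
Swapping the roles of the summed and free pairs already complex-conjugates the kernel, so the inverse needs no extra sign. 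The paper's proof makes the same unchecked assertion that the $\nu=1$ case ``reduces to'' Eq.~(\ref{KKLL1}), so you have reproduced its error rather than introduced one. The slip propagates to Eqs.~(\ref{WW22}), (\ref{A24}) and (\ref{WW3}).
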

\begin{proof}
In the proofs below it is important that $2^{-1}$ exists (because $d$ is odd), and therefore when $\alpha$ takes all values in ${\mathbb Z}(d)$ then $2^{-1}\alpha$ (and also the $2\alpha$) take all values in ${\mathbb Z}(d)$.
\begin{itemize}
\item[(1)]
For $\nu=0,1$ Eq.(\ref{WW2}) reduces to Eqs (\ref{KKLL}),(\ref{KKLL1}) which have been given earlier.
We can also give a direct proof by calculating the matrix elements of both sides in the position basis, using Eq.(\ref{190}).
\item[(2)]
We rewrite the two expansions in Eq.(\ref{DFF}), in a unified way as
\begin{eqnarray}\label{DFF1}
\frac {1}{d}\sum _{\alpha,\beta}{\mathfrak D}_{ij}[(-1)^{\nu+1}\alpha, (-1)^{\nu+1}\beta,0,\nu]{\mathfrak D}_{k\ell}(\alpha, \beta,0,\nu)=\delta_{i\ell}\delta_{jk}.
\end{eqnarray}
This equation holds for both $\nu=0$ and $\nu=1$, and adding them with equal weights, we get Eq.(\ref{350}).
We get a more general relation if we use weights $\lambda$ and $1-\lambda$:
\begin{eqnarray}\label{DFF2}
&&\frac {1}{d}\sum_{\nu=0}^1\left [\lambda_\nu \sum _{\alpha,\beta}{\mathfrak D}_{ij}[(-1)^{\nu+1}\alpha, (-1)^{\nu+1}\beta,0,\nu]{\mathfrak D}_{k\ell}(\alpha, \beta,0,\nu)\right]=\delta_{i\ell}\delta_{jk}\nonumber\\
&&\lambda_0=\lambda;\;\;\;\lambda_1=1-\lambda.
\end{eqnarray}

\item[(3)]

We multiply both sides of Eq.(\ref{TR}) by the parity operator ${\mathfrak P}$, and we get
\begin{eqnarray}\label{TR14}
\frac {1}{d}\sum _{\alpha, \beta } {\mathfrak P}D(\alpha, \beta,0) \frac{\Theta}{{\rm Tr}(\Theta)} [D(\alpha, \beta,0)]^\dagger{\mathfrak P}={\bf 1}.
\end{eqnarray}
But ${\mathfrak P}D(\alpha, \beta,0)={\mathfrak D}(-\alpha, -\beta, 0, 1)$ and 
$[D(\alpha, \beta,0)]^\dagger{\mathfrak P}=[{\mathfrak D}(-\alpha, -\beta, 0, 1)]^\dagger$ (Eq.(\ref{inv0})) and therefore
\begin{eqnarray}\label{TR17}
\frac {1}{d}\sum _{\alpha, \beta }  {\mathfrak D}(\alpha, \beta, 0, 1)\frac{\Theta}{{\rm Tr}(\Theta)} [{\mathfrak D}(\alpha, \beta, 0, 1)]^\dagger={\bf 1}.
\end{eqnarray}
We also rewrite Eq.(\ref{TR}) as
\begin{eqnarray}\label{TR14}
\frac {1}{d}\sum _{\alpha, \beta } {\mathfrak D}(\alpha, \beta,0,0) \frac{\Theta}{{\rm Tr}(\Theta)} [{\mathfrak D}(\alpha, \beta,0,0)]^\dagger={\bf 1}.
\end{eqnarray}
We then add the last two equations and prove Eq.(\ref{TR55}).

\item[(4)]
For $\nu=0,1$ Eq.(\ref{121}) reduces to Eqs (\ref{AX1}),(\ref{AX2}) which have been given earlier.
We can also give a direct proof by calculating the matrix elements of both sides in the position basis, using Eq.(\ref{190}).

The last of Eqs(\ref{121})gives
\begin{eqnarray}
\frac{1}{d}\sum_{\alpha,\beta}{\mathfrak D}(\alpha, \beta,0,0)={\mathfrak P};\;\;\;\frac{1}{d}\sum_{\alpha,\beta}{\mathfrak D}(\alpha, \beta,0,1)={\bf 1}.
\end{eqnarray}
If we add and subtract these equations, we prove Eq.(\ref{70}).

\end{itemize}
\end{proof}

\subsection{$HWP(d)$ as solvable group and commutators that perform  displacements and parity transformations along loops in phase space}

\begin{lemma}
\mbox{}
\begin{itemize}
\item[(1)]
The commutator of two elements of $HWP(d)$ is
\begin{eqnarray}\label{123V}
{\cal L}_3^{(1)}(\alpha_1, \beta_1, \nu_1|\alpha_2, \beta_2, \nu_2)&=&[{\mathfrak D}(\alpha_1, \beta_1, \gamma_1, \nu_1),{\mathfrak D}(\alpha_2, \beta _2, \gamma_2,\nu_2)]=D(A,B,\Gamma)
\end{eqnarray}
Here
\begin{eqnarray}\label{WE1}
&&A=2{\cal A}(\alpha_1,\nu_1|\alpha_2,\nu_2);\;\;\;{\cal A}(\alpha_1,\nu_1|\alpha_2,\nu_2)=\alpha_1\nu_2-\alpha_2\nu_1\in{\mathbb Z}(d)\nonumber\\
&&{B}=2{\cal A}(\beta_1,\nu_1|\beta_2,\nu_2);\;\;\;{\cal A}(\beta_1,\nu_1|\beta_2,\nu_2)=\beta_1\nu_2-\beta_2\nu_1\in{\mathbb Z}(d),
\end{eqnarray}
and
\begin{eqnarray}\label{WE2}
&&\Gamma=\lambda(\nu_1,\nu_2){\cal A}(\alpha_1,\beta_1|\alpha_2,\beta_2)=\lambda(\nu_1,\nu_2)(\alpha_1\beta_2-\alpha_2\beta_1)\in{\mathbb Z}(d)\nonumber\\
&&\lambda(0,0)=1;\;\;\;\lambda(0,1)=\lambda(1,0)=\lambda(1,1)=-1.
\end{eqnarray}
${\cal L}_3^{(1)}(\alpha_1, \beta_1, \nu_1|\alpha_2, \beta_2, \nu_2)$ is an element of $HW(d)$ and does not depend on $\gamma_1, \gamma_2$.
In the special case the $\nu_1=\nu_2=0$, we get ${\cal L}_3^{(1)}(\alpha_1, \beta_1, 0|\alpha_2, \beta_2, 0)=\omega(\alpha_1\beta_2-\alpha_2\beta_1){\bf 1}$.
\item[(2)]
Since ${\cal L}_3^{(1)}(\alpha_1, \beta_1, \nu_1|\alpha_2, \beta_2, \nu_2)$ is an element of the non-Abelian group $HW(d)$, we also consider the `commutator of commutators' (Eq.(\ref{160})):
 \begin{eqnarray}\label{140}
 &&{\cal L}^{(2)}_3(\alpha_1, \beta_1, \nu_1;\alpha_2, \beta_2, \nu_2|\alpha_3, \beta_3, \nu_3;\alpha_4, \beta_4, \nu_4)\nonumber\\
&& =[{\cal L}_3^{(1)}(\alpha_1, \beta_1, \nu_1|\alpha_2, \beta_2, \nu_2),{\cal L}_3^{(1)}(\alpha_3, \beta_3, \nu_3|\alpha_4, \beta_4, \nu_4)]=\omega(\Phi){\bf 1},
\end{eqnarray}
where
\begin{eqnarray}
\Phi=4[{\cal A}(\alpha_1,\nu_1|\alpha_2,\nu_2){\cal A}(\beta_3,\nu_3|\beta_4,\nu_4)-{\cal A}(\alpha_3,\nu_3|\alpha_4,\nu_4){\cal A}(\beta_1,\nu_1|\beta_2,\nu_2)].
\end{eqnarray}
If $\nu_1=\nu_2=0$ or $\nu_3=\nu_4=0$, then $\Phi=0$. 
\end{itemize}
\end{lemma}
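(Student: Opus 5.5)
The plan is to compute the commutator directly from the multiplication law Eq.(\ref{46}) and the inverse formula Eq.(\ref{inv}), splitting into the four cases $(\nu_1,\nu_2)\in\{0,1\}^2$. This case split is natural because the phase factor $\lambda(\nu_1,\nu_2)$ is not a polynomial in the $\nu$'s. In each case I write ${\mathfrak D}(\alpha_i,\beta_i,\gamma_i,\nu_i)=D_i{\mathfrak P}^{\nu_i}$ with $D_i=D(\alpha_i,\beta_i,\gamma_i)$. I then move every parity operator to the right using ${\mathfrak P}D(\alpha,\beta,\gamma){\mathfrak P}=D(-\alpha,-\beta,\gamma)$, so that each commutator becomes a product of four elements of $HW(d)$. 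Since the total parity exponent of $g h g^{-1} h^{-1}$ is $2\nu_1+2\nu_2\equiv 0$, all the ${\mathfrak P}$'s cancel. This already proves that ${\cal L}_3^{(1)}$ lies in $HW(d)$.

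The products of four elements of $HW(d)$ are then evaluated with Eq.(\ref{199F}) and the $HW(d)$ commutator of Eq.(\ref{199n}).

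\emph{Case $\nu_1=\nu_2=0$.} This is exactly the earlier Lemma, giving $\omega(\alpha_1\beta_2-\alpha_2\beta_1){\bf 1}$.

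\emph{Case $\nu_1=1,\nu_2=0$.} Here $g^{-1}={\mathfrak P}D_1^{-1}$, so
\[
ghg^{-1}h^{-1}=D_1D(-\alpha_2,-\beta_2,\gamma_2)D_1^{-1}D_2^{-1}.
\]
Conjugating by $D_1$ produces the phase $\omega(-(\alpha_1\beta_2-\alpha_2\beta_1))$. The remaining product $D(-\alpha_2,-\beta_2,\gamma_2)D(-\alpha_2,-\beta_2,-\gamma_2)$ equals $D(-2\alpha_2,-2\beta_2,0)$, because the symplectic term vanishes for parallel vectors. This agrees with $A=-2\alpha_2$, $B=-2\beta_2$ and $\lambda=-1$. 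In particular $\gamma_1,\gamma_2$ cancel.

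\emph{Case $\nu_1=0,\nu_2=1$.} This follows by the symmetric computation, or from $[g,h]=[h,g]^{-1}$ in Eq.(\ref{TYU}) together with $[D(A,B,\Gamma)]^{-1}=D(-A,-B,-\Gamma)$. This also explains why ${\cal A}$ is antisymmetric.

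\emph{Case $\nu_1=\nu_2=1$.} Here $g^{-1}=g$ and $h^{-1}=h$, so the commutator is $(gh)^2$. By Eq.(\ref{46}), $gh=D(\alpha_1-\alpha_2,\beta_1-\beta_2,\gamma_1+\gamma_2-2^{-1}(\alpha_1\beta_2-\alpha_2\beta_1))$. Squaring it with Eq.(\ref{199F}) doubles the displacement and doubles the phase, and the symplectic term again vanishes. This gives $A=2(\alpha_1-\alpha_2)$ and $\Gamma=-(\alpha_1\beta_2-\alpha_2\beta_1)$, as claimed; the $\gamma$'s do not cancel automatically here, so the last thing to check is their bookkeeping.

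In every case I will use $(-1)^\nu=1-2\nu$ from Eq.(\ref{40}) to put $A,B$ into the unified form $2(\alpha_1\nu_2-\alpha_2\nu_1)$.

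For part (2), both ${\cal L}_3^{(1)}$ factors are elements $D(A,B,\Gamma)$ and $D(A',B',\Gamma')$ of $HW(d)$. Their commutator is therefore, by Eq.(\ref{199n}), $\omega(AB'-A'B){\bf 1}$, independently of $\Gamma$ and $\Gamma'$. Substituting $A=2{\cal A}(\alpha_1,\nu_1|\alpha_2,\nu_2)$, $B=2{\cal A}(\beta_1,\nu_1|\beta_2,\nu_2)$ and the analogous expressions for the second pair gives $\Phi$ with the factor $4$. If $\nu_1=\nu_2=0$, both ${\cal A}$'s of the first pair vanish, so $A=B=0$ and $\Phi=0$; the same argument applies to the second pair.

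The main obstacle is the phase bookkeeping in the cases with $\nu_1=1$. There the sign $(-1)^{\nu_1}$ in $\Gamma$ of Eq.(\ref{46}) must be tracked carefully, and one must check that the $\gamma_i$ cancel, especially in the $\nu_1=\nu_2=1$ case.
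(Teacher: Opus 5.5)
Your case split is a legitimate reorganisation of the paper's argument. The paper multiplies the four factors once for general $\nu_1,\nu_2$, using Eq.(\ref{46}) and the inverse Eq.(\ref{inv}). It obtains the single prefactor $2^{-1}(-1)^{\nu_1}[1+(-1)^{\nu_2}]-\nu_2$ and then evaluates it at the four values. Your cases $(0,0)$, $(1,0)$, $(0,1)$ are correct, and so is part (2).

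The case $\nu_1=\nu_2=1$, however, rests on a false step. The identity $g^{-1}=g$ fails for $g={\mathfrak D}(\alpha_1,\beta_1,\gamma_1,1)$ unless $\gamma_1=0$. Indeed $g^2=D(\alpha_1,\beta_1,\gamma_1)D(-\alpha_1,-\beta_1,\gamma_1)=\omega(2\gamma_1){\bf 1}$. Hence $g^{-1}=\omega(-2\gamma_1)g={\mathfrak D}(\alpha_1,\beta_1,-\gamma_1,1)$, which is what Eq.(\ref{inv0}) says; Eq.(\ref{inv1}) is only valid at $\gamma=0$.

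With your premise, your own computation gives $(gh)^2=D(2(\alpha_1-\alpha_2),2(\beta_1-\beta_2),2\gamma_1+2\gamma_2-(\alpha_1\beta_2-\alpha_2\beta_1))$. That does not have $\Gamma=-(\alpha_1\beta_2-\alpha_2\beta_1)$ "as claimed". The leftover $2\gamma_1+2\gamma_2$ is not a bookkeeping detail to be tidied later; it is the trace of the wrong inverse, and the commutator you wrote down actually depends on the $\gamma$'s.

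The repair is immediate. Using the correct inverses, $[g,h]=ghg^{-1}h^{-1}=\omega(-2\gamma_1-2\gamma_2)(gh)^2$, and this phase cancels the $2\gamma_1+2\gamma_2$ exactly, leaving $D(2(\alpha_1-\alpha_2),2(\beta_1-\beta_2),-(\alpha_1\beta_2-\alpha_2\beta_1))$.

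A cleaner fix is to note at the outset that ${\mathfrak D}(\alpha,\beta,\gamma,\nu)=\omega(\gamma){\mathfrak D}(\alpha,\beta,0,\nu)$ and that scalars are central, so commutators are unchanged by these phases. You may therefore set $\gamma_1=\gamma_2=0$ from the start. This proves the $\gamma$-independence in all four cases at once. In the last case it also makes $g^2=h^2={\bf 1}$ genuinely true, so $[g,h]=(gh)^2$ becomes legitimate.
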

\begin{proof}
\mbox{}
\begin{itemize}
\item[(1)]
The commutator of two elements of $HWP(d)$ is
\begin{eqnarray}\label{123}
{\cal L}_3^{(1)}(\alpha_1, \beta_1, \nu_1|\alpha_2, \beta_2, \nu_2)&=&{\mathfrak D}(\alpha_1, \beta_1, \gamma_1, \nu_1){\mathfrak D}(\alpha_2, \beta _2, \gamma_2,\nu_2)\nonumber\\&\times&
[{\mathfrak D}(\alpha_1, \beta_1, \gamma_1, \nu_1)]^{-1}[{\mathfrak D}(\alpha_2, \beta _2, \gamma_2,\nu_2)]^{-1}.
\end{eqnarray}
The result for the product of the first two terms is given in Eq.(\ref{46}):
\begin{eqnarray}\label{46}
&&{\mathfrak D}(\alpha_1, \beta_1, \gamma_1, \nu_1){\mathfrak D}(\alpha_2, \beta _2, \gamma_2,\nu_2)={\mathfrak D}(A_1,B_1,\Gamma_1, \nu_1+\nu_2),
\end{eqnarray}
where
\begin{eqnarray}
&&A_1=\alpha_1+(-1)^{\nu_1}\alpha_2;\;\;\;B_1=\beta_1+(-1)^{\nu_1}\beta_2\nonumber\\
&&\Gamma_1= \gamma_1+\gamma_2+2^{-1}(-1)^{\nu_1}(\alpha_1\beta _2-\alpha_2\beta_1).
\end{eqnarray}
Also taking into account Eq.(\ref{inv}) for the inverse, we calculate the product of the first three terms in Eq.(\ref{123}):
\begin{eqnarray}
{\mathfrak D}(A_1,B_1,\Gamma_1, \nu_1+\nu_2){\mathfrak D}[(-1)^{\nu_1+1}\alpha_1, (-1)^{\nu_1+1}\beta_1, -\gamma_1, \nu_1]
={\mathfrak D}(A_2,B_2,\Gamma_2, \nu_2)
\end{eqnarray}
where
\begin{eqnarray}
&&A_2=A_1+(-1)^{\nu_1+\nu_2}(-1)^{\nu_1+1}\alpha_1=[1-(-1)^{\nu_2}]\alpha_1+(-1)^{\nu_1}\alpha_2=2\nu_2\alpha_1+(-1)^{\nu_1}\alpha_2\nonumber\\
&&B_2=B_1+(-1)^{\nu_1+\nu_2}(-1)^{\nu_1+1}\beta_1=[1-(-1)^{\nu_2}]\beta_1+(-1)^{\nu_1}\beta_2=2\nu_2\beta_1+(-1)^{\nu_1}\beta_2.
\end{eqnarray}
We used here Eq(\ref{40}) which is valid for $\nu=0,1$. Also
\begin{eqnarray}
\Gamma_2&=& \gamma_2+2^{-1}(-1)^{\nu_1}(\alpha_1\beta _2-\alpha_2\beta_1)+2^{-1}(-1)^{\nu_1+\nu_2}[A_1(-1)^{\nu_1+1}\beta_1-B_1(-1)^{\nu_1+1}\alpha_1]\nonumber\\
&=&\gamma_2+2^{-1}(-1)^{\nu_1}[1+(-1)^{\nu_2}](\alpha_1\beta _2-\alpha_2\beta_1).
\end{eqnarray}
Then we show that the product of all four terms in Eq.(\ref{123}) is:
\begin{eqnarray}
{\mathfrak D}(A_2,B_2,\Gamma_2, \nu_2){\mathfrak D}[(-1)^{\nu_2+1}\alpha_2, (-1)^{\nu_2+1}\beta_2, -\gamma_2, \nu_2]={\mathfrak D}(A,B,\Gamma, 0)
\end{eqnarray}
where
\begin{eqnarray}
&&A=A_2+(-1)^{\nu_2}(-1)^{\nu_2+1}\alpha_2=A_2-\alpha_2=2\nu_2\alpha_1-[1-(-1)^{\nu_1}]\alpha_2=2\nu_2\alpha_1-2\nu_1\alpha_2\nonumber\\
&&B=B_2+(-1)^{\nu_2}(-1)^{\nu_2+1}\beta_2=B_2-\beta_2=2\nu_2\beta_1-[1-(-1)^{\nu_1}]\beta_2=2\nu_2\beta_1-2\nu_1\beta_2.
\end{eqnarray}
This confirms Eq.(\ref{WE1}). Also
\begin{eqnarray}
\Gamma&=&\Gamma_2- \gamma_2+2^{-1}(-1)^{\nu_2}[A_2(-1)^{\nu_2+1}\beta _2-B_2(-1)^{\nu_2+1}\alpha _2].
\end{eqnarray}
But
\begin{eqnarray}
A_2(-1)^{\nu_2+1}\beta _2-B_2(-1)^{\nu_2+1}\alpha _2=2\nu_2(-1)^{\nu_2+1}(\alpha_1\beta_2-\alpha_2\beta_1),
\end{eqnarray}
and therefore
\begin{eqnarray}
\Gamma=\left \{2^{-1}(-1)^{\nu_1}[1+(-1)^{\nu_2}]-\nu_2\right\}(\alpha_1\beta _2-\alpha_2\beta_1).
\end{eqnarray}
We easily check that the prefactor takes the values given in Eq.(\ref{WE2}).

We note that $A$ is twice the area which is the magnitude of the exterior product of the vectors $(\alpha_1,\nu_1), (\alpha_2,\nu_2)$ in the phase space $PS(Z,{\mathfrak P})$.
$B$  is twice the area which is the magnitude of exterior product of  the vectors $(\beta_1,\nu_1), (\beta_2,\nu_2)$ in the phase space $PS(X,{\mathfrak P})$.
$\Gamma$ is the area which is the magnitude of the exterior product of  the vectors $(\alpha_1,\beta_1), (\alpha_2, \beta_2)$ in the phase space $PS(X,Z)$, times $\lambda(\nu_1,\nu_2)$.

In the special case the $\nu_1=\nu_2=0$, we easily find that ${\cal L}_3^{(1)}(\alpha_1, \beta_1, 0|\alpha_2, \beta_2, 0)=\omega(\alpha_1\beta_2-\alpha_2\beta_1){\bf 1}$.
\item[(2)]
This is proved using Eqs.(\ref{123V}), (\ref{199n}).
In the special case that $\nu_1=\nu_2=0$ we get ${\cal A}(\alpha_1,0|\alpha_2,0)={\cal A}(\beta_1,0|\beta_2,0)=0$ and therefore $\Phi=0$.
We get similar result when $\nu_3=\nu_4=0$. 

\end{itemize}
\end{proof}

${\cal L}_3^{(1)}(\alpha_1, \beta_1, \nu_1|\alpha_2, \beta_2, \nu_2)$ describes transport of quantum states along a loop defined by $(\alpha_1, \beta_1, \nu_1)$ and $(\alpha_2, \beta_2, \nu_2)$ in 
${\mathbb Z}(d)\times {\mathbb Z}(d)\times{\mathbb Z}(2)$
(which we denote as $PS(Z,X,{\mathfrak P})$). 
Special cases of ${\cal L}_3^{(1)}(\alpha_1, \beta_1, \nu_1|\alpha_2, \beta_2, \nu_2)$ are:
\begin{eqnarray}
&&{\cal L}_3^{(1)}(\alpha_1, \beta_1, 0|\alpha_2, \beta_2, 0)={\cal L}_1(\alpha_1, \beta_1|\alpha_2, \beta_2)\nonumber\\
&&{\cal L}_3^{(1)}(\alpha_1, 0, \nu_1|\alpha_2, 0, \nu_2)={\cal L}_2(\alpha_1,  \nu_1|\alpha_2,  \nu_2).
\end{eqnarray}

${\cal L}^{(2)}_3(\alpha_1, \beta_1, \nu_1;\alpha_2, \beta_2, \nu_2|\alpha_3, \beta_3, \nu_3;\alpha_4, \beta_4, \nu_4)$ describes transport in a `loop of four loops' in $PS(Z,X,{\mathfrak P})$.
The first loop is  defined by $(\alpha_1, \beta_1, \nu_1)$ and $(\alpha_2, \beta_2, \nu_2)$, the second loop by $(\alpha_3, \beta_3, \nu_3)$ and $(\alpha_4, \beta_4, \nu_4)$, the third loop
is the first loop in the opposite direction, and the fourth loop is the second loop in the opposite direction. 

Both ${\cal L}_3^{(1)}$ and ${\cal L}_3^{(2)}$ are products of displacement operators, and therefore they are displacement operators. 
${\cal L}_3^{(1)}$ belongs in $HW(d)$ and ${\cal L}_3^{(2)}$ in ${\cal G}_D({\bf 1})\cong{\mathbb Z}(d)$.
Physically, each displacement operator can be implemented stroboscopically as a time evolution operator
with Hamiltonians that are the principal logarithms of the displacement operators as in Eqs.(\ref{10A}),(\ref{10B}).
In this sense ${\cal L}_3^{(1)}$ and ${\cal L}_3^{(2)}$ are time evolution operators (and their principal logarithms are Hermitian operators).

\begin{proposition}\label{pro45}
\mbox{}
\begin{itemize}
\item[(1)]
\begin{eqnarray}\label{77A}
[HWP(d),HWP(d)]=HW(d);\;\;\;[HW(d),HW(d)]={\cal G}_d({\bf 1});\;\;\;[{\cal G}_d({\bf 1}), {\cal G}_d({\bf 1})]\cong \{{\bf 1}\}.
\end{eqnarray}
\item[(2)]
$HWP(d)$ is a solvable group (of solvability class $3$) with derived series
\begin{eqnarray}\label{78A}
HWP(d)\triangleright HW(d) \triangleright {\cal G}_d({\bf 1})\triangleright \{{\bf 1}\}.
\end{eqnarray}

\item[(3)]
Here the Abelian groups in Eq.(\ref{100}) are
\begin{eqnarray}\label{15AA}
HWP(d)/HW(d)\cong {\mathbb Z}(2);\;\;\;HW(d)/{\cal G}_d({\bf 1})\cong {\mathbb Z}( d)\times {\mathbb Z}( d);\;\;\;{\cal G}_d({\bf 1})/\{{\bf 1}\}\cong {\cal G}_d({\bf 1}).
\end{eqnarray}
Physically ${\mathbb Z}(2)$ is related to parity and ${\mathbb Z}(d)\times {\mathbb Z}( d)=PS(X,Z)$ is the position-momentum phase space of the quantum system.
\item[(4)]
$HWP(d)$ is not a nilpotent group.
\end{itemize}

\end{proposition}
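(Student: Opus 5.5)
The core is part (1); parts (2)–(4) then follow quickly. For the first equality in Eq.(\ref{77A}) I will use the commutator formula of the preceding lemma, Eq.(\ref{123V}). It shows that every commutator ${\cal L}_3^{(1)}$ equals some $D(A,B,\Gamma)$ and so lies in $HW(d)$. Therefore $[HWP(d),HWP(d)]\subseteq HW(d)$. (Alternatively, $HWP(d)/HW(d)\cong{\mathbb Z}(2)$ is Abelian by the semidirect product of proposition \ref{pro123}.)

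For the reverse inclusion I will exhibit generators of $HW(d)$ among the commutators.
\begin{itemize}
\item Take $\nu_1=1$, $\nu_2=0$, $(\alpha_1,\beta_1)=(0,0)$. Then $A=-2\alpha_2$, $B=-2\beta_2$ and $\Gamma=0$, so $[{\mathfrak P},D(\alpha_2,\beta_2,0)]=D(-2\alpha_2,-2\beta_2,0)$.
\item Since $2$ is invertible for odd $d$, every $D(\alpha,\beta,0)$ arises in this way.
\item The central phases $\omega(\gamma){\bf 1}$ come from commutators with $\nu_1=\nu_2=0$, which by the lemma give $\omega(\alpha_1\beta_2-\alpha_2\beta_1){\bf 1}$; choosing $\alpha_1=\gamma$, $\beta_2=1$ and the others zero gives $\omega(\gamma){\bf 1}$.
\end{itemize}
Since $D(\alpha,\beta,\gamma)=D(\alpha,\beta,0)\,\omega(\gamma)$, the group generated by commutators is all of $HW(d)$. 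The remaining two equalities in Eq.(\ref{77A}) are exactly proposition \ref{pro3}(1).

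Part (2) is immediate from (1). The derived series is Eq.(\ref{78A}) and terminates after three steps. The class is exactly $3$ because $HW(d)$ is non-Abelian: ${\cal G}_d({\bf 1})\neq\{{\bf 1}\}$ for $d\ge3$, so $G_2\neq\{{\bf 1}\}$.

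For part (3), I will read off the quotients:
\begin{itemize}
\item $HWP(d)/HW(d)\cong{\mathbb Z}(2)$ follows from the semidirect product decomposition in Eq.(\ref{semi}), with cosets labelled by $\nu$.
\item The other two quotients are Eq.(\ref{150}) of proposition \ref{pro3}(3).
\end{itemize}

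Part (4) is the step requiring an actual argument, though it is short. I will compute the lower central series $\gamma_1=HWP(d)$, $\gamma_{k+1}=[HWP(d),\gamma_k]$ and show that it stabilises at a nontrivial group. We have $\gamma_2=HW(d)$ by (1). For $\gamma_3=[HWP(d),HW(d)]$, the commutators with ${\mathfrak P}$ computed above already produce every $D(\alpha,\beta,0)$, since those commutators pair ${\mathfrak P}\in HWP(d)$ with $D(\alpha_2,\beta_2,0)\in HW(d)$. Commutators of elements of $HW(d)$ with each other produce all phases $\omega(\gamma){\bf 1}$. Hence $\gamma_3=HW(d)$ again, and by induction $\gamma_k=HW(d)\neq\{{\bf 1}\}$ for all $k\ge2$. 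The series never reaches $\{{\bf 1}\}$, so $HWP(d)$ is not nilpotent.

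As an independent check, I can use that a finite nilpotent group is a direct product of its Sylow subgroups. Then the element ${\mathfrak P}$ of order $2$ would commute with the $d$-power-order element $Z$ (here $d$ odd, and one uses the $p$-parts for each prime $p$ dividing $d$). But ${\mathfrak P}Z{\mathfrak P}=Z^{-1}\neq Z$, a contradiction.

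The main point needing care is that the formula $A=-2\alpha_2$ covers all of ${\mathbb Z}(d)$, which relies on $d$ being odd.
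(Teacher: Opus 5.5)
Your proposal is correct and follows the paper's route: part (1) comes from the commutator formula of the preceding lemma, parts (2)--(3) are inherited from proposition \ref{pro3} and the semidirect product Eq.(\ref{semi}), and part (4) from the lower central series stabilising at $HW(d)$. Your explicit commutators $[{\mathfrak P},D(\alpha,\beta,0)]=D(-2\alpha,-2\beta,0)$ and $\omega(\gamma){\bf 1}$ supply the inclusion $HW(d)\subseteq[HWP(d),HW(d)]$, which the paper's proof of part (4) needs but does not establish: it only checks that these commutators lie in $HW(d)$, which gives the trivial inclusion.
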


\begin{proof}
\mbox{}
\begin{itemize}
\item[(1)]
The commutators ${\cal L}_3^{(1)}(\alpha_1, \beta_1, \nu_1|\alpha_2, \beta_2, \nu_2)$ generate $HW(d)$, which is the commutator subgroup (or derived subgroup) of $HWP(d)$. The rest of the commutator groups are the same as in Eq.(\ref{13}).

\item[(2)]
From Eq.(\ref{77A}), follows immediately that $HWP(d)$ is a solvable group and  its derived series is the one in Eq.(\ref{78A}).
\item[(3)]
The fact that $HWP(d)$ is the semidirect product $HW(d)\rtimes {\mathbb Z}(2)$ implies that $HWP(d)/HW(d)\cong {\mathbb Z}(2)$.
The rest of the relations in Eq.(\ref{15AA}) are the same as in Eq.(\ref{150}).
\item[(4)]
We first prove that the commutator of $HWP(d)$ with $HW(d)$ is
\begin{eqnarray}\label{73}
&&[HWP(d),HW(d)]=HW(d).
\end{eqnarray}
We consider the commutator in Eq.(\ref{123}) with $\nu_2=0$ (so that ${\mathfrak D}(\alpha_2, \beta _2, \gamma_2,0) \in HW(d)$) and we easily see that the result is an element of $HW(d)$.
This proves Eq.(\ref{73}). 

Therefore the lower central series
\begin{eqnarray}
HWP(d);\;\;\;[HWP(d),HW(d)]=HW(d);\;\;\;[HWP(d), [HWP(d),HW(d)]]=HW(d)
\end{eqnarray}
terminates in $HW(d)$, and not in $\{\bf 1\}$. Consequently $HWP(d)$ is not a nilpotent group.
\end{itemize}
\end{proof}

 The principal value of the logarithm of ${\cal L}_3^{(1)}(\alpha_1, \beta_1|\alpha_2, \beta_2)$  is:
 \begin{eqnarray}\label{19AA}
\frac{d}{2\pi i}\log [{\cal L}_3^{(1)}(\alpha_1, \beta_1, \nu_1|\alpha_2, \beta_2, \nu_2)]=\Psi(\alpha_1, \beta_1, \nu_1|\alpha_2, \beta_2, \nu_2)=\sum _j\psi _j\ket{\ell_j}\bra{\ell_j}
\end{eqnarray} 
Here $\psi_j$ and $\ket{\ell_j}$ are the (real) eigenvalues and eigenvectors of $\Psi$. Then
 \begin{eqnarray}
{\cal L}_3^{(1)}(\alpha_1, \beta_1, \nu_1|\alpha_2, \beta_2, \nu_2)=\sum _j\omega(\psi_j)\ket{\ell_j}\bra{\ell_j}.
\end{eqnarray} 
Acting with ${\cal L}_3^{(1)}(\alpha_1, \beta_1, \nu_1|\alpha_2, \beta_2, \nu_2)\in HW(d)$ on a state $\ket{f}$, we get
\begin{eqnarray}
{\cal L}_3^{(1)}(\alpha_1, \beta_1, \nu_1|\alpha_2, \beta_2, \nu_2)\ket{f}=\sum _j\omega(\psi_j) f_j\ket{\ell_j};\;\;\;f_j=\bra{\ell_j}f\rangle.
\end{eqnarray} 
We note that
\begin{eqnarray}\label{loop1}
|\bra{f}{\cal L}_3^{(1)}(\alpha_1, \beta_1, \nu_1|\alpha_2, \beta_2, \nu_2)\ket{f}|=\left |\sum _j\omega(\psi_j)|f_j|^2\right |\le 1.
\end{eqnarray}
When the commutator ${\cal L}_3^{(1)}(\alpha_1, \beta_1, \nu_1|\alpha_2, \beta_2, \nu_2)$ acts on quantum state, the new state has a different direction from the original one.
This should be compared with Eq.(\ref{loop}).

\begin{example}
In $HWP(3)$ we multiplied numerically the $4$ matrices in Eq.(\ref{123V}) and we found
\begin{eqnarray}
{\cal L}_3^{(1)}(2,1,1|1,2,0)=\begin{pmatrix}
   0 & -0.50 +0.86i   &0\nonumber\\
   0  &0   &-0.50-0.86i\nonumber\\
  1 & 0   &0
\end{pmatrix}.
\end{eqnarray}
In this case Eq.(\ref{123V}) gives
\begin{eqnarray}
{\cal L}_3^{(1)}(2,1,1|1,2,0)={\mathfrak D}(1,2,0,0)=D(1,2,0),
\end{eqnarray}
and we confirmed numerically that this is the matrix given above.

We also calculated numerically the corresponding Hermitian matrix in Eq.(\ref{19AA}):
\begin{eqnarray}
\Psi(2,1,1|1,2,0)=
\begin{pmatrix}
  0  &0.50 + 0.28i   &0.57i\nonumber\\
  0.50 - 0.28i  &0 &- 0.50+0.28i\nonumber\\
   -0.57i  &-0.50-0.28i  & 0
\end{pmatrix}.
\end{eqnarray} 
Acting with ${\cal L}_3^{(1)}(2,1,1|1,2,0)$ on the state $\ket {f}$ given below, we get
\begin{eqnarray}
|\bra{f}{\cal L}_3^{(1)}(2,1,1|1,2,0)\ket{f}|=0.184;\;\;\;
\ket{f}=\begin{pmatrix}
0.3\\
0.4i\\
0.87
\end{pmatrix}.
\end{eqnarray}
This should be compared with Eq.(\ref{loop}).

We also calculated ${\cal L}_3^{(2)}(2,1,1;1,2,0|1,0,1;2,2,0)$. In this example
\begin{eqnarray}
{\cal A}(\alpha_1,\nu_1|\alpha_2,\nu_2)=-1;\;\;\;
{\cal A}(\beta_3,\nu_3|\beta_4,\nu_4)=-2;\;\;\;
{\cal A}(\alpha_3,\nu_3|\alpha_4,\nu_4)=-2;\;\;\;
{\cal A}(\beta_1,\nu_1|\beta_2,\nu_2)=-2,
\end{eqnarray}
modulo $3$. Therefore
\begin{eqnarray}
{\cal L}_3^{(2)}(2,1,1;1,2,0|1,0,1;2,2,0)=\omega(-2){\bf 1}.
\end{eqnarray}
$\omega(\alpha)$ is given in Eq.(\ref{FF}) with $d=3$.
\end{example}
\begin{example}
We give a similar example in $HWP(5)$. Using Eq.(\ref{123V}) we found
\begin{eqnarray}
{\cal L}_3^{(1)}(2,3,1|1,4,0)=D(3,2,0)
\end{eqnarray}
We also calculated ${\cal L}_3^{(2)}(2,3,1;1,2,0|1,0,1;3,3,0)$ using Eq.(\ref{140}).
In this example
\begin{eqnarray}
{\cal A}(\alpha_1,\nu_1|\alpha_2,\nu_2)=-1;\;\;\;
{\cal A}(\beta_3,\nu_3|\beta_4,\nu_4)=-3;\;\;\;
{\cal A}(\alpha_3,\nu_3|\alpha_4,\nu_4)=-3;\;\;\;
{\cal A}(\beta_1,\nu_1|\beta_2,\nu_2)=-2,
\end{eqnarray}
modulo $5$. Therefore 
\begin{eqnarray}
{\cal L}_3^{(2)}(2,3,1;1,2,0|1,0,1;3,3,0)=\omega(3){\bf 1}.
\end{eqnarray}
$\omega(\alpha)$ is given in Eq.(\ref{FF}) with $d=5$.
\end{example}

\subsection{Coherent states related to the $HWP(d)$ group}\label{sec35}

We extend the set ${\cal C}[HW(d)]$ of coherent states as follows.
We act with the ${\mathfrak D}(\alpha, \beta, 0, \nu)$ on a fiducial state $\ket{s}$ and we get the $2d^2$ coherent states:
\begin{eqnarray}\label{TR79}
\ket{C; \alpha, \beta, \nu}={\mathfrak D}(\alpha, \beta, 0, \nu)\ket{s};\;\;\;\ket{s}=\sum _m s_m\ket{X;r};\;\;\;\sum _m|s_m|^2=1.
\end{eqnarray}
The $C$ in the notation is not a variable, but it indicates coherent states. 
We denote as ${\cal C}[HWP(d)]$ the set of these coherent states. 

We have explained earlier that the fiducial vector $\ket{s}$, should be a generic vector so that the coherent states are different from each other.
In addition to the requirement that $\ket{s}$ should not be a position or momentum state, here we also require that $\ket{s}$ should not be an eigenstate of ${\mathfrak P}$ 
(${\mathfrak P}\ket{s}\ne \pm \ket{s}$), so that
$\ket{C; \alpha, \beta, 1}\ne \pm \ket{C; \alpha, \beta, 0}$.
This is equivalent to $\varpi _0\ket{s}\ne 0$ and $\varpi _1\ket{s}\ne 0$.

It is easily seen that the set of coherent states ${\cal C}[HW(d)]$ introduced in section \ref{sec24}, is a subset of ${\cal C}[HWP(d)]$.
Furthermore, we show below that the new coherent states in ${\cal C}[HWP(d)]$ (i.e., the coherent states in the set difference ${\cal C}[HWP(d)]\setminus {\cal C}[HW(d)]$) are Fourier transforms of those in ${\cal C}[HW(d)]$.

\begin{proposition}\label{pro46}
The following are properties of the coherent states $\ket{C; \alpha, \beta, \nu}$:
\begin{itemize}
\item[(1)]
Closure under displacement transformations in $HWP(d)$: acting with displacement transformations  in $HWP(d)$ on the coherent states in ${\cal C}[HWP(d)]$, we get other coherent states in ${\cal C}[HWP(d)]$.
\item[(2)]
The resolution of the identity
\begin{eqnarray}\label{TR79}
\frac {1}{2d}\sum_{\nu=0}^1\sum _{\alpha, \beta } \ket{C; \alpha, \beta, \nu} \bra{C;\alpha, \beta, \nu}={\bf 1}.
\end{eqnarray}

\item[(3)]
The coherent states $\ket{C; \alpha, \beta, 0}$ are the same as the coherent states $\ket{C; \alpha, \beta}$ in Eq.(\ref{coh1}) which are related to $HW(d)$ (which is a subgroup of $HWP(d)$).
The coherent states $\ket{C; \alpha, \beta, 1}$ are related to $\ket{C; \alpha, \beta, 0}$ through the Fourier transform
\begin{eqnarray}\label{WW22}
\frac {1}{d}\sum _{\alpha ,\beta } \omega[2^{-1}(-1)^\nu(\beta \gamma -\alpha \delta)]\ket{C;\alpha, \beta,\nu} =\ket{C;\gamma, \delta, \nu+1};\;\;\;\nu\in{\mathbb Z}(2).
\end{eqnarray}
\item[(4)]
We can expand an arbitrary (normalised) state $\ket{f}$ in $H(d)$, in terms of the $2d^2$ coherent states:
\begin{eqnarray}\label{A13}
&&\ket{f}=\frac{1}{2d}\sum_{\nu=0}^1\sum _{\alpha, \beta } F(\alpha, \beta, \nu)\ket{C; \alpha, \beta, \nu};\;\;\;F(\alpha, \beta, \nu)=\bra{C;\alpha, \beta, \nu} f \rangle
\end{eqnarray}
The $F(\alpha, \beta, \nu)$ are Bargmann coefficients with respect to the coherent stets $\ket{C; \alpha, \beta, \nu}$ (but there is no analyticity in the present discrete context ).
The scalar product of two states is given in terms of them as 
\begin{eqnarray}\label{A13D}
&&\langle g\ket{f}=\frac{1}{2d}\sum_{\nu=0}^1\sum _{\alpha, \beta } [G(\alpha, \beta, \nu)]^*F(\alpha, \beta, \nu).
\end{eqnarray}
An arbitrary state $\ket{f}$ can be expanded in terms of the $d^2$ coherent states in  ${\cal C}[HW(d)]$ (Eq.(\ref{A12})), and it can also be 
expanded in terms of the $2d^2$ coherent states in  ${\cal C}[HWP(d)]$ (Eq.(\ref{A13})).
In an example below, we show that the latter is more immune in the presence of noise.
\item[(5)]
The Bargmann coefficients $F(\alpha, \beta, 0)$ are the same as the  $f(\alpha, \beta)$ in Eq.(\ref{A12}).
The $F(\alpha, \beta, 0)$ and $F(\alpha, \beta, 1)$ are related through the Fourier transform
\begin{eqnarray}\label{A24}
\frac {1}{d}\sum _{\alpha ,\beta } \omega[-2^{-1}(-1)^\nu(\beta \gamma -\alpha \delta)]F(\alpha, \beta,\nu) =F(\gamma, \delta, \nu+1);\;\;\;\nu\in{\mathbb Z}(2).
\end{eqnarray}

\item[(6 )]
The
\begin{eqnarray}\label{A12F}
Q(\alpha, \beta,\nu )=|F(\alpha, \beta, \nu)|^2;\;\;\;\frac{1}{2d}\sum _{\nu=0}^1\sum _{\alpha, \beta } Q(\alpha, \beta, \nu)=1,
\end{eqnarray}
can be viewed as an extended $Q$-function (extended Husimi function) in the present context (and depends on the fiducial vector $\ket{s}$). 
The $Q(\alpha, \beta, 0)$ are the same as the $Q(\alpha, \beta)$ in Eq.(\ref{A12C}).
The $\frac{1}{2d}Q(\alpha, \beta, \nu)$ can be interpreted as a pseudo-probability distribution.

\item[(7)]
Non-orthogonality relation:
\begin{eqnarray}
&&|\langle C;\alpha_1, \beta_1, \nu_1\ket{C;\alpha _2, \beta_2, \nu_2}|^2=\left |\sum _m \omega [(-1)^{\nu_2+1}\alpha_1m+(-1)^{\nu_2}\alpha_2m]s_k^*s_m\right|^2\nonumber\\
&&k=(-1)^{\nu_1+\nu_2}m+(-1)^{\nu_1+1}\beta_1+(-1)^{\nu_1}\beta_2.
\end{eqnarray}

\end{itemize}
\end{proposition}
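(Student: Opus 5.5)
The plan is to treat each item of proposition \ref{pro46} as a direct consequence of the group law of $HWP(d)$ (Eq.(\ref{46})) or of proposition \ref{pro1}, applied to the fiducial vector $\ket{s}$.

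For (1), act with ${\mathfrak D}(\alpha_1,\beta_1,\gamma_1,\nu_1)$ on $\ket{C;\alpha,\beta,\nu}={\mathfrak D}(\alpha,\beta,0,\nu)\ket{s}$ and use Eq.(\ref{46}). The product is ${\mathfrak D}(A,B,\Gamma,\nu_1+\nu)$, which equals $\omega(\Gamma)\,{\mathfrak D}(A,B,0,\nu_1+\nu)$ because $D(\alpha,\beta,\gamma)=\omega(\gamma)D(\alpha,\beta,0)$. The result is therefore the coherent state $\ket{C;A,B,\nu+\nu_1}$ times a phase, with $A=\alpha_1+(-1)^{\nu_1}\alpha$ and $B=\beta_1+(-1)^{\nu_1}\beta$.

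For (2), specialise Eq.(\ref{TR55}) in proposition \ref{pro1}(3) to $\Theta=\ket{s}\bra{s}$, which has trace $1$. For (3), the first statement is immediate because ${\mathfrak D}(\alpha,\beta,0,0)=D(\alpha,\beta,0)$. For the Fourier relation, apply both sides of Eq.(\ref{WW2}) to $\ket{s}$; since the Fourier coefficients are scalars, this gives Eq.(\ref{WW22}) at once.

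For (4), insert the resolution of the identity (\ref{TR79}) in front of $\ket{f}$ to obtain Eq.(\ref{A13}). Inserting it between $\bra{g}$ and $\ket{f}$ gives Eq.(\ref{A13D}).

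For (5), $F(\alpha,\beta,0)=f(\alpha,\beta)$ follows from (3). For the Fourier relation, take the Hermitian conjugate of Eq.(\ref{WW22}), which turns $\omega$ into $\omega^{*}$ and so flips the sign in the exponent, and then contract with $\ket{f}$. This produces Eq.(\ref{A24}).

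For (6), nonnegativity of $Q$ is clear. The normalisation follows from Eq.(\ref{A13D}) with $g=f$ and $\langle f\ket{f}=1$, and $Q(\alpha,\beta,0)=Q(\alpha,\beta)$ follows from (5).

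For (7), expand $\ket{s}=\sum_m s_m\ket{X;m}$ and use Eq.(\ref{190}), which gives ${\mathfrak D}(\alpha_2,\beta_2,0,\nu_2)\ket{X;m}$ as a phase times $\ket{X;(-1)^{\nu_2}m+\beta_2}$. Do the same for the bra with $(\alpha_1,\beta_1,\nu_1)$, using index $k$. The orthonormality of $\ket{X;\cdot}$ then forces $(-1)^{\nu_1}k+\beta_1=(-1)^{\nu_2}m+\beta_2$. Solving for $k$, using $(-1)^{-\nu_1}=(-1)^{\nu_1}$, gives the stated expression $k=(-1)^{\nu_1+\nu_2}m+(-1)^{\nu_1+1}\beta_1+(-1)^{\nu_1}\beta_2$.

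The main obstacle is collecting the phases in (7). After the substitution, the $m$-dependent part of the phase is $\omega[\alpha_2(-1)^{\nu_2}m-\alpha_1(-1)^{\nu_1}k]$. Substituting $k$, the product $(-1)^{\nu_1}(-1)^{\nu_1+\nu_2}$ collapses to $(-1)^{\nu_2}$, so the $m$-linear part becomes $\omega[(-1)^{\nu_2}\alpha_2m-(-1)^{\nu_2}\alpha_1m]$, which is exactly the form stated. All remaining factors, namely the $2^{-1}\alpha_i\beta_i$ terms and $\alpha_1\beta_j$-type constants, are independent of $m$. They factor out of the sum and disappear in the modulus squared.

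I expect this bookkeeping of sign exponents to be the only delicate step. I will check it against the case $\nu_1=\nu_2=0$, which must reproduce proposition \ref{pro32}(5) after relabelling.
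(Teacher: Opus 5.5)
Your plan follows the paper's own proof item by item: the group law Eq.~(\ref{46}) for closure, Eq.~(\ref{TR55}) with $\Theta=\ket{s}\bra{s}$, applying Eq.~(\ref{WW2}) to $\ket{s}$ and then conjugating, the resolution of the identity, and Eq.~(\ref{190}) for the overlaps. Items (1), (2), (4), (6) and (7) go through as you describe, as do the $\nu=0$ halves of (3) and (5). Your sign bookkeeping in (7) is correct.

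The step that fails is the $\nu=1$ case of (3), and with it the $\nu=1$ case of (5). Eq.~(\ref{WW2}) is false for $\nu=1$, so applying it to $\ket{s}$ "proves" a wrong identity. The paper's proof cites the same equation and has the same defect.
\begin{itemize}
\item Take $\nu=1$, $\delta=0$. By Eq.~(\ref{121}), $\frac{1}{d}\sum_\alpha{\mathfrak D}(\alpha,\beta,0,1)=\ket{X;2^{-1}\beta}\bra{X;2^{-1}\beta}$.
\item So the left side of Eq.~(\ref{WW2}) is $\sum_\beta\omega(-2^{-1}\beta\gamma)\ket{X;2^{-1}\beta}\bra{X;2^{-1}\beta}=Z^{-\gamma}$. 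The claimed right side is ${\mathfrak D}(\gamma,0,0,0)=Z^{\gamma}$.
\item In general, a position-basis computation with Eq.~(\ref{190}) gives ${\mathfrak D}(-\gamma,-\delta,0,0)$.
\end{itemize}
Hence at $\nu=1$ the left side of Eq.~(\ref{WW22}) is $\ket{C;-\gamma,-\delta,0}$, which differs from $\ket{C;\gamma,\delta,0}$ for a generic fiducial vector. Likewise Eq.~(\ref{A24}) at $\nu=1$ yields $F(-\gamma,-\delta,0)$, not $F(\gamma,\delta,0)$.

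The error comes from a renaming slip. The kernel $\omega[2^{-1}(\beta\gamma-\alpha\delta)]$ changes sign when $(\alpha,\beta)$ and $(\gamma,\delta)$ are interchanged. So when Eq.~(\ref{KKLL1}) is rewritten with the summation variables called $(\alpha,\beta)$, the minus sign disappears; this symplectic Fourier transform is its own inverse.

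The correct statements carry no $(-1)^\nu$. For both values of $\nu$, the kernel is $\omega[2^{-1}(\beta\gamma-\alpha\delta)]$ in Eq.~(\ref{WW22}) and $\omega[-2^{-1}(\beta\gamma-\alpha\delta)]$ in Eq.~(\ref{A24}). Equivalently, keep the $(-1)^\nu$ but put $(-\gamma,-\delta)$ on the right-hand side when $\nu=1$. With that correction, your derivation of (3) and your conjugation argument for (5) are valid as written.
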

\begin{proof}
\mbox{}
\begin{itemize}
\item[(1)]
Using Eq.(\ref{46}) we prove that:
\begin{eqnarray}
&&{\mathfrak D}(\alpha_1, \beta_1, \gamma_1, \nu_1)\ket{C; \alpha_2, \beta_2, \nu_2}= \omega(\Gamma)\ket{C;A,B, \nu_1+\nu_2}\nonumber\\
&&A=\alpha_1+(-1)^{\nu_1}\alpha_2;\;\;\;B=\beta_1+(-1)^{\nu_1}\beta_2;\;\;\;\nu_1,\nu_2=0,1\nonumber\\
&&\Gamma= \gamma_1+2^{-1}(-1)^{\nu_1}(\alpha_1\beta _2-\alpha_2\beta_1).
\end{eqnarray}
\item[(2)]
The resolution of the identity follows from Eq.(\ref{TR55}) with $\Theta=\ket{s}\bra{s}$.

\item[(3)]
Using Eq.(\ref{WW2}) we prove Eq.(\ref{WW22}).

\item[(4)]
Using the resolution of the identity we prove Eq.(\ref{A13}) and also Eq.(\ref{A13D}) .

\item[(5)]
The Fourier transform in Eq.(\ref{A24}) follows immediately from the Fourier transform in Eq.(\ref{WW22}).
\item[(6)]
The first part of Eq.(\ref{A12F}) is definition, and the second part follows from Eq.(\ref{A13D}).

\item[(7)]
This is proved using Eq.(\ref{46}) for the product of two displacement operators, and then Eq.(\ref{190}).

\end{itemize}

\end{proof}

Further work can study SIC-POVM in the present case with $2d^2$ coherent states (existing work is mainly for $d^2$ POVM).
The problem here is to find fiducial vectors such that all the $|\langle C;\alpha_1, \beta_1, \nu_1\ket{C;\alpha _2, \beta_2, \nu_2}|^2$ 
with $(\alpha_1, \beta_1, \nu_1)\ne (\alpha _2, \beta_2, \nu_2)$, are equal to each other. In this case using the resolution of the identity we show that
\begin{eqnarray}\label{SIC}
&&|\langle C;\alpha_1, \beta_1, \nu_1\ket{C;\alpha _2, \beta_2, \nu_2}|^2=\frac{2d-1}{2d^2-1};\;\;\;(\alpha_1, \beta_1, \nu_1)\ne (\alpha _2, \beta_2, \nu_2).
\end{eqnarray}

\section{The use of expansions in terms of coherent states in noisy problems}\label{sec379}
\begin{table} 
\caption{The Bargmann coefficients $F(\alpha, \beta,\nu)$ and the $Q$-function $Q(\alpha, \beta,\nu)$ (Eqs.(\ref{A13}), (\ref{A12F})) for the state $\ket{f}$ 
and the fiducial vector $\ket{s}$ in Eq.(\ref{state}). The $F(\alpha, \beta, 0)$ and $Q(\alpha, \beta, 0)$ are also the $f(\alpha, \beta)$ and $Q(\alpha, \beta)$ correspondingly (Eqs.(\ref{A12}), (\ref{A12C})).
The unified Wigner-Weyl function ${\mathfrak W}( \Theta; \alpha, \beta, \nu)$ is also given for $\Theta=\ket{f}\bra{f}$ (example \ref{ex7})}
\begin{tabular}{|c|c|c|c|c|c|}\hline
$\nu$&$\alpha$&$\beta$&$F(\alpha, \beta, \nu)$&$Q(\alpha, \beta,\nu)$&${\mathfrak W}( \Theta; \alpha, \beta, \nu)$\\\hline
0&-1&-1&0.06+0.04i&0&-0.27+0.71i\\\hline
0&-1&0&0.05+0.13i&0.02&-0.35+0.06i\\\hline
0&-1&1&-0.29+0.79i&0.72&0.08-0.38i\\\hline
0&0&-1&0.83&0.69&0.19+0.23i\\\hline
0&0&0&0.49+0.23i&0.29&0.98\\\hline
0&0&1&0.15+0.28i&0.10&0.19-0.23i\\\hline
0&1&-1&0.06-0.04i&0.01&0.08+0.38i\\\hline
0&1&0&-0.54-0.72i&0.82&-0.35-0.06i\\\hline
0&1&1&0.14+0.53i&0.31&-.027-0.71i\\\hline
1&-1&-1&0.27+0.11i&0.08&0.04\\\hline
1&-1&0&0.30-0.11i&0.10&0.86\\\hline
1&-1&1&0.33+0.67i&0.57&0.29\\\hline
1&0&-1&0.92&0.85&0.41\\\hline
1&0&0&0.32+0.41i&0.27&0.10\\\hline
1&0&1&0.23+0.09i&0.06&0.87\\\hline
1&1&-1&0.27-0.11i&0.09&0.77\\\hline
1&1&0&-0.62-0.66i&0.83&-0.68\\\hline
1&1&1&0.10+0.27i&0.09&0.29\\\hline
\end{tabular} \label{t1}
\end{table}

Redundancy is very important in noisy problems. For example, without redundancy in our language (quantified by Shannon and later by many others) we would not be able to communicate.
A small spelling mistake would change completely the meaning of a sentence. Due to redundancy, the meaning of a sentence does not change in the presence of small spelling mistakes.

In our case redundancy is the fact that the number of coherent states is greater that the dimension $d$ ($d^2$ coherent states in section \ref{sec24}, and $2d^2$ coherent states in section \ref{sec35}).
We demonstrate with examples, that the expansion in Eq.(\ref{A13}) that involves more coherent states is more immune to noise, than the expansion in Eq.(\ref{A12}) that involves fewer coherent states.

\subsection{Example in $H(3)$}

In $H(3)$ we consider the $9$ coherent states in ${\cal C}[HW(3)]$, and also the $18$ coherent states in ${\cal C}[HWP(3)]$. The fiducial vector in both cases is $\ket{s}$.
We also consider the state $\ket{f}$, where (in the basis in Eq.(\ref{30AB})):
\begin{eqnarray}\label{state}
\ket{s}=\begin{pmatrix}
0.5\\0.4i\\0.77
\end{pmatrix};\;\;\;
\ket{f}=\begin{pmatrix}
0.7i\\0.3\\0.64
\end{pmatrix}.
\end{eqnarray}

\subsubsection{Expansion in terms of coherent states in ${\cal C}[HW(3)]$}

We calculate numerically the Bargmann coefficients $f(\alpha, \beta)=\bra{C;\alpha, \beta} f \rangle$ (Eq.(\ref{A12})), and the $Q$-function (Eq(\ref{A12C})).
The $f(\alpha, \beta)$ and $Q(\alpha, \beta)$ are presented as $F(\alpha, \beta,0)$ and $Q(\alpha, \beta,0)$ correspondingly, in table \ref{t1} (top half of this table).
The results depend on the fiducial vector.
It is easily checked that the constraint in Eq.(\ref{A12C}) holds.

We then add noise $\epsilon (\alpha, \beta)$ to the Bargmann coefficients, which are random numbers uniformly distributed in some interval $[-E,E]$ (produced by MATLAB). 
We calculate numerically the `noisy state'
\begin{eqnarray}\label{E1}
&&\ket{f_1}=\frac{1}{d}\sum _{\alpha, \beta } f_1(\alpha, \beta)\ket{C; \alpha, \beta};\;\;\;f_1(\alpha, \beta)=f(\alpha, \beta)+\epsilon(\alpha, \beta).
\end{eqnarray}
We define the error as
\begin{eqnarray}\label{E2}
&&{\cal E}_1=|\langle f_1\ket{f}-1|=\left |\frac{1}{d}\sum _{\alpha, \beta } \epsilon (\alpha, \beta)f(\alpha, \beta)\right |.
\end{eqnarray}
Since there are random numbers in the above formula, the result is slightly different every time we run the programme. 
For $E=0.1$ (random numbers in the interval $[-0.1, 0.1]$) we run the programme $5$ times and we found ${\cal E}_1$ to be
$0.016$, $0.036$, $0.051$, $0.011$, $0.033$ (average value $0.030$). 

${\cal E}_1$ depends weakly on the fiducial vector.
In order to show this we repeated the calculation for the same state $\ket{f}$ and with the fiducial vectors
\begin{eqnarray}
\ket{s_1}=\begin{pmatrix}
0.6\\-0.5\\0.66
\end{pmatrix};\;\;\;
\ket{s_2}=\begin{pmatrix}
0.1\\-0.5\\0.86
\end{pmatrix}.
\end{eqnarray}
We run the programme $5$ times with the fiducial vector $\ket{s_1}$ we found ${\cal E}_1$ to be
$0.013$, $0.021$, $0.031$, $0.061$, $0.004$ (average value $0.026$). 
With the fiducial vector $\ket{s_2}$ we found ${\cal E}_1$ to be
$0.014$, $0.019$, $0.048$, $0.001$, $0.033$ (average value $0.023$). 
So the dependence of ${\cal E}_1$ on the fiducial vector, is weak.

\subsubsection{Expansion in terms of coherent states in ${\cal C}[HWP(3)]$}

We calculate numerically the $18$ Bargmann coefficients $F(\alpha, \beta, \nu)=\bra{C;\alpha, \beta, \nu} f \rangle$ (Eq.(\ref{A13})) for the same state $\ket{f}$ and with the same fiducial vector $\ket{s}$
(given in Eq.(\ref{state})).
Numerical results for $F(\alpha, \beta, \nu)$ and also for the $Q$-function $Q(\alpha, \beta, \nu)$ are given in table \ref{t1}.
We can check that the constraint in Eq.(\ref{A12F}) holds.

We then add  noise $\epsilon (\alpha, \beta, \nu)$ to the Bargmann  coefficients, which are random numbers uniformly distributed in the same interval $[-0.1,0.1]$. 
We calculate numerically the `noisy state'
\begin{eqnarray}\label{E3}
&&\ket{f_2}=\frac{1}{2d}\sum _{\alpha, \beta } f_2(\alpha, \beta, \nu)\ket{C; \alpha, \beta, \nu};\;\;\;f_2(\alpha, \beta, \nu)=f(\alpha, \beta, \nu)+\epsilon(\alpha, \beta, \nu),
\end{eqnarray}
and  the error 
\begin{eqnarray}\label{E4}
&&{\cal E}_2=|\langle f_2\ket{f}-1|=\left |\frac{1}{2d}\sum_{\nu=0}^1\sum _{\alpha, \beta } \epsilon (\alpha, \beta, \nu)f(\alpha, \beta, \nu)\right |.
\end{eqnarray}
As above we run the programme $5$ times. We found ${\cal E}_2$ to be
$0.004$, $0.028$, $0.036$, $0.011$, $0.016$ (average value $0.019$). This is smaller than ${\cal E}_1$ in the expansion with $d^2$ coherent states.

\subsection{Example in $H(5)$}
We repeat the calculation described above, with an example in $H(5)$.
Here the fiducial vector $\ket{s}$ and the state $\ket {f}$ are
\begin{eqnarray}\label{fid5}
\ket{s}=\begin{pmatrix}
0.5\\0.4\\0.3i\\0.6\\0.37
\end{pmatrix};\;\;\;
\ket{f}=\begin{pmatrix}
0.65\\0.3\\-0.3\\0.5\\0.38i
\end{pmatrix}.
\end{eqnarray}
We calculated numerically the $25$ Bargmann coefficients $f(\alpha, \beta)=\bra{C;\alpha, \beta} f \rangle$.
We then added noise $\epsilon (\alpha, \beta)$ to the Bargmann coefficients, which are random numbers uniformly distributed in $[-0.1,0.1]$. 
We calculated numerically the `noisy state' in Eq.(\ref{E1}) and the error in Eq.(\ref{E2}).
We run the programme $5$ times and we found ${\cal E}_1$ to be
$0.0077$, $0.0135$, $0.0137$, $0.0288$, $0.0235$ (average value $0.0174$). 

We also calculated numerically the $50$ Bargmann coefficients $F(\alpha, \beta, \nu)=\bra{C;\alpha, \beta, \nu} f \rangle$ for the same state $\ket{f}$.
We then add  noise $\epsilon (\alpha, \beta, \nu)$ to the Bargmann  coefficients, which are random numbers uniformly distributed in the same interval $[-0.1,0.1]$. 
We calculate numerically the `noisy state' in Eq.(\ref{E3}) and the error in Eq.(\ref{E4}).
As above we run the programme $5$ times. We found ${\cal E}_2$ to be
$0.0059$, $0.0040$, $0.0095$, $0.0252$, $0.0124$ (average value $0.0114$). This is smaller than ${\cal E}_1$ in the expansion with $d^2$ coherent states. 

Overall in problems with noise, the expansion in Eq.(\ref{A13}) with $2d^2$ coherent states  seems to be more robust than the expansion in Eq.(\ref{A12}) with $d^2$ coherent states.

\section{The unified Wigner-Weyl function}\label{sec18}

Let $\Theta$ be an operator.
We define its combined Wigner-Weyl function in terms of the $2d^2$ operators ${\mathfrak D}(\alpha, \beta, 0, \nu)$ (which are elements of $HWP(d)/{\mathbb Z}(d)$ group):
\begin{eqnarray}
{\mathfrak W}(\Theta; \alpha, \beta, \nu)={\rm Tr}[\Theta {\mathfrak D}(\alpha, \beta, 0, \nu)].
\end{eqnarray}
Special cases of this are the  Wigner and Weyl functions introduced earlier
\begin{eqnarray}
&&{\mathfrak W}(\Theta; \alpha, \beta, 1)=W(\Theta; 2^{-1}\alpha, 2^{-1}\beta);\;\;\;{\mathfrak W}(\Theta; \alpha, \beta, 0)={\widetilde W}(\Theta; \alpha, \beta)\nonumber\\
&&{\mathfrak W}(\Theta; \alpha, 0, \nu)={\mathfrak W}_Z(\Theta; \alpha, \nu);\;\;\;{\mathfrak W}(\Theta; 0, \beta, \nu)={\mathfrak W}_X(\Theta;  \beta, \nu)
\end{eqnarray}

Using proposition \ref{pro1} we get the following:
\begin{proposition}\label{pro7}
\mbox{}
\begin{itemize}
\item[(1)]
The Wigner-Weyl function ${\mathfrak W}(\Theta; \alpha, \beta,\nu) $ is related to ${\mathfrak W}(\Theta; \gamma, \delta, \nu+1)$ through a Fourier transform:
\begin{eqnarray}\label{WW3}
\frac {1}{d}\sum _{\alpha ,\beta } {\mathfrak W}(\Theta; \alpha, \beta,\nu) \omega[2^{-1}(-1)^\nu(\beta \gamma -\alpha \delta)]={\mathfrak W}(\Theta; \gamma, \delta, \nu+1).
\end{eqnarray}
This combines both Eqs.(\ref{QW10}),(\ref{QW11}).
\item[(2)]
An operator $\Theta$ can be expanded in terms of the $2d^2$ operators ${\mathfrak D}(\alpha, \beta, 0, \nu)$ with the Wigner-Weyl functions as coefficients:
\begin{eqnarray}\label{300}
\Theta=\frac{1}{2d}\sum_{\nu=0}^1\sum _{\alpha ,\beta} {\mathfrak W}[\Theta; (-1)^{\nu+1} \alpha ,(-1)^{\nu+1} \beta ,\nu) {\mathfrak D}(\alpha, \beta, 0, \nu)
\end{eqnarray}
There is redundancy in this expansion, in the sense that the operator $\Theta$  is a $d\times d$ matrix with $d^2$ complex elements, and is represented with the $2d^2$ complex numbers
${\mathfrak W}(\Theta;\alpha ,\beta ,\nu) $ (in contrast, there is no redundancy in either of the two expansions in Eq.(\ref{exp})). 
Due to this redundancy the expansion of $\Theta$ in terms of ${\mathfrak D}(\alpha, \beta, 0, \nu)$ is {\bf not} unique.

\item[(3)]
In the expansion in Eq.(\ref{300}) the  ${\mathfrak D}(\alpha, \beta, 0, \nu)$ are elements of a group (the $HWP(d)/{\mathbb Z}(d)$ group), 
and therefore if we multiply two operators $\Theta_1\Theta_2$ we get an expansion of the same type, with
\begin{eqnarray}\label{400}
{\mathfrak W}[\Theta _1\Theta _2; (-1)^\nu \alpha ,(-1)^\nu \beta ,\nu)&=&\frac{1}{2d}\sum_{\nu_2}\sum _{\alpha_2, \beta_2}{\mathfrak W}[\Theta_1; (-1)^{\nu-\nu _2} (\alpha-\alpha_2) ,(-1)^{\nu-\nu _2} (\beta-\beta_2),\nu-\nu_2)\nonumber\\
&\times& {\mathfrak W}[\Theta_2; (-1)^{\nu _2} \alpha_2 ,(-1)^{\nu_2} \beta_2 ,\nu_2) \omega[2^{-1}(\alpha\beta_2-\alpha_2\beta)].
\end{eqnarray}

\item[(4)]
The Wigner-Weyl function obeys the marginal relations:
\begin{eqnarray}\label{121a}
&&\frac{1}{d}\sum_{\beta}{\mathfrak W}(\Theta; \alpha, \beta,\nu)=\bra{P;2^{-1}\alpha}\Theta {\mathfrak P}^{\nu+1}\ket{P;2^{-1}\alpha};\;\;\;\nu\in{\mathbb Z}(2)\nonumber\\
&&\frac{1}{d}\sum_{\alpha}{\mathfrak W}(\Theta; \alpha, \beta,\nu)=\bra{X;2^{-1}\beta}\Theta {\mathfrak P}^{\nu+1}\ket{X;2^{-1}\beta}\nonumber\\
&&\frac{1}{d}\sum_{\alpha, \beta}{\mathfrak W}(\Theta; \alpha, \beta,\nu)={\rm Tr}(\Theta {\mathfrak P}^{\nu+1}).
\end{eqnarray}
This combines Eqs(\ref{AX22}),(\ref{AX11}).
In the case that $\Theta {\mathfrak P}^{\nu+1}$ is a density matrix (e.g., when $\nu=1$ and $\Theta$ is a density matrix) the right hand sides are probabilities.

Additional marginal relations are:
\begin{eqnarray}\label{121ab}
&&\frac{1}{2d}\sum_{\alpha, \beta, \nu}{\mathfrak W}(\Theta; \alpha, \beta,\nu)={\rm Tr}(\Theta {\varpi}_0)\nonumber\\
&&\frac{1}{2d}\sum_{\alpha, \beta, \nu}(-1)^{\nu+1}{\mathfrak W}(\Theta; \alpha, \beta,\nu)={\rm Tr}(\Theta {\varpi}_1).
\end{eqnarray}

\end{itemize}
\end{proposition}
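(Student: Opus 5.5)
The plan is to obtain every part of Proposition \ref{pro7} by multiplying the corresponding operator identity of Proposition \ref{pro1} by $\Theta$ and taking the trace. The map $\Theta\mapsto{\rm Tr}(\Theta\,\cdot)$ is linear, and ${\mathfrak W}(\Theta;\alpha,\beta,\nu)={\rm Tr}[\Theta{\mathfrak D}(\alpha,\beta,0,\nu)]$ by definition.

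For part (1), multiply Eq.(\ref{WW2}) on the left by $\Theta$ and take the trace. This gives Eq.(\ref{WW3}) at once. For $\nu=0,1$ it reduces to Eqs.(\ref{QW10}),(\ref{QW11}) after the rescaling $\gamma\to2\gamma$, $\delta\to2\delta$, which is allowed because $2^{-1}$ exists.

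For part (4), take the trace of $\Theta$ times each line of Eq.(\ref{121}). For example,
\[
{\rm Tr}\bigl(\Theta\ket{P;2^{-1}\alpha}\bra{P;2^{-1}\alpha}{\mathfrak P}^{\nu+1}\bigr)=\bra{P;2^{-1}\alpha}{\mathfrak P}^{\nu+1}\Theta\ket{P;2^{-1}\alpha}.
\]
I have to be careful with the operator order here. The stated form $\bra{P;\cdot}\Theta{\mathfrak P}^{\nu+1}\ket{P;\cdot}$ agrees with this because ${\mathfrak P}\ket{P;j}=\ket{P;-j}$. So I will either check that the two expressions coincide under the summation conventions used, or quote the marginal in the order produced by the cyclic trace. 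The same applied to Eq.(\ref{70}) gives Eq.(\ref{121ab}), using ${\rm Tr}(\Theta\varpi_{0,1})$.

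For part (2), I will use Eq.(\ref{350}). Write
\[
\Theta=\sum_{j,k}\Theta_{jk}\ket{X;j}\bra{X;k}.
\]
Then form
\[
S=\frac{1}{2d}\sum_{\nu}\sum_{\alpha,\beta}{\rm Tr}\bigl[\Theta\,{\mathfrak D}((-1)^{\nu+1}\alpha,(-1)^{\nu+1}\beta,0,\nu)\bigr]\,{\mathfrak D}(\alpha,\beta,0,\nu).
\]
Its $(k,\ell)$ matrix element is
\[
\sum_{i,j}\Theta_{ji}\,\frac{1}{2d}\sum_{\nu,\alpha,\beta}{\mathfrak D}_{ij}(\cdots){\mathfrak D}_{k\ell}(\cdots)=\sum_{i,j}\Theta_{ji}\,\delta_{i\ell}\,\delta_{jk}=\Theta_{k\ell}.
\]
This proves Eq.(\ref{300}). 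The non-uniqueness remark follows because the $2d^2$ operators span only a $d^2$-dimensional space. In particular, Eq.(\ref{DFF2}) with a general weight $\lambda$ gives a one-parameter family of valid coefficient sets.

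For part (3), substitute the expansion (\ref{300}) for both $\Theta_1$ and $\Theta_2$ and multiply, using the group law Eq.(\ref{46}) with $\gamma_1=\gamma_2=0$. A product ${\mathfrak D}(\alpha_1,\beta_1,0,\nu_1){\mathfrak D}(\alpha_2,\beta_2,0,\nu_2)$ becomes a phase times ${\mathfrak D}(A,B,0,\nu_1+\nu_2)$. Then change summation variables to the total labels $(A,B,\nu)$, with $\nu_1=\nu-\nu_2$ and $\alpha_1=(-1)^{\nu_1}(\alpha-\alpha_2)$ adjusted accordingly, and read off the coefficient.

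The main obstacle is this step. Because the expansion is redundant, "reading off" coefficients is not automatic. I would instead compute ${\rm Tr}[\Theta_1\Theta_2{\mathfrak D}(\cdot,\cdot,0,\nu)]$ directly by inserting the expansion of $\Theta_2$ only. Then ${\rm Tr}[\Theta_1{\mathfrak D}(\alpha_2,\beta_2,0,\nu_2){\mathfrak D}(\alpha,\beta,0,\nu)]$ is a single ${\mathfrak W}(\Theta_1;\ldots)$ times a phase, by Eq.(\ref{46}). This route yields a genuine identity. The remaining bookkeeping is the signs $(-1)^{\nu}$ and the phase $2^{-1}(-1)^{\nu_1}(\alpha_1\beta_2-\alpha_2\beta_1)$, which I would reconcile with the form of Eq.(\ref{400}), correcting sign conventions if needed.
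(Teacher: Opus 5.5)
Your treatment of (2) and of the additional marginals (\ref{121ab}) is correct and is the paper's argument. The trouble is that the steps you defer in (4), (1) and (3) cannot be completed, because the printed statement is incorrect at exactly those points.

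In (4), your claim that the two orderings agree because ${\mathfrak P}\ket{P;j}=\ket{P;-j}$ is false. For $\nu=0$ the cyclic trace gives $\bra{P;a}{\mathfrak P}\Theta\ket{P;a}=\bra{P;-a}\Theta\ket{P;a}$, whereas the printed expression is $\bra{P;a}\Theta\ket{P;-a}$. Take $\Theta=\ket{X;j}\bra{X;j}$: then $\frac{1}{d}\sum_\beta{\mathfrak W}(\Theta;\alpha,\beta,0)=\frac{1}{d}\omega(\alpha j)$. The first expression reproduces this; the printed one gives $\frac{1}{d}\omega(-\alpha j)$. So only your second option works, and what it proves is the marginal with ${\mathfrak P}^{\nu+1}\Theta$, as in Eq.~(\ref{AX11}), in both bases.

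In (1) you inherit Eq.~(\ref{WW2}), whose $\nu=1$ case is wrong. A direct computation with Eq.~(\ref{190}) gives $\frac{1}{d}\sum_{\alpha,\beta}{\mathfrak D}(\alpha,\beta,0,1)\,\omega[-2^{-1}(\beta\gamma-\alpha\delta)]={\mathfrak D}(-\gamma,-\delta,0,0)$. Your promised reduction to Eq.~(\ref{QW11}) therefore yields $\widetilde W(\Theta;-\gamma,-\delta)$, not $\widetilde W(\Theta;\gamma,\delta)$. Concretely, for $\Theta=\ket{X;j}\bra{X;j}$ the left side of Eq.~(\ref{WW3}) at $\nu=1$ is $\omega(-\gamma j)\delta(\delta,0)$ while the right side is $\omega(\gamma j)\delta(\delta,0)$. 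The consistent statement drops the factor $(-1)^\nu$: the transform with kernel $\omega[2^{-1}(\beta\gamma-\alpha\delta)]$ is an involution, and the $\nu=0$ case is correct.

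In (3) your diagnosis is right, and your route is better than the paper's. The paper multiplies both expansions and reads off coefficients, which the non-uniqueness forbids. It also writes the product as $\omega[2^{-1}(\alpha_1\beta_2-\alpha_2\beta_1)]{\mathfrak D}(\alpha_1+\alpha_2,\beta_1+\beta_2,0,\nu_1+\nu_2)$, dropping the factors $(-1)^{\nu_1}$ of Eq.~(\ref{46}). But your final reconciliation with Eq.~(\ref{400}) cannot succeed, because Eq.~(\ref{400}) is false. Carrying out your computation gives
\[
{\mathfrak W}(\Theta_1\Theta_2;x,y,\nu)=\frac{1}{2d}\sum_{\nu_2}\sum_{a,b}{\mathfrak W}[\Theta_1;(-1)^{\nu_2}(x-a),(-1)^{\nu_2}(y-b),\nu+\nu_2]\,{\mathfrak W}(\Theta_2;a,b,\nu_2)\,\omega[2^{-1}(xb-ay)].
\]
Substitute $x=(-1)^\nu\alpha$, $y=(-1)^\nu\beta$, $a=(-1)^\nu\alpha_2$, $b=(-1)^\nu\beta_2$. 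The result is Eq.~(\ref{400}) with $(-1)^{\nu_2}$ replaced by $(-1)^{\nu}$ in the arguments of ${\mathfrak W}(\Theta_2;\cdot)$, so the two agree termwise only for $\nu_2=\nu$.

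Here is a direct counterexample to Eq.~(\ref{400}). Take $\Theta_1={\bf 1}$, $\Theta_2=\ket{X;j}\bra{X;j}$ with $j\ne 0$, $\nu=0$ and $(\alpha,\beta)=(1,0)$. The left side is $\omega(j)$, while the right side is $\frac{1}{2}[\omega(j)+\omega(-j)]$. You should therefore state and prove the corrected formula rather than try to match Eq.~(\ref{400}).
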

\begin{proof}
\mbox{}
\begin{itemize}
\item[(1)]
This follows from Eq.(\ref{WW2}).
\item[(2)]
We prove Eq.(\ref{300}) if we multiply both sides of Eq.(\ref{350}) with $\Theta_{k\ell}$, and then sum over $k,\ell$.
We get a more general relation using Eq.(\ref{DFF2}):
\begin{eqnarray}
\Theta=\frac{1}{d}\sum_{\nu=0}^1\left[\lambda_\nu \sum _{\alpha ,\beta} {\mathfrak W}[\Theta; (-1)^{\nu+1} \alpha ,(-1)^{\nu+1} \beta ,\nu) {\mathfrak D}(\alpha, \beta, 0, \nu)\right ]
;\;\;\;\lambda_0=\lambda;\;\;\;\lambda_1=1-\lambda
\end{eqnarray}
Eq.(\ref{300}) is the special case with $\lambda=\frac{1}{2}$.
It is seen that  due to the redundancy, the expansion of $\Theta$ in terms of ${\mathfrak D}(\alpha, \beta, 0, \nu)$, is not unique.

\item[(3)]
Multiplication of two operators $\Theta_1, \Theta_2$ gives
\begin{eqnarray}
\Theta_1\Theta_2&=&\frac{1}{(2d)^2}\sum_{\nu_1,\nu_2}\sum _{\alpha _1,\beta_1, \alpha _2,\beta_2 } 
{\mathfrak W}[\Theta_1; (-1)^{\nu _1} \alpha_1 ,(-1)^{\nu_1} \beta_1 ,\nu_1) {\mathfrak W}[\Theta_2; (-1)^{\nu _2} \alpha_2 ,(-1)^{\nu_2} \beta_2 ,\nu_2) \nonumber\\
&\times& \omega[2^{-1}(\alpha_1\beta_2-\alpha_2\beta_1)]{\mathfrak D}(\alpha_1+\alpha_2, \beta_1+\beta_2, 0, \nu_1+\nu_2).
\end{eqnarray}
We change the variables $\alpha_1, \alpha_2$ into $\alpha=\alpha_1+\alpha_2, \alpha_2$ (and similarly for $\beta$), and also $\nu_1, \nu_2$ into $\nu=\nu_1+\nu_2, \nu_2$, and we get Eq.(\ref{400}).

\item[(4)]
This follows  from Eqs.(\ref{121}), (\ref{70}).

\end{itemize}
\end{proof}

\begin{example}
In $H(d)$ let  $\rho=\ket{X;j}\bra{X;j}$. In this case we get
\begin{eqnarray}
{\mathfrak W}(\rho; \alpha, \beta, \nu)=\omega[2^{-1}\alpha\beta +\alpha j(-1)^\nu]\delta(\beta,2\nu j).
\end{eqnarray}
This is a generalisation of example \ref{ex1}.
Then the expansion in Eq.(\ref{300}) is
\begin{eqnarray}
\ket{X;j}\bra{X;j}=\frac{1}{2d}\sum_\alpha \left[\omega(\alpha j){\mathfrak D}(\alpha, 0,0,0)+{\mathfrak D}(\alpha, 2j,0,1)\right ].
\end{eqnarray}

\end{example}
\begin{example}\label{ex7}
In $H(3)$ let  $\Theta=\ket{f}\bra{f}$ where $\ket{f}$ is given in Eq.(\ref{state}). Results for the unified Wigner-Weyl function are given in table \ref{t1}.
We can check that the last constraint in Eq.(\ref{121a}) holds.
\end{example}

\section{Discussion and further work}

Displacement operators in the Heisenberg-Weyl group, are used as quantum gates in quantum circuits (e.g., \cite{NC}).
More generally phase space methods, coherent states and the Wigner function play an important role  in quantum information.

In the present paper, the Heisenberg-Weyl group $HW(d)$ has been extended into the Heisenberg-Weyl-parity group $HWP(d)$ that incorporates parity transformations.
The $HWP(d)$ group contains all the elements of $HW(d)$ group (which is a subgroup of $HWP(d)$), plus new elements which are Fourier transforms of the former ones (Eq.(\ref{WW2})).

The first extension was the dihedral group that combines parity with displacements in the  momentum direction only (or in the position direction only).  
$HWP(d)$ is a generalised version of the dihedral group.
 
 The solvability of the $HW(d)$ group, of the dihedral group $\Delta_d(Z)$, and of the $HWP(d)$ group, has been discussed. In the present  physical context,
 it uses commutators that displace quantum states along loops in the discrete phase space. 
 In the case of $HWP(d)$ the solvability class is $3$, and it uses the commutators ${\cal L}_3^{(1)}$ and the `commutators of commutators' ${\cal L}_3^{(2)}$. 
 The latter goes around one loop, then around a second loop, then around 
the first loop in the opposite direction, and then around the second loop in the opposite direction.

 Two sets of coherent states related $HW(d)$ and $HWP(d)$ groups, have been studied. Related concepts like the Bargmann coefficients and the Husimi $Q$-function have also been discussed.
The  $2d^2$ coherent states related to the $HWP(d)$ group consist of $d^2$ coherent states related to its $HW(d)$ subgroup, plus $d^2$ extra coherent states which are 
related through a Fourier transform with the former ones (Eq.(\ref{WW22})). 

Both sets of coherent states have been used for expansions of an arbitrary state, with Bargmann coefficients (in section 6). 
Examples have shown that in noisy cases expansion in terms of the $2d^2$ coherent states, is advantageous (smaller error) in comparison to expansion in terms of the $d^2$ coherent states. 

The $HWP(d)$ group leads to a natural unification of the Wigner and Weyl functions.  The properties of the unified Wigner-Weyl function have been discussed in proposition \ref{pro7}.
A numerical example of the Wigner-Weyl function is given in table \ref{t1}.

We next summarise some directions for further work:
\begin{itemize}

\item

Central group extensions and group cohomology have been used for $HW(d)$ (e.g., \cite{VOUR}) and this approach could be applied to $HWP(d)$.
\item
In the present paper we considered Hilbert spaces with variables in ${\mathbb Z}(d)$ with odd dimension. Further work is need for extension of these ideas to the case of even dimension, and for 
unified approaches\cite{WQ1,WQ2,WQ3}.

\item
In the present paper we considered variables in ${\mathbb Z}(d)$ (with odd $d$). 
There has been a lot of work on `Galois quantum systems' with variables in a Galois field $GF(p^n)$ where $p$ is a prime number \cite{GG1,GG2,GG3,GG4}.
The present approach that incorporates the parity operator into the Heisenberg-Weyl group with a semidirect product, and also studies the enlarged group $HWP(d)$ 
as a solvable group, could be studied in this context also.
\item
In the present paper we studied the $HWP(d)$ group whose solvability class is $3$.
It will be interesting to study other groups of larger solvability class $n$, and the corresponding commutators ${\cal L}^{(1)}$,..., ${\cal L}^{(n-1)}$.
\item

Further work can study SIC-POVM, in the context of the $2d^2$ coherent states, introduced in this paper (Eq.(\ref{SIC})).
This can bring a new angle to the SIC-POVM problem.

\end{itemize}
The work is a contribution to phase space methods for systems with finite-dimensional Hilbert space, which play an important role within the area of quantum information.

\end{document}